\documentclass[%
reprint,
nofootinbib,
amsmath,amssymb,
aps,
amsthm,
prl,
]{revtex4-2}

\usepackage{ifthen}
\newboolean{arxiv}
\setboolean{arxiv}{true}
\ifthenelse{\boolean{arxiv}}{\pdfoutput=1}{}
\newboolean{english}
\setboolean{english}{true}

\newboolean{letter}
\setboolean{letter}{false}
\newcommand{\Letter}[2]{\ifthenelse{\boolean{letter}}{#1}{#2}}

\usepackage{adjustbox}
\usepackage{theorem} 
\usepackage{ascmac}
\usepackage{bbm}
\usepackage{bm} 
\usepackage{braket}
\usepackage{dcolumn} 
\usepackage{enumerate}
\usepackage{enumitem}
\usepackage{framed}
\usepackage{graphicx} 
\ifthenelse{\boolean{arxiv}}{\usepackage{hyperref}}{\usepackage[dvipdfmx]{hyperref}}
\usepackage{longtable}
\usepackage{mathtools}
\usepackage{multirow}
\ifthenelse{\boolean{arxiv}}{}{\usepackage{pxjahyper}}
\usepackage{stackengine}
\usepackage{txfonts} 
\usepackage{xparse}

\usepackage{color}
\usepackage[most]{tcolorbox}

\ifthenelse{\boolean{arxiv}}{}{\mathtoolsset{showonlyrefs}}

\newcommand{\hypercolor}{blue}
\hypersetup{
  colorlinks,
  citecolor=\hypercolor,
  linkcolor=\hypercolor,
  urlcolor=\hypercolor,
  bookmarksopen,
  bookmarksopenlevel=4,
  bookmarksnumbered
}

\usepackage{settings_common_paper}
\usepackage{settings}

\newcommand{\titlename}{Two Operational Principles Single Out Quantum Theory}

\begin{document}

\title{\titlename}

\affiliation{%
 Quantum Information Science Research Center, Quantum ICT Research Institute, Tamagawa University,
 Machida, Tokyo 194-8610, Japan
}%

\author{Kenji Nakahira}
\affiliation{%
 Quantum Information Science Research Center, Quantum ICT Research Institute, Tamagawa University,
 Machida, Tokyo 194-8610, Japan
}%

\date{\today}

\begin{abstract}
 Quantum theory combines density matrices, Born probabilities, tensor-product composites,
 positive-operator-valued measures (POVMs), and quantum channels.
 In a finite-dimensional causal operational theory, we prove that two postulates suffice: local input-output
 statistics identify channels, and every state admits an equivalent-system purification, unique up to reversible dynamics.
 The full complex quantum formalism follows; every consistent probability rule is realized as a POVM, so measurement
 no-restriction is derived rather than assumed.
\end{abstract}

\maketitle

Standard quantum theory describes an enormous range of phenomena with a compact set of rules.
A physical system is assigned a complex Hilbert space.
States are density matrices, measurements are positive-operator-valued measures (POVMs), and outcome probabilities
are given by the Born rule.
Composite systems are represented by tensor products of Hilbert spaces
\cite{Neu-1932,Mac-1963,Bus-Lah-1996,Nie-Chu-2000}.
Deterministic physical processes are called channels; in quantum theory, they are completely positive
trace-preserving (CPTP) maps~\cite{Cho-1975,Nie-Chu-2000}.
In applications these ingredients are normally used as a single package.
States need measurements to produce probabilities, composites need a rule for combining systems, and dynamics must
be compatible with the states and measurements of all composites.
The formalism tells us how to calculate, but it does not by itself explain why these ingredients must fit together
in this particular way.
Could a theory have all density matrices but only a restricted set of measurements?
Could it have the Born rule for single systems but a different rule for composites or dynamics?
Or are density matrices, POVMs, tensor products, and complete positivity consequences of simpler operational
principles?

A natural way to sharpen this question is to use the broader framework of operational-probabilistic theories
(OPTs), also called generalized probabilistic theories.
In this framework, one can specify the preparable states, available measurements, composition rule, and allowed
channels more freely than in quantum theory
\cite{Har-2001,Bar-2007,Dak-Bru-2009,Chi-Dar-Per-2010,
Har-2011,Mas-Mul-2011,Chi-Dar-Per-2011,
Wil-2012,Wil-2012-axioms,DeLaTorre-Mas-Sho-Mul-2012,
Har-2013,Jan-Lal-2013,Bar-Gae-Wil-2013,
Bar-Mul-Udu-2014,Har-2016,Dar-Chi-Per-2017,Gia-Chi-2026}.
This freedom is not merely formal.
For example, one can keep all density matrices as states while restricting the available measurements.
One can also consider real quantum theory~\cite{Stu-1960,Ale-Bor-Woo-2013}, quantum theories with superselection
rules~\cite{Bar-Wil-2014,Nak-2020,Sel-Sca-Coe-2021}, theories based on Jordan algebras
\cite{Jor-Neu-Wig-1934,Ara-1980,Wil-2018,Bar-Gra-Wil-2020}, theories that explore alternatives to the Born rule
\cite{Gal-Mas-2017,Gal-Mas-2018}, and frameworks based on quantum logic,
diagrammatic languages, category theory, or abstract approaches to measurements
\cite{Bir-Neu-1936,Abr-Coe-2004,Coe-2010,Coe-2014,Coe-Kis-2017,Wet-2018,Wet-2019}.
These examples show that one may reproduce some familiar quantum ingredients while changing others, or work in
frameworks that contain quantum theory as a special case.
A reconstruction should therefore not stop once it has recovered density matrices, the Born rule, or a
Hilbert-space-like state space at one stage of the argument.
Even if such a result follows from operationally natural assumptions, it does not by itself explain why the
allowed measurements, composites, and dynamics must also take their standard quantum form.
The stronger task is to derive states, measurements, composites, and dynamics together as parts of one coherent
formalism.
This is the goal of the present work.

In this \Letter{Letter}{paper}, Hilbert spaces are not assumed at the outset.
We work in a standard OPT framework formulated in terms of systems, states, measurement outcomes, channels,
composition, and probabilities.
More specifically, we use the finite-dimensional causal OPT framework developed in
\cite{Chi-Dar-Per-2010,Chi-Dar-Per-2011,Dar-Chi-Per-2017}; below we call it the present framework.
Technical details are given in \Letter{the Supplemental Material (SM)~\cite{SM}}{Appendix~\ref{sec:OPT}}.
The claim of this \Letter{Letter}{paper} is that two additional requirements remove this freedom and single out standard complex
quantum theory.
The first requirement is local equivalence: every channel is completely identified by the statistics of local
input-output experiments, in which states are prepared at the channel's input and measurements are performed at its
output.
The second is equivalent-system (ES) purification: every state of a system $A$ is the marginal of a pure state on
$A\tA$, where $\tA$ is an ancillary system equivalent to $A$, and any two such purifications are related by a
reversible channel on the ancillary system.
Equivalence of systems here is operational: it means reversible interconvertibility inside the theory, not equality
of a preassigned Hilbert-space dimension.
These requirements do not directly assume density matrices, the Born rule, tensor products of Hilbert spaces,
CPTP maps, or the no-restriction hypothesis.
They are formulated only in terms of states, measurement outcomes, channels, reversible dynamics, and pure-state
extensions inside the OPT.
Thus the assumptions are operational in the sense relevant for a reconstruction.
They constrain two basic questions: how experiments identify devices and how mixed states arise from larger pure
systems.
They do not presuppose the Hilbert-space representation to be derived.
They also connect two normally separate aspects of the formalism.
Local equivalence concerns how a physical device is identified from accessible statistics, whereas ES purification
concerns how the state space supports pure extensions.
The theorem shows that these two requirements jointly constrain the measurement and composition rules, rather
than leaving them as independent choices.

Our main result is that, if the present framework satisfies local equivalence and ES purification, then it is
standard finite-dimensional complex quantum theory.
Each system is assigned a finite-dimensional complex Hilbert space, its states are exactly all density matrices on
that space, its measurements are exactly POVMs, composites are given by tensor products of Hilbert spaces, and
channels are exactly CPTP maps.
Equivalently, within the present framework, there is no intermediate theory in which only some of the usual
quantum ingredients---states, measurements, composites, or channels---are quantum while both requirements still
hold.
The reconstruction is summarized in Fig.~\ref{fig:main}.
\begin{figure}[bt]
 \centering
 \includegraphics[scale=1.0,alt={}]{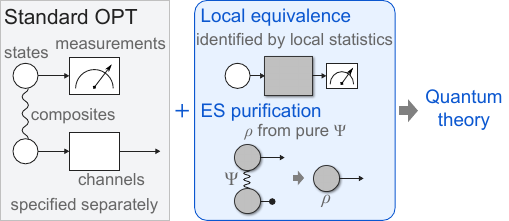}
 \caption{Schematic picture of the reconstruction.
 In a standard OPT, states, measurements, composites, and channels can in principle be specified separately.
 Once local equivalence and ES purification are imposed, these ingredients are fixed to the standard complex
 quantum structure: density matrices, POVMs, the Born rule, tensor-product composition, and CPTP maps.}
 \label{fig:main}
\end{figure}

A particularly important point is that measurement no-restriction and the full density-matrix state space are both
derived conclusions, not assumptions.
An effect is the probability assignment associated with one possible measurement outcome: it assigns to each state
the probability of that outcome.
More generally, a finite family of probability assignments is a probability rule if, for every state, the resulting
probabilities are nonnegative and sum to one.
Every physical measurement gives such a rule.
Measurement no-restriction says, roughly, that every mathematically consistent probability rule is physically
available as a measurement
\cite{Bar-2007,Jan-Lal-2013,Pla-2021,Wri-2021}.
In quantum theory, this is the familiar POVM structure: a measurement is a finite set of positive operators
$\{ E_i \}$ satisfying $\sum_i E_i = I$, and the probability of outcome $i$ is $\tr(\rho E_i)$.
In a general OPT, by contrast, a mathematically consistent probability rule need not correspond to a physically
available measurement.
Theories with restricted measurements exploit precisely this gap.
Our result shows that the gap disappears once local equivalence and ES purification hold together.
At the same time, on the state side, we do not merely show that the states of each system can be embedded into a
set of density matrices.
We show that the state space is the whole set of density matrices.
Thus both sides of the usual quantum duality, states and measurements, are fixed simultaneously.
This matters because no-restriction is often mathematically convenient, but it is not a harmless convention in a
general OPT.
A theory can have the same states as quantum theory and still have fewer allowed measurement outcomes.
Such a theory would give different experimental possibilities while retaining a familiar-looking state space.
The theorem shows that this possibility is incompatible with the two requirements used here.
This is why the result goes beyond identifying the set of quantum states.
It also fixes which mathematically consistent probability rules are physically available as measurements.
Conversely, having the full set of density matrices as states rules out theories that keep the usual quantum POVM
elements but restrict the preparable states.
Recovering both sides together is what yields the full probabilistic content of quantum theory.

It is also important that neither requirement separately assumes complex quantum theory.
As Table~\ref{tab:postulates-examples} indicates, standard classical theory satisfies local equivalence but not ES
purification, whereas standard real quantum theory satisfies ES purification but not local equivalence.
Classical theory has locally identifiable channels, but its mixed states cannot all be explained as marginals of
pure correlated states.
Real quantum theory has a purification structure, but it fails local equivalence: local input-output statistics are
not sufficient to identify all channels.
Thus neither local identifiability by itself nor ES purification by itself is sufficient.
The quantum conclusion comes from their compatibility.
The examples also show that neither postulate is a restatement of Hilbert-space quantum mechanics.
Each captures a feature that survives in some nonquantum or nonstandard theories, but the two features cannot be
combined there without forcing the standard complex structure.
\begin{table}[t]
 \caption{Representative examples showing the independence of the two requirements.
 Classical theory, real quantum theory, and complex quantum theory refer to their standard finite-dimensional
 versions.
 }
 \label{tab:postulates-examples}
 \begin{ruledtabular}
  \begin{tabular}{lcc}
   Theory & Local equivalence & ES purification \\
   \hline
   Classical theory & $\checkmark$ & $\times$ \\
   Real quantum theory & $\times$ & $\checkmark$ \\
   Complex quantum theory & $\checkmark$ & $\checkmark$
  \end{tabular}
 \end{ruledtabular}
\end{table}

This perspective clarifies the relation to previous reconstructions.
Reconstructions of quantum theory have used the number of distinguishable states, dimension, reversibility, local
structure, self-duality, Jordan-algebraic structure, and the structure of composites as key features characterizing
quantum theory
\cite{Har-2001,Mas-Mul-2011,DeLaTorre-Mas-Sho-Mul-2012,Bar-Wil-2014,Wil-2018,
Bar-Gra-Wil-2020,Bar-Udu-Wet-2023,Mul-2021,Gia-Chi-2026}.
In OPTs, whether no-restriction is assumed is an important issue, since dropping it allows theories with
familiar state spaces but restricted measurements~\cite{Bar-2007,Jan-Lal-2013}.
The reconstruction of Chiribella--D'Ariano--Perinotti derives finite-dimensional quantum theory from five
requirements imposed on a causal OPT: local discriminability, perfect discriminability, ideal compression, pure
conditioning, and purification~\cite{Chi-Dar-Per-2011,Dar-Chi-Per-2017}.
The present work has a different aim.
It adopts local equivalence, which is equivalent to local discriminability, and imposes ES purification rather than
the purification requirement used in that reconstruction.
Perfect discriminability, ideal compression, and pure conditioning are not assumed.
The question is therefore not a mechanical reduction in the number of postulates, but how much of the quantum
formalism follows from the compatibility between local input-output identifiability and purification by an
equivalent ancillary system.
The answer is that this compatibility fixes the full state, measurement, composition, and dynamical structures.
Thus the point is to isolate a pair of experimentally interpretable constraints whose joint force reaches beyond
state spaces and fixes the measurement and dynamical parts that are often assumed separately.

\emph{Two requirements.---}%
A standard OPT is a general framework for describing states, measurements, channels, and the probabilities obtained
by composing them in circuits.
Each system $A$ is assigned a set $\StN(A)$ of states and a set $\Eff(A)$ of effects.
Throughout the main text, ``state'' means normalized state.
There is also a unique deterministic effect $e_A$, which gives total probability.
Each state $\rho$ satisfies $e_A(\rho) = 1$, and each effect $a$ assigns to a state $\rho$ the probability $a(\rho)$.

The first requirement says that channels are identified by experimentally accessible input-output statistics.
Intuitively, local equivalence says that local input-output statistics completely determine a channel.
\begin{postulate}[Local equivalence]
 For any systems $A$ and $B$, two channels $\Lambda$ and $\cE$ from $A$ to $B$ are equal whenever they give the
 same probabilities for every input state and every effect at the output.
 Equivalently,
 \begin{alignat}{1}
  a(\Lambda(\rho)) = a(\cE(\rho)), ~\forall \rho \in \StN(A), ~a \in \Eff(B)
  &\quad\Rightarrow\quad \Lambda = \cE .
 \end{alignat}
\end{postulate}
If this requirement failed, then two devices with identical statistics in all direct input-output experiments could
nevertheless behave differently when connected to an external system.
Then identifying a channel would require probing correlations with an environment, rather than only the
input-output behavior of the device itself.
Local equivalence rules out such locally invisible differences and makes channel identity directly accessible in
the laboratory.
Local equivalence is equivalent to local discriminability: for any two distinct states of a composite system $AB$,
some product effect $a \ot b$, with $a \in \Eff(A)$ and $b \in \Eff(B)$, gives different probabilities;
see \Letter{the SM~\cite{SM}}{Appendix~\ref{sec:postulate}}.

The second requirement concerns purification by an ancillary system $\tA$ equivalent to the original system $A$.
Here $A$ and $\tA$ are called equivalent if there is a reversible channel between them, that is, a channel
with an inverse channel.
Intuitively, a state of $A$ can be represented as the marginal of a pure state on $A\tA$, so that its apparent
mixedness is explained by correlations with an equivalent ancillary system.
Any two such purifications are then related by a reversible channel on $\tA$.
\begin{postulate}[ES purification]
 For each system $A$, there exists a system $\tA$ equivalent to $A$ such that every state
 $\rho \in \StN(A)$ has a pure state $\Psi \in \StN(A\tA)$ satisfying
 $(\id \ot e_\tA)(\Psi) = \rho$, where $\id$ denotes the identity channel on $A$.
 Moreover, any two such purifications $\Psi, \Phi \in \StN(A\tA)$ of the same $\rho$ satisfy
 $\Phi = (\id \ot U)(\Psi)$ for some reversible channel $U$ on $\tA$.
\end{postulate}
The requirement says not only that every state can be purified, but also that an equivalent ancillary system is
enough as the purifying system.
In quantum theory, this is satisfied by taking $\tA$ to have the same Hilbert-space dimension as $A$.
The condition is therefore close to the familiar quantum statement that every density matrix on $\cH_A$ can be
purified in $\cH_A \ot \cH_A$.
However, it is stated without assuming Hilbert spaces, dimensions, or the tensor product of Hilbert spaces.

ES purification is motivated by the purification principle used in the
Chiribella--D'Ariano--Perinotti reconstruction~\cite{Chi-Dar-Per-2010,Chi-Dar-Per-2011,Dar-Chi-Per-2017},
but it puts a different constraint on the ancillary system.
In that purification principle, every state has a purification with some ancillary system, and any two
purifications using the same ancillary system are related by a reversible channel on that system.
ES purification instead requires the purifying system to be equivalent to the original system.
In this respect, ES purification is stronger.
In another respect, it is weaker: the reversibility condition is required only for this equivalent ancillary
system, not for every choice of a common purifying system.
This formulation is useful for the present reconstruction because it links mixedness directly to correlations with
a system that is reversibly interconvertible with the original one.
The point is not to forbid purifications with other ancillary systems, but to require that an equivalent ancillary
system is enough to purify every state.
Operationally, the apparent mixedness of $A$ can therefore be explained by correlations with such an ancillary
system, and the explanation is fixed up to reversible dynamics on that system.
This restriction to an equivalent system gives the postulate enough force to constrain the geometry of each
individual state space.

\emph{Main result.---}%
The main result is that, in the present framework, the two requirements together exclude all theories other than
quantum theory.
\begin{thm}[Quantum theory from two principles]
 Let a finite-dimensional causal OPT satisfy local equivalence and ES purification.
 Then the theory is standard finite-dimensional complex quantum theory: every system $A$ is associated with a
 finite-dimensional complex Hilbert space $\cH_A$.
 Its states are exactly the density matrices on $\cH_A$, its measurements are exactly POVMs,
 composite systems are tensor products of Hilbert spaces, and channels are exactly CPTP maps.
\end{thm}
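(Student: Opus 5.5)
The plan is to first recover the geometry of a single state space, then the measurement side, and only at the end the composites and channels. The tool throughout is the steering and duality structure generated by ES purification, with local equivalence supplying tomography.

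\emph{Step 1: reversible transitivity on pure states and a steering duality.}
Fix $A$ and its equivalent system $\tA$. Take any two pure states $\psi,\phi$ of $A$. Each is its own purification trivially once we append a pure state of $\tA$, so uniqueness up to reversible channels on $\tA$ is available. Composing with the reversible channel $A\to\tA$, I will derive that the reversible group acts transitively on pure states of $A$.

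Next, purify the invariant (maximally mixed) state $\chi_A$, defined as the group average, to $\Psi\in\StN(A\tA)$. The map $a\mapsto (a\ot\id)(\Psi)$ sends effects of $A$ to unnormalized states of $\tA$. I will show it is injective and that, via the reversible map $\tA\to A$, it identifies the effect cone with the state cone. Local equivalence (equivalently, local discriminability) is what makes this map faithful and lets $\Psi$ act as a Choi-type state. The intended upshot is that the state cone is self-dual. The uniqueness clause further makes the stabilizer of $\Psi$ large enough that, combined with transitivity, the cone is homogeneous.

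\emph{Step 2: Jordan-algebraic state spaces and elimination of non-complex cases.}
By Koecher--Vinberg, a finite-dimensional homogeneous self-dual cone is the positive cone of a Euclidean Jordan algebra. Every state of $A$ must also be purifiable in $A\tA$ with $\tA\cong A$. Using the spectral decomposition, I will argue that the algebra must be simple (otherwise superselection blocks obstruct purification with uniqueness) and that its rank must be compatible with the purification dimension count.

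This leaves real, complex, quaternionic, spin-factor and octonionic cases. Local discriminability gives $\dim(A\tA)=(\dim A)^2$, where $\dim$ is the linear dimension of the state space. Combined with the requirement that the composite of two copies again be a Jordan algebra of the same type, this forces $d(d+1)/2$, $d^2$, or $d(2d-1)$ to square correctly. Only the complex case $\dim=d^2$ is consistent, as in the known Barnum--Wilce/Hardy arguments. This step, excluding real, quaternionic and exceptional algebras and the rank-2 spin factors (spin factors other than qubits), I expect to be the main obstacle. I would first try the parameter-counting $\dim(A\tA)=\dim(A)^2$ together with requiring that the purification $\Psi$ be pure in the composite Jordan algebra.

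\emph{Step 3: fullness of states, measurements, composites and channels.}
Self-duality plus the steering map shows that every positive functional bounded by $e_A$ is the image of some state under steering, hence an effect. Concretely, conditional steering on $\Psi$ realizes any $\{E_i\}$ with $\sum_iE_i=I$ as a measurement, which gives no-restriction (all POVMs). By the same duality the state space is all density matrices.

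For composites, local discriminability identifies $AB$ with a subspace of the tensor product of the state spaces, and dimension counting gives equality. The state cone of $AB$ is again a full density-matrix cone of some Hilbert space, and it contains all product states and the purifications. Its dimension is $d_A^2d_B^2$, which pins $\cH_{AB}=\cH_A\ot\cH_B$ up to unitary.

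For channels, local equivalence identifies a channel with its action on states. Applying it to $A$ tensored with an ancilla and using the Choi state $\Psi$ shows that every channel is CPTP. Conversely, any CPTP map has a Stinespring dilation by a unitary, which is reversible and available by transitivity, followed by a partial trace, so it is realized. This completes the identification.
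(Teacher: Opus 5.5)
\emph{There are genuine gaps: the steps that carry the weight (homogeneity, self-duality, no-restriction, excluding the non-complex algebras, and realizing all unitaries) either do not work as described or are circular.}

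Your Step~1 does not establish a symmetric cone, and the no-restriction argument in Step~3 is circular.

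Homogeneity cannot come from reversible channels and stabilizers of $\Psi$. Every reversible $U$ satisfies $e_A\circ U=e_A$, and your $\chi_A$ is fixed by all of them. Together with positive rescalings they therefore move $\chi_A$ only along $\Realp\chi_A$, never to another interior point such as $(1-t)p+t\chi_A$. The missing idea is a supply of non-trace-preserving cone automorphisms, built by steering:
\begin{itemize}
 \item for internal $\rho$, steer $p_\rho\Psi_\rho\le\rho\otimes\tilde\rho$ out of $\Psi_\rho\otimes\Psi_\rho$, where $\tilde\rho$ is the marginal of $\Psi_\rho$ on $\tA$;
 \item this gives an effect $E_\rho$ on $\tA A$ satisfying the zigzag identities;
 \item then $\Gamma_{\tau,\rho}=p_\tau^{-1}(\mathrm{id}\otimes E_\tau)\circ(\Psi_\rho\otimes\mathrm{id})$ lies in $\Trans_+(A,A)$, has inverse $\Gamma_{\rho,\tau}\in\Trans_+(A,A)$, and sends $\tau$ to $\rho$.
\end{itemize}

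Your steering map only shows $\Eff_+(A)\cong\St_+(\tA)\cong\St_+(A)$. Since $\Eff_+(A)$ may a priori be a proper subcone of $\St_+(A)^*$, this gives no self-duality. Yet Step~3 uses that self-duality to obtain no-restriction.

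Even granting self-duality, you cannot infer equality from $\Eff_+(A)\subseteq\St_+(A)^*$ and $\Eff_+(A)\cong\St_+(A)^*$. A cone can be linearly isomorphic to a proper subcone of itself: $X\mapsto X+\epsilon(\mathrm{tr}X)I$ maps $\Her_n^+$ onto a strictly smaller cone, which is exactly a theory with only noisy measurements.

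The paper gets self-duality from homogeneity plus pure transitivity (Barnum--Ududec--van de Wetering). It then proves $\Eff_+(A)=\St_+(A)^*$ separately:
\begin{itemize}
 \item $\Eff_+(A)$ is closed;
 \item $e_A\circ\Gamma_{\chi,\rho_t}/r_A=\psi_t^\dagger$, where $\psi_t$ has the same spectrum as $\rho_t=(1-t)p+t\chi$ by the polar decomposition of cone automorphisms;
 \item a limit $t\downarrow0$ yields a pure $\psi$ with $\psi^\dagger\in\Eff_+(A)$;
 \item reversible channels, being trace-preserving cone automorphisms, are Jordan automorphisms, which transports this to every pure state.
\end{itemize}

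The later steps also lack their key inputs.

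In the classification, $\dim(A\tA)=(\dim A)^2$ alone cannot exclude the real case: for a real qubit, $3^2=9=\dim\Her_3=\dim\Spin_9$. Nothing forces the composite to be ``of the same type.'' What works is two facts together:
\begin{itemize}
 \item the algebra of every system, including $AA$, is simple; the proof itself needs no-restriction, so that a block projection is an effect and splits $A\tA$ into blocks, contradicting purity of a purification of $u_A/r_A$;
 \item rank multiplicativity $r_{AA}=n_A^2$, which follows from no-restriction via $(\phi\otimes\varphi)^\dagger=\phi^\dagger\otimes\varphi^\dagger$.
\end{itemize}

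For channels, transitivity on pure states does not put every unitary in the reversible group. For instance, $\mathrm{Sp}(k)\subset\mathrm{U}(2k)$ with $k\ge2$ is already transitive on pure states of $\Complex^{2k}$. The paper realizes each $\Ad_V$ as follows. With a tensor-compatible representation, $(\mathrm{id}\otimes\Ad_V)(\Psi)$ is another pure purification of the maximally mixed state. The uniqueness clause then yields a reversible $U$ with the same action on $\Psi$. Faithfulness of purifications of internal states forces $U=\Ad_V$.

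Finally, dimension counting does not determine how product states embed in $\Her_{n_An_B}$. You need a classification of linear maps sending products of rank-one projections to rank-one projections, which gives $W(M^\epsilon\otimes N^\delta)W^\dagger$. The possible transpositions $\epsilon,\delta$ must be fixed consistently. The paper handles this pairwise, proves uniqueness of purification for arbitrary purifying systems, and concludes with the ``states specify the theory'' theorem.
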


The theorem has three main aspects.
First, the state space is not merely a subset of density matrices; it is the whole set of density matrices.
More precisely, if $n_A$ is the maximal number of perfectly distinguishable states
and $\Den_n$ denotes the set of $n$-dimensional density matrices,
then $\StN(A)$ can be identified with $\Den_{n_A}$.
Second, measurement no-restriction is a conclusion rather than an assumption.
In the terminology introduced above, this means that every mathematically consistent probability rule is
physically available as a measurement.
That is, any finite family $\{ a_i \}_{i=1}^n$ of probability assignments satisfying $a_i(\rho) \ge 0$ and
$\sum_i a_i(\rho) = 1$ for all states $\rho$ is realized by a physical measurement, with outcome probabilities
$a_i(\rho)$.
Third, the conclusion is not limited to states and measurements.
The Born rule, tensor-product composition, and CPTP maps all follow from the same two requirements.
Thus the theorem rules out not only theories with nonquantum single-system state spaces, but also theories that
keep only part of the quantum formalism, for example by restricting the allowed states or measurements, or by
changing the rule for composites or channels.
It is therefore not merely a state-space reconstruction.
It is a reconstruction of the finite-dimensional complex quantum formalism as a whole.
Here, ``standard quantum theory'' is meant in this operational sense: under the identifications supplied by
the proof, the theory has the same states, measurements, composites, and channels as finite-dimensional
complex quantum theory for the corresponding systems.

\emph{Proof strategy.---}%
The complete proof is given in the \Letter{SM~\cite{SM}}{Appendix}.
Here we outline the logic of the proof and the roles played by the two requirements.
Let $\St_+(A)$ be the convex cone generated by the state space $\StN(A)$.
The proof has three main steps:
\begin{enumerate}
 \item prove that $\St_+(A)$ is a symmetric cone;
 \item derive the no-restriction hypothesis;
 \item derive complex quantum theory.
\end{enumerate}

First, we prove that the state cone $\St_+(A)$ of every system $A$ is a symmetric cone.
A symmetric cone is, roughly speaking, a convex cone whose interior has high symmetry and whose geometry gives a
natural duality between states and probability assignments.
ES purification supplies two ingredients needed for this geometry.
It realizes mixed states as marginals of pure states on equivalent ancillary systems, and it allows decompositions
of a mixed state to be produced by measurements on the purifying system.
This is the operational content behind steering-type arguments familiar from quantum theory.
Local equivalence ensures that the channel identities used in these arguments are fixed by local statistics.
Together these facts yield homogeneity of the state cone and the self-duality needed for a symmetric-cone
structure.

Second, we use the symmetric-cone structure to derive the no-restriction hypothesis.
In a standard OPT, a mathematically consistent probability rule need not be physically available as a measurement.
The cone geometry first gives a state-side representation of positive probability assignments and, together with
the preceding symmetry arguments, identifies the cone of such assignments with the cone generated by physical
measurement outcomes.
ES purification then provides the steering argument that turns any finite family of such assignments whose
probabilities sum to one into an actual measurement.
Local equivalence, together with the uniqueness part of ES purification, rules out any remaining locally invisible
distinction between this measurement and the probability rule we started with.
Consequently, every mathematically consistent probability rule is physically realized as a measurement.
This is the step at which the measurement structure ceases to be an independent choice.

Third, we reduce the remaining symmetric-cone possibilities to complex quantum theory.
The proof uses the standard classification theorem for symmetric cones.
It first rules out decompositions of a single-system state space into independent sectors.
It then uses the dimension identities for composites, implied by local equivalence and the preceding steps, to
exclude the non-complex alternatives in the classification.
Only complex matrix state spaces remain.
Thus the state space is the full set of complex density matrices.
The no-restriction hypothesis then fixes the measurements to be POVMs.
At this stage one still has to rule out theories with the correct local states and measurements but nonstandard
composites or channels.
The proof combines purification with a state-specification argument, standard in operational reconstructions
\cite{Chi-Dar-Per-2010,Har-2011,Chi-Dar-Per-2011,Nak-2020}, to force tensor-product composition and CPTP maps.
Thus the proof derives, in order, the geometry of state spaces, measurement no-restriction, the complex
density-matrix form of states, and finally the full standard finite-dimensional quantum formalism.
The detailed proof is long because each step must avoid importing the conclusion through no-restriction,
Hilbert-space composition, or an assumed set of quantum channels.
For this reason, the \Letter{SM}{Appendix} establishes the symmetric-cone structure, the classification-based reduction to complex
matrix state spaces, and the channel-identification steps inside the OPT framework rather than treating them as
background facts.

\emph{Conclusion.---}%
In the present finite-dimensional causal OPT framework, states, measurements, channels, and composites may in
principle be specified separately.
The quantum formalism is special in that these ingredients take their standard forms together.
We have shown that this unity follows from the compatibility of two requirements.
First, local input-output statistics completely identify channels.
Second, every state can be purified with an equivalent ancillary system, and such purifications are unique up to a
reversible channel on that ancillary system.
When these requirements are imposed, the state space of every system is the whole set of density matrices,
measurements are exactly POVMs, composite systems are represented by tensor products of Hilbert spaces, and
channels are exactly CPTP maps.
Thus the result is not merely a reconstruction of quantum state spaces.
It also derives measurement no-restriction, the Born-rule form of probabilities, tensor-product composition, and
the usual quantum channels.
The standard formalism of quantum theory can therefore be understood not as a list of separately postulated
ingredients, but as a unified structure determined by the compatibility of local input-output identifiability with
purification by an equivalent ancillary system.

We thank O.~Hirota, M.~Sohma, T.~S.~Usuda, and K.~Kato for insightful discussions.

\clearpage\onecolumngrid

\appendix
\setcounter{postulate}{0}
\renewcommand{\theHpostulate}{supp.postulate.\arabic{postulate}}

\section{Operational-probabilistic theories (OPTs)} \label{sec:OPT}

A theory satisfying all assumptions stated in this section will be called an operational-probabilistic theory (OPT).
This notion is essentially the same as the causal operational-probabilistic framework described in
Chaps.~II--VI of Ref.~\cite{Chi-Dar-Per-2010}, with the assumption of local discriminability omitted;
the presentation below is slightly different.
An OPT generalizes the operational and probabilistic features of quantum theory and is closely related to what are
often called generalized probabilistic theories.

Throughout, natural numbers are positive integers.
Let $\Real$ and $\Complex$ denote the sets of real and complex numbers, respectively.
Let $\Realp$ denote the set of nonnegative real numbers, and let $[0,1]$ denote the set of real numbers
between $0$ and $1$, inclusive.
Let $\Complex^n$ denote the set of $n$-dimensional complex column vectors, and let
$\Complex^{n \times n}$ denote the set of complex square matrices of order $n$.
For each natural number $n$, let $\Her_n$ and $\Her_n^+$ denote the sets of complex Hermitian matrices of order $n$
and positive semidefinite matrices of order $n$, respectively.
Let $\Den_n$ denote the set of density matrices of order $n$, namely
$\Den_n \coloneqq \{ \rho \in \Her_n^+ \mid \tr \rho = 1 \}$.
The real vector space spanned by $\Den_n$ is $\Her_n$.
The identity matrix of order $n$ is denoted by $I_n$.
All real vector spaces appearing in this paper are assumed to be finite dimensional.
For a real vector space $\V$ and a convex cone $\cC \subseteq \V$, the \termdef{dual cone} of $\cC$,
denoted by $\cC^*$, is the set $\{ f \in \V^* \mid f(x) \ge 0 ~(\forall x \in \cC) \}$,
where $\V^*$ is the dual vector space of $\V$; this is again a convex cone.
For convex sets $X$ and $Y$ and the real vector spaces $\V$ and $\W$ that they span, a linear isomorphism satisfying
$f(X) = Y$, written as $f \colon \V \to \W$, will be called, by abuse of terminology,
a \termdef{linear isomorphism} from $X$ onto $Y$.
If such a linear isomorphism exists, $X$ and $Y$ are called \termdef{linearly isomorphic},
or simply \termdef{isomorphic}, and we write $X \cong Y$.
In particular, a linear isomorphism from $X$ to itself is called a \termdef{linear automorphism} of $X$.

\subsection{Framework of OPTs}

We assume that there are objects called \termdef{systems}, denoted by symbols such as $A$, $B$, $C$, $\dots$.
Let $\System$ be the set of all systems.
For every pair of systems $A$ and $B$, there are objects called \termdef{tests} from $A$ to $B$.
A test $\cF$ from $A$ to $B$ is a finite collection $\{ f_i \}_{i=1}^k$, where $k$ may depend on the test,
and each element $f_i$ is called a \termdef{transformation}.
Intuitively, performing the test $\cF$ produces a classical outcome $i \in \{1,\dots,k\}$ and applies the corresponding
transformation $f_i$.
Although Ref.~\cite{Chi-Dar-Per-2010} allows more general outcome sets, we restrict attention to finite-outcome tests only
to simplify notation; all constructions and proofs below use only this finite-outcome fragment, so this convention entails
no essential change in the present arguments.
The set of all transformations from $A$ to $B$ is denoted by $\Trans(A,B)$.
A transformation $f$ is called \termdef{deterministic}, or a \termdef{channel}, if the singleton collection $\{ f \}$
is a test.
A test from $A$ to $A$ is often called a test on $A$, and the same convention is used for transformations and channels.

There is a special system $I$, called the \termdef{trivial system}, such that every probability distribution
(i.e., every collection $\{ p_i \in \Realp \}_{i=1}^k$ satisfying $\sum_{i=1}^k p_i = 1$)
is a test from $I$ to $I$, and there are no other tests from $I$ to $I$.
Hence $\Trans(I,I) = [0,1]$.
Elements of $\Trans(I,I)$, i.e., real numbers between $0$ and $1$, are called \termdef{probabilities}.
We define
\begin{alignat}{1}
 \St(A) &\coloneqq \Trans(I,A), \quad \Eff(A) \coloneqq \Trans(A,I).
\end{alignat}
Elements of $\St(A)$ are called \termdef{states} of $A$, and elements of $\Eff(A)$ are called \termdef{effects} of $A$.
A test from $A$ to $I$ is called a \termdef{measurement} on $A$.
Intuitively, performing a measurement $\{ a_i \}_{i=1}^k$ produces one of the outcomes $1,\dots,k$ and applies
the corresponding effect $a_i$.
By definition, $a$ is an effect of $A$ iff there exists a measurement on $A$ containing $a$.
Diagrammatically, a transformation $f \in \Trans(A,B)$, a state $\rho \in \St(A)$, an effect $a \in \Eff(A)$,
and a probability $p$ are represented as follows:
\begin{alignat}{1}
 \includegraphics[scale=1.0,alt={}]{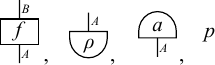}
 \label{eq:component}.
\end{alignat}
Thus transformations are represented by boxes and systems by wires.
The wire for the trivial system $I$ and the frame of a probability box are often omitted, as in this diagram.
\begin{ex}
 In quantum theory, $\Trans(A,B)$ is the set of trace-nonincreasing completely positive (CP) maps
 from $\Complex^{n_A \times n_A}$ to $\Complex^{n_B \times n_B}$.
 For each system $A$ in quantum theory, $n_A$ is called the \termdef{level} of $A$.
 A collection $\{ f_i \in \Trans(A,B) \}_{i=1}^k$ is a test iff $\sum_{i=1}^k f_i$ is trace preserving.
 In particular, $f \in \Trans(A,B)$ is a channel iff $f$ is trace preserving.
 We have $n_I = 1$, and each probability $p$ is identified with the map
  $\Complex^{1 \times 1} \ni z \mapsto p z \in \Complex^{1 \times 1}$.
 Identifying a positive semidefinite matrix $\rho \in \Her_{n_A}^+$
 with the CP map $\Complex^{1 \times 1} \ni z \mapsto z \rho \in \Complex^{n_A \times n_A}$, $\St(A)$ can be regarded
 as the set of positive semidefinite matrices of order $n_A$ with trace at most $1$.
 Thus $\rho \in \St(A)$ is deterministic iff $\tr \rho = 1$; note that, in the present terminology,
 matrices with $\tr \rho < 1$ are also called states.
 Similarly, identifying a positive semidefinite matrix $E \in \Her_{n_A}^+$ with
 the CP map $\Complex^{n_A \times n_A} \ni X \mapsto \tr(EX) \in \Complex^{1 \times 1}$,
 $\Eff(A)$ can be regarded as the set of positive semidefinite matrices of order $n_A$
 whose largest eigenvalue is at most $1$.
 In particular, $E$ is deterministic iff $E = I_{n_A}$.
\end{ex}

Tests can be connected in series and in parallel as follows.
We assume that, for all systems $A$, $B$, and $C$, any test $\cF \coloneqq \{ f_i \}_{i=1}^k$
from $A$ to $B$ and any test $\cG \coloneqq \{ g_j \}_{j=1}^l$ from $B$ to $C$ can be composed.
The composite is a test from $A$ to $C$ with $kl$ outcomes, denoted by
 $\cG \c \cF = \{ g_j \c f_i \}_{(i,j)=(1,1)}^{(k,l)}$.
Here $g_j \c f_i$ is the composite of $f_i$ and $g_j$, and is represented by vertically connecting the boxes:
\begin{alignat}{1}
 \includegraphics[scale=1.0,alt={}]{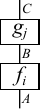}
 \label{eq:process_circ}.
\end{alignat}
We also assume that any two systems $A$ and $B$ can be connected in parallel.
The resulting system is called the \termdef{composite system} of $A$ and $B$ and is denoted by $AB$.
We assume that, for systems $A$, $B$, $A'$, and $B'$, any test
$\cF \coloneqq \{ f_i \}_{i=1}^k$ from $A$ to $B$ and any test
$\cH \coloneqq \{ h_j \}_{j=1}^l$ from $A'$ to $B'$ can be connected in parallel.
The parallel connection is a test from $AA'$ to $BB'$ with $kl$ outcomes, denoted by
 $\cF \ot \cH = \{ f_i \ot h_j \}_{(i,j)=(1,1)}^{(k,l)}$.
Here $f_i \ot h_j$ denotes the parallel connection of $f_i$ and $h_j$, represented by placing the boxes side by side:
\begin{alignat}{1}
 \includegraphics[scale=1.0,alt={}]{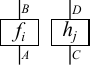}
 \label{eq:process_ot}.
\end{alignat}
Therefore $g \c f \in \Trans(A,C)$ and $f \ot h \in \Trans(AA',BB')$ for all $f \in \Trans(A,B)$, $g \in \Trans(B,C)$,
and $h \in \Trans(A',B')$.
If $f$ and $g$ are deterministic, then $g \c f$ is deterministic; if $f$ and $h$ are deterministic, then
$f \ot h$ is deterministic.
For a state $\rho \in \St(A)$ and a transformation $f \in \Trans(A,B)$, the composite $f \c \rho \in \St(B)$ is
often written as $f(\rho)$.
In particular, for a state $\rho \in \St(A)$ and an effect $a \in \Eff(A)$, the composite $a(\rho)$ is an element
of $\Trans(I,I)$ and hence is a probability.
\begin{ex}
 In quantum theory, composition is composition of maps, and parallel composition is the tensor product.
 In particular, the composite of a state $\rho \in \St(A)$ and an effect $E \in \Eff(A)$ is the probability
  $\tr(E \rho)$.
\end{ex}

Composition and parallel composition are assumed to be compatible in the following sense.
For each system $A$, there is a channel $\id_A$ on $A$, called the \termdef{identity channel}.
For any test $\cF$ from $A$ to a system $B$, $\cF \c \{ \id_A \} = \cF$ holds, and for any test $\cG$
from $B$ to $A$, $\{ \id_A \} \c \cG = \cG$ holds.
Thus $f \c \id_A = f = \id_B \c f$ for every $f \in \Trans(A,B)$.
Diagrammatically,
\begin{alignat}{1}
 \includegraphics[scale=1.0,alt={}]{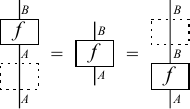}
 \label{eq:process_id_f},
\end{alignat}
where identity channels are drawn as wires.
The dashed boxes in the diagram are only aids for readability; intuitively, system wires can be stretched freely.
We assume that, for any composable tests $\cF$, $\cG$, and $\cH$, composition is associative:
$\cH \c (\cG \c \cF) = (\cH \c \cG) \c \cF$.
Hence $h \c (g \c f) = (h \c g) \c f$ for transformations:
\begin{alignat}{1}
 \includegraphics[scale=1.0,alt={}]{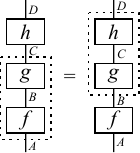}
 \label{eq:process_circ_associative}.
\end{alignat}
The dashed lines play the role of parentheses and can be omitted.
We assume the following unit and associativity laws for parallel composition of systems and tests.
First, $AI = A = IA$ and $A(BC) = (AB)C$.
Second, $\id_\AB = \id_A \ot \id_B$ and $\id_I \ot f = f = f \ot \id_I$.
Finally, for any tests $\cF$, $\cG$, and $\cH$ for which the relevant parallel compositions are defined,
parallel composition is associative: $\cF \ot (\cG \ot \cH) = (\cF \ot \cG) \ot \cH$.
Hence $f \ot (g \ot h) = (f \ot g) \ot h$ for transformations:
\begin{alignat}{1}
 \includegraphics[scale=1.0,alt={}]{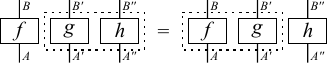}
 \label{eq:process_ot_associative}.
\end{alignat}
We also assume the interchange law: for any tests for which the relevant serial and parallel compositions are defined,
$(\cG \ot \cG') \c (\cF \ot \cF') = (\cG \c \cF) \ot (\cG' \c \cF')$.
Hence $(g \ot g') \c (f \ot f') = (g \c f) \ot (g' \c f')$ for transformations:
\begin{alignat}{1}
 \includegraphics[scale=1.0,alt={}]{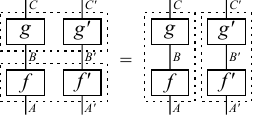}
 \label{eq:process_circ_ot}.
\end{alignat}
Equations~\eqref{eq:process_circ_associative}, \eqref{eq:process_ot_associative}, and \eqref{eq:process_circ_ot} imply
that the result of connecting three or more transformations by composition and parallel composition does not depend
on the order in which these operations are performed.
We assume that the composite $p \c q$ of probabilities is the ordinary product $pq$ of real numbers.
Then the parallel composition $p \ot q$ of probabilities is also $pq$, since
$p \ot q = (p \c \id_I ) \ot (\id_I \c q) = (p \ot \id_I) \c (\id_I \ot q) = p \c q$.
Mathematically, these assumptions can be understood as giving a strict monoidal category whose objects are systems
and whose morphisms are transformations, or tests.

For each $A$, we assume that at least one deterministic effect exists, and we denote one such effect by $e_A$.
The singleton $\{ e_A \}$ is a measurement; intuitively, applying it always yields the outcome $1$
and applies the effect $e_A$, which we will depict as
\begin{alignat}{1}
 \includegraphics[scale=1.0,alt={}]{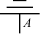}
 \label{eq:discard}.
\end{alignat}
Lemma~\ref{lemma:basic_discard_uniq} below shows that the deterministic effect of $A$ is unique.
In this sense, this paper deals only with causal theories in the sense of Ref.~\cite{Chi-Dar-Per-2010}.
A state of $A$ satisfying $e_A(\rho) = 1$ is called \termdef{normalized}; the set of normalized states of $A$ is denoted
by $\StN(A)$.
In the main text, normalized states are simply called states.
Since the composition of channels is a channel, if $\rho \in \St(A)$ is deterministic
then $e_A(\rho)$ is a channel from $I$ to $I$.
The only channel from $I$ to $I$ is $1$, so $e_A(\rho) = 1$; deterministic states
are therefore normalized.
Note that we show in Lemma~\ref{lemma:basic_normalized_states} that the converse holds after adding several assumptions
stated below.

Assume that, for any test $\{ \rho_i \}_{i=1}^k$ from $I$ to $A$, each state $\rho_i \in \St(A)$ can be written as
$\rho_i = p_i \ol{\rho}_i$ with a deterministic state $\ol{\rho}_i$ and a probability $p_i$.
Here the parallel composition $p_i \ot \ol{\rho}_i$ of the probability $p_i$ and the deterministic state
$\ol{\rho}_i$ is written as $p_i \ol{\rho}_i$; this notation will be generalized later.
This means that $a_j(\rho_i) = p_i a_j(\ol{\rho}_i)$ holds for every measurement $\{ a_j \}_{j=1}^l$ and every $j$.
Moreover, $\sum_j a_j(\rho_i) = \sum_j p_i a_j(\ol{\rho}_i) = p_i$, where
the last equality uses the fact that $\sum_j a_j(\ol{\rho}_i) = 1$, since $\ol{\rho}_i$ is deterministic.
Thus $p_i$ can be regarded as the marginal probability of obtaining the classical outcome $i$.
When $p_i > 0$, the state $\ol{\rho}_i$ can intuitively be interpreted as the state conditioned on the classical
outcome $i$ being obtained.
We have
\begin{alignat}{1}
 \St(A) &= \{ p \sigma \mid p \in [0,1], ~\sigma \in \StN(A) \}.
 \label{eq:basic_StA}
\end{alignat}
Indeed, for every state $\rho \in \St(A)$, there exists a test containing $\rho$.
Hence $\rho$ can be written as $\rho = p \ol{\rho}$ with some probability $p$ and some deterministic state
$\ol{\rho}$.
Since deterministic states are normalized, $\ol{\rho} \in \StN(A)$.
Conversely, an object $p \sigma$ on the right-hand side is a state, because it is the parallel composition of
a probability $p$ and a state $\sigma$.
Since an empty $\StN(A)$ would make $\St(A)$ empty and contain no useful information, we assume that each system has
at least one normalized state.
\begin{ex}
 In quantum theory, $e_A = I_{n_A}$.
 A state $\rho$ is normalized iff $\tr \rho = 1$.
 For any state $\rho$, putting $p \coloneqq \tr \rho$ and $\sigma \coloneqq p^{-1} \rho$
 (with arbitrary normalized $\sigma$ if $p = 0$) gives a normalized state $\sigma$ with $\rho = p \sigma$.
\end{ex}

For systems $A$ and $B$, equality of transformations $f$ and $g$ from $A$ to $B$ is defined operationally: $f = g$ iff
$\craket{a|f|\rho} = \craket{a|g|\rho}$ for every system $E$ and every $\rho \in \StN(AE)$ and $a \in \Eff(BE)$, where
$\craket{a|f|\rho} \coloneqq a((f \ot \id_E)(\rho))$.
This notation is used several times in this section.
The definition can be drawn as
\begin{alignat}{1}
 \adjustbox{valign=c}{\includegraphics[alt={f = g}]{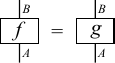}}
 &\qquad\Leftrightarrow\qquad
 \adjustbox{valign=c}{\includegraphics[alt={\craket{a|f|\rho} = \craket{a|g|\rho}}]{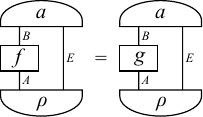}}
 \quad (\forall E \in \System, ~\rho \in \StN(AE), ~a \in \Eff(BE))
 \label{eq:process_eq}.
\end{alignat}
Since every state is a probability multiple of a normalized state, the condition is equivalent to the one obtained
by replacing $\rho \in \StN(AE)$ with $\rho \in \St(AE)$.
In the special case $A = I$, states $\rho$ and $\sigma$ of $B$ are equal iff
$a(\rho) = a(\sigma)$ $~(\forall a \in \Eff(B))$.%
\footnote{Indeed, by the definition of equality of transformations, $\rho = \sigma$ iff
$b((\rho \ot \id_E)(\tau)) = b((\sigma \ot \id_E)(\tau))$
for all systems $E$, $\tau \in \StN(IE) = \StN(E)$, and $b \in \Eff(BE)$.
Put $a \coloneqq b \c (\id_B \ot \tau) \in \Eff(B)$.
Then the left-hand side is $a(\rho)$, and similarly the right-hand side is $a(\sigma)$.}
Similarly, in the case $B = I$, effects $a$ and $b$ of $A$ are equal iff
$a(\rho) = b(\rho)$ $~(\forall \rho \in \StN(A))$.%
\footnote{Indeed, by the definition of equality of transformations, $a = b$ iff
$c((a \ot \id_E)(\rho)) = c((b \ot \id_E)(\rho))$
for all systems $E$, $\rho \in \StN(AE)$, and $c \in \Eff(IE) = \Eff(E)$.
Put $\sigma \coloneqq (\id_A \ot c)(\rho) \in \St(A)$.
Then the left-hand side is $a(\sigma)$, and similarly the right-hand side is $b(\sigma)$.}
Equivalently, the latter equality may be checked on all states of $A$.

For transformations $f$ and $g$ from $A$ to $B$, if there exists a transformation $h$ from $A$ to $B$ such that
$\craket{a|h|\rho} = \craket{a|f|\rho} + \craket{a|g|\rho}$ for every $E$, $\rho \in \StN(AE)$, and $a \in \Eff(BE)$,
then we write $h = f + g$.
Such a transformation $f + g$ need not exist for arbitrary $f$ and $g$.
Finite sums $\sum_{i=1}^k f_i$ are defined similarly.
For a probability $p$ and a transformation $f$, the parallel composition $p \ot f$ is often written simply as $pf$,
and $pf$ is called the \termdef{probability multiple} of $f$ by $p$.
These conventions define sums of transformations and multiplication by probabilities.
The transformation $f \in \Trans(A,B)$ satisfying $\craket{a|f|\rho} = 0$ for every $E$, $\rho \in \StN(AE)$,
and $a \in \Eff(BE)$ is denoted by $\zero_{A,B}$, or simply $\zero$.
For every $f \in \Trans(A,B)$, $0f = \zero_{A,B}$, which shows that $\zero_{A,B}$ exists.

We formally extend the above definitions and consider finite real linear combinations $\sum_i c_i f_i$
with $f_i \in \Trans(A,B)$ and $c_i \in \Real$.
After identifying operationally equivalent linear combinations, i.e., identifying $\sum_i c_i f_i$ and $\sum_j d_j g_j$
whenever $\sum_i c_i \craket{a|f_i|\rho} = \sum_j d_j \craket{a|g_j|\rho}$ for all $E$,
$\rho \in \StN(AE)$, and $a \in \Eff(BE)$, the resulting real vector space is denoted by $\Trans_\Real(A,B)$.
It is the real vector space spanned by $\Trans(A,B)$.
If $f \in \Trans_\Real(A,B)$ is represented as $f = \sum_i c_i f_i$
$~(c_i \in \Real, ~f_i \in \Trans(A,B))$ and $g \in \Trans_\Real(B,C)$ is represented as
$g = \sum_j d_j g_j$ $~(d_j \in \Real, ~g_j \in \Trans(B,C))$, define
$g \c f \coloneqq \sum_i \sum_j c_i d_j g_j \c f_i \in \Trans_\Real(A,C)$.
Similarly, if $h \in \Trans_\Real(A',B')$ is represented as $h = \sum_k q_k h_k$
$~(q_k \in \Real, ~h_k \in \Trans(A',B'))$, define
$f \ot h \coloneqq \sum_i \sum_k c_i q_k f_i \ot h_k \in \Trans_\Real(AA',BB')$.
These definitions are independent of representatives because operational equivalence is defined by all statistics,
including all ancillary systems.
Indeed, if one representative is replaced by an operationally equivalent one, then the statistics obtained
after inserting it into the serial or parallel contexts used above remain unchanged; the surrounding parts can be
absorbed into the ancillary system, state, and effect appearing in the definition of operational equivalence.
By linearity of these statistics, the same remains true after taking finite real linear combinations,
so the resulting element of the appropriate real vector space is unchanged.
The analogue of Eq.~\eqref{eq:process_eq} holds for elements of $\Trans_\Real(A,B)$.
Thus, for elements $f$ and $g$ of $\Trans_\Real(A,B)$, equality holds iff $\craket{a|f|\rho} = \craket{a|g|\rho}$
for all such $E$, $\rho$, and $a$.
Set $\St_\Real(A) \coloneqq \Trans_\Real(I,A)$ and $\Eff_\Real(A) \coloneqq \Trans_\Real(A,I)$.
These are the real vector spaces spanned by $\St(A)$ and $\Eff(A)$, respectively.
Elements of $\St_\Real(A)$, $\Eff_\Real(A)$, and $\Trans_\Real(A,B)$ are drawn using the same kind of boxes
as in Eq.~\eqref{eq:component}.

For each $A$, $\St_\Real(A)$ is assumed finite dimensional.
Its dimension is called the \termdef{dimension} of $A$ and denoted by $d_A$.
States and effects separate one another, so the dimension of either real vector space cannot exceed the dimension
of the dual of the other; hence $\Eff_\Real(A)$ also has dimension $d_A$.
We assume from now on that $\StN(A)$ is closed in $\St_\Real(A)$.
This is naturally interpreted as the requirement that the limit of normalized states is again a normalized state.
Equivalently, whenever a sequence of normalized states converges in $\St_\Real(A)$, its limit is also a
normalized state.

For a test $\cF \coloneqq \{ f_i \}_{i=1}^k$ from $A$ to $B$, we assume that any two of its outcomes can be merged.
For example, if the chosen elements are $f_1$ and $f_2$, then the collection $\cG \coloneqq \{ g_i \}_{i=1}^{k-1}$
with $g_1 = f_1 + f_2$ and $g_i \coloneqq f_{i+1}$ for $i = 2,\dots,k-1$ is a test.
This is called \termdef{coarse-graining} of $f_1$ and $f_2$.
Intuitively, $\cG$ is obtained from $\cF$ by identifying the classical outcomes $1$ and $2$.
Thus, if there exists a test containing $f_1$ and $f_2$, then $f_1 + f_2$ is a transformation.
Other pairs can be coarse-grained in the same way.
We also assume that conditioned tests are allowed: if $\cF \coloneqq \{ f_i \}_{i=1}^k$ is a test from $A$ to $B$
and, for each $i$, $\cG^{(i)} \coloneqq \{ g^{(i)}_j \}_{j=1}^{l_i}$ is a test from $B$ to $C$, then
$\{ g^{(i)}_j \c f_i \mid i=1,\dots,k, ~j=1,\dots,l_i \}$ is a test from $A$ to $C$.
Intuitively, one first performs $\cF$ and then, depending on the classical outcome $i$, performs $\cG^{(i)}$.

\begin{lemma} \label{lemma:basic_discard_uniq}
 The deterministic effect of $A$ is uniquely $e_A$.
\end{lemma}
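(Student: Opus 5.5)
Let $e'$ be any deterministic effect of $A$. Equality of effects is checked on normalized states of $A$ alone (second footnote after Eq.~\eqref{eq:process_eq}). So it suffices to show $e'(\rho) = e_A(\rho)$ for every $\rho \in \StN(A)$. The plan is to show that both sides equal $1$, which reduces to a statement about normalized states being deterministic.

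\textbf{Reduction to deterministic states.} Take any $\rho \in \StN(A)$. The singleton $\{\rho\}$ need not be a test, so we cannot immediately say $e'(\rho)=1$. However, $\rho$ lies in some test, hence $\rho = p\ol{\rho}$ with $p$ a probability and $\ol{\rho}$ deterministic. Since $\ol\rho$ is deterministic, $e_A(\ol\rho)=1$ (as shown just before), so $1 = e_A(\rho) = p\,e_A(\ol\rho) = p$. Thus $\rho = \ol\rho$ operationally, and every normalized state is deterministic.

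\textbf{Evaluating $e'$.} For deterministic $\rho$, the composite $e'\c\rho$ is a channel from $I$ to $I$. The only test from $I$ to $I$ with one outcome is the distribution $\{1\}$, so $e'(\rho)=1$. Hence $e'(\rho)=1=e_A(\rho)$ for all $\rho\in\StN(A)$, and therefore $e'=e_A$.

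\textbf{Main obstacle.} The delicate point is the step $\rho = \ol\rho$. It requires that $p\ol\rho$ with $p=1$ is operationally identical to $\ol\rho$, which follows from $1\ot f = \id_I \ot f = f$. It also uses that states are identified by effect statistics, which is where the footnote is needed. A potential circularity should be checked: the paper's remark that $e_A(\rho)=1$ holds for deterministic $\rho$ used the specific $e_A$, but the same argument applies verbatim to any deterministic effect. The proof therefore only needs that $\StN(A)$ is defined via the fixed $e_A$, and does not rely on uniqueness itself.
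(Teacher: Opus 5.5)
Your proof is correct, but it takes a different route from the paper's.

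\textbf{The paper's route.} The paper works with an arbitrary, possibly subnormalized, state $\rho \in \St(A)$. It completes $\rho$ to a test and coarse-grains the remaining outcomes into $\tau$, so that $\rho + \tau$ is deterministic. It then uses conditioned tests (apply $a$ after $\rho$ and $e_A$ after $\tau$) to get $a(\rho) + e_A(\tau) = 1 = e_A(\rho) + e_A(\tau)$.

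\textbf{Your route.} You instead use the factorization axiom $\rho = p\ol{\rho}$ with $\ol{\rho}$ deterministic. Normalization forces $p = e_A(\rho) = 1$, so every normalized state is already deterministic. Any deterministic effect must then give $1$ on it, because the only one-outcome test from $I$ to $I$ is $\{1\}$.

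\textbf{What each approach buys.} Your version needs neither coarse-graining nor conditioned tests. As a byproduct, it proves the nontrivial direction of Lemma~\ref{lemma:basic_normalized_states} without using uniqueness of $e_A$. The paper's proof of that direction uses $a + b = e_A$, which relies on the present lemma. The paper's version, in turn, does not rely on the factorization axiom. It obtains $a(\rho) = e_A(\rho)$ directly on all of $\St(A)$, which is exactly what the footnote's reduction of effect equality delivers before Eq.~\eqref{eq:basic_StA} is used. Your argument uses that axiom twice: once to show normalized states are deterministic, and once, through the footnote, to reduce effect equality to $\StN(A)$.

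\textbf{One point to make explicit.} Justify the identity $1 = \id_I$ used in $1 \ot \ol{\rho} = \ol{\rho}$ directly: $\{\id_I\}$ is a one-outcome test from $I$ to $I$, hence equals $\{1\}$. This matters because the paper records $\id_I = e_I = 1$ only after, and as a consequence of, this lemma; with the direct observation there is no circularity. Relatedly, the step $\rho = \ol{\rho}$ is just the unit law. The footnote on equality of effects is needed only in your final step $e' = e_A$, not there.
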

\begin{proof}
 Let $a$ be a deterministic effect of $A$.
 For any state $\rho \in \St(A)$, there exists $\tau \in \St(A)$ such that $\rho + \tau$ is deterministic: if $\rho$ is
 deterministic, take $\tau \coloneqq \zero$, and otherwise, complete $\rho$ to a test and coarse-grain all remaining
 outcomes.
 Consider the conditioned test that applies $a$ after $\rho$ and $e_A$ after $\tau$.
 It is a test from $I$ to $I$, hence $a(\rho) + e_A(\tau) = 1$.
 Since also $e_A(\rho) + e_A(\tau) = 1$, we get $a(\rho) = e_A(\rho)$.
 As $\rho$ was arbitrary, equality of effects gives $a = e_A$.
\end{proof}

This lemma, together with the fact that the parallel composition of deterministic effects is deterministic,
implies $e_{AB} = e_A \ot e_B$.
It also gives $\id_I = e_I = 1$.
Indeed, $\id_I$ is a deterministic effect of $I$, and $\{ e_I \}$ is a probability distribution.
Moreover, for any measurement $\{ a_i \}_{i=1}^k$ on $A$, coarse-graining all its outcomes gives
the deterministic effect $\sum_{i=1}^k a_i$; hence, by Lemma~\ref{lemma:basic_discard_uniq},
$\sum_{i=1}^k a_i = e_A$.

\begin{lemma} \label{lemma:basic_normalized_states}
 A state is deterministic iff it is normalized.
\end{lemma}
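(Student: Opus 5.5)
One direction is already in hand: as noted just before the statement, a deterministic state $\rho$ gives a channel $e_A(\rho)$ from $I$ to $I$, so $e_A(\rho) = 1$ and $\rho$ is normalized. The plan is therefore to prove the converse. Given a normalized state $\sigma \in \StN(A)$, we must show that the singleton $\{\sigma\}$ is a test from $I$ to $A$.

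First, since $\sigma \in \St(A)$, there is a test $\{\rho_i\}_{i=1}^k$ from $I$ to $A$ with $\sigma = \rho_1$. Coarse-graining all the remaining outcomes gives a two-outcome test $\{\sigma, \tau\}$, where $\tau \coloneqq \sum_{i \ge 2} \rho_i$; if $k = 1$ we are already done. Coarse-graining once more, $\sigma + \tau$ is a deterministic state, hence normalized. Applying $e_A$ then gives
\[
e_A(\sigma) + e_A(\tau) = 1 .
\]
Since $e_A(\sigma) = 1$, it follows that $e_A(\tau) = 0$.

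Next, we show that $\tau = \zero$. By the assumption on tests from $I$ to $A$, we can write $\tau = p\,\ol{\tau}$ with a probability $p$ and a deterministic state $\ol{\tau}$. Then $e_A(\tau) = p\, e_A(\ol{\tau}) = p$, so $p = 0$ and $\tau = 0\,\ol{\tau} = \zero$. Equivalently, $a(\tau) = 0$ for every effect $a$, because $a(\tau) = p\,a(\ol{\tau})$. Since equality of states is checked on effects alone (the special case $A = I$ discussed after Eq.~\eqref{eq:process_eq}), this gives $\tau = \zero$ as an element of $\St(A)$.

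The remaining step is to conclude that $\{\sigma\}$ is itself a test, and I expect this to be the main obstacle. The framework only states that tests are finite collections closed under coarse-graining and conditioning, and $\{\sigma, \zero\}$ is not literally $\{\sigma\}$. What we do have is that $\sigma + \tau$ is a deterministic transformation, so $\{\sigma + \tau\}$ is a test. The operational definition of equality gives $\sigma + \tau = \sigma + \zero = \sigma$, since the two sides agree on every $\craket{a|\cdot|\rho}$. Here I use that transformations are identified up to operational equivalence, so the singleton test $\{\sigma + \tau\}$ is the singleton $\{\sigma\}$. Therefore $\sigma$ is deterministic.

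As a cross-check, the same conclusion follows from the decomposition $\sigma = q\,\ol{\sigma}$ with $\ol{\sigma}$ deterministic: we get $1 = e_A(\sigma) = q$, so $\sigma = \ol{\sigma}$ is deterministic. This route is shorter and avoids $\tau$ altogether, so I would present it as the main argument, using the identity $1\,\ol{\sigma} = \ol{\sigma}$ from $1 = \id_I$ and the unit law $\id_I \ot f = f$.
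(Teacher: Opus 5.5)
Your proposal is correct. The argument you say you would present (the ``cross-check'') takes a genuinely different and shorter route than the paper's.

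You apply the framework assumption that every state occurring in a test is a probability multiple $q\,\ol{\sigma}$ of a deterministic state directly to $\sigma$. Normalization then forces $q = e_A(\sigma) = 1$, and $\sigma = 1\,\ol{\sigma} = \ol{\sigma}$, either via $1 = \id_I$ and the unit law or because both give the same probability on every effect.

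The paper never uses that decomposition here. It completes $\rho$ to a deterministic $\rho + \tau$ by coarse-graining, obtains $e_A(\tau) = 0$, and kills $\tau$ through the bound $0 \le a(\tau) \le e_A(\tau) = 0$. The inequality $a \le e_A$ comes from coarse-graining a measurement containing $a$, together with the uniqueness of the deterministic effect (Lemma~\ref{lemma:basic_discard_uniq}).

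The trade-off is as follows. Your route buys brevity and skips the coarse-graining and effect-bound step, but it leans on an axiom that nearly contains the conclusion. The paper's route shows that ``normalized $\Rightarrow$ deterministic'' follows from coarse-graining and causality (uniqueness of $e_A$) alone, independently of the $p\,\ol{\rho}$ decomposition axiom, and it parallels the proof of the preceding lemma.

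Your first route is a hybrid: the paper's skeleton, with the decomposition $\tau = p\,\ol{\tau}$ replacing the bound $a(\tau) \le e_A(\tau)$. Your worry about $\{\sigma,\zero\}$ versus $\{\sigma\}$ is resolved exactly as in the paper, by the operational identification of transformations (the paper's ``$\rho = \rho + \tau$ is deterministic''). In your short route that issue does not arise at all.
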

\begin{proof}
 We have already shown that deterministic states are normalized.
 Conversely, let $\rho \in \StN(A)$.
 As in the proof of Lemma~\ref{lemma:basic_discard_uniq}, choose $\tau \in \St(A)$ such that $\rho + \tau$ is
  deterministic.
 Then $e_A(\rho) + e_A(\tau) = 1$, and since $e_A(\rho) = 1$, we have $e_A(\tau) = 0$.
 For any $a \in \Eff(A)$, coarse-grain the remaining effects in a measurement containing $a$ to obtain $b$ with
  $a + b = e_A$.
 Hence $0 \le a(\tau) \le e_A(\tau) = 0$, so $a(\tau) = 0$.
 Equality of states gives $\tau = \zero$, and thus $\rho = \rho + \tau$ is deterministic.
\end{proof}

\begin{lemma}
 $\StN(A)$ is convex.
\end{lemma}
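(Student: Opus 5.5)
Fix $\rho,\sigma \in \StN(A)$ and $p \in [0,1]$. We want a normalized state whose statistics are $p\,a(\rho) + (1-p)\,a(\sigma)$ for every effect $a$; by the equality criterion for states, this state is then $p\rho + (1-p)\sigma$. We construct it as a conditioned test.

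First, the trivial system supplies the two-outcome test $\{p, 1-p\}$ from $I$ to $I$, which is a probability distribution. By Lemma~\ref{lemma:basic_normalized_states}, $\rho$ and $\sigma$ are deterministic, so $\{\rho\}$ and $\{\sigma\}$ are tests from $I$ to $A$. The conditioned-test assumption, with $\cF \coloneqq \{p,1-p\}$, $\cG^{(1)} \coloneqq \{\rho\}$ and $\cG^{(2)} \coloneqq \{\sigma\}$, yields the test $\{\rho \c p, \sigma \c (1-p)\}$ from $I$ to $A$. Using the unit and interchange laws, $\rho \c p = \rho \c (p \ot \id_I) = p \ot \rho = p\rho$, and similarly for $\sigma$. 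Hence $\{p\rho, (1-p)\sigma\}$ is a test.

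Second, coarse-graining its two outcomes gives the singleton test $\{p\rho + (1-p)\sigma\}$. This shows that the sum exists as a transformation, i.e.\ a state of $A$, and that it is deterministic. Its operational meaning matches the required mixture, because the definition of the sum $f+g$ reproduces the statistics $p\,\craket{a|\rho|\tau} + (1-p)\,\craket{a|\sigma|\tau}$ with all ancillas. By the earlier observation, deterministic states are normalized (indeed $e_A$ evaluates to $p+(1-p)=1$). So $p\rho+(1-p)\sigma \in \StN(A)$.

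The only delicate point is the identification $\rho \c p = p\rho$, together with the fact that the coarse-grained element coincides with the vector-space convex combination in $\St_\Real(A)$. The first follows directly from the strict monoidal laws ($\id_I \ot f = f$ and the interchange law). The second holds because sums are defined operationally through all statistics, which are linear.
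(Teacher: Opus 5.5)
Your proposal is correct and takes the same route as the paper: conditioning the probability test $\{p,1-p\}$ on the singleton tests $\{\rho\}$ and $\{\sigma\}$, which are tests because normalized states are deterministic by Lemma~\ref{lemma:basic_normalized_states}, and then coarse-graining to a deterministic, hence normalized, state. Your explicit check that $\rho \c p = p\rho$ via the unit and interchange laws just spells out a step the paper leaves implicit.
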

\begin{proof}
 For normalized states $\rho$ and $\sigma$ of $A$ and $p \in [0,1]$, use the probability test $\{ p, 1 - p \}$
 followed by the singleton tests $\{ \rho \}$ and $\{ \sigma \}$.
 Then $\{ p \rho, (1 - p) \sigma \}$ is a test, and coarse-graining gives the deterministic state
 $p \rho + (1 - p) \sigma \in \StN(A)$.
\end{proof}

Similarly, $\Trans(A,B)$, and in particular $\St(A)$ and $\Eff(A)$, are convex.
From the definitions of sums and probability multiples, every $f \in \Trans(A,B)$ preserves convex combinations:
$f(\sum_{i=1}^k p_i \rho_i) = \sum_{i=1}^k p_i f(\rho_i)$ for every probability distribution $\{ p_i \}_{i=1}^k$
and states $\rho_i$ of $A$.%
\footnote{Indeed, take an arbitrary effect $a \in \Eff(B)$ and put $a_f \coloneqq a \c f \in \Eff(A)$.
Put $\sigma \coloneqq \sum_{i=1}^k p_i \rho_i$.
Then $a_f(\sigma) = \sum_{i=1}^k p_i a_f(\rho_i)$, equivalently
$a(f(\sigma)) = \sum_{i=1}^k p_i a(f(\rho_i))$.
Since this holds for every $a$, equality of states gives the claimed formula.}
Thus $f$ can be regarded as an affine map from $\St(A)$ to $\St(B)$, and it has a unique linear extension
$f_\mathrm{lin} \colon \St_\Real(A) \to \St_\Real(B)$ satisfying $f(\rho) = f_\mathrm{lin}(\rho)$ for all
 $\rho \in \St(A)$.
At this stage, however, distinct transformations from $A$ to $B$ may still induce the same linear map
$\St_\Real(A) \to \St_\Real(B)$; this possible non-injectivity will be removed later by local equivalence.
We identify each $y \in \Eff_\Real(A)$ with the linear functional $x \mapsto y(x)$ on $\St_\Real(A)$; this identifies
$\Eff_\Real(A)$ with $\St_\Real(A)^*$.

Let $\St_+(A)$, $\Eff_+(A)$, and $\Trans_+(A,B)$ be the convex cones generated by $\St(A)$, $\Eff(A)$,
and $\Trans(A,B)$, respectively.
We call $\St_+(A)$ and $\Eff_+(A)$ the \termdef{state cone} and \termdef{effect cone} of $A$.
If $f \in \Trans_+(A,B)$ and $g \in \Trans_+(B,C)$, then
$g \c f \in \Trans_+(A,C)$.
Similarly, if $f \in \Trans_+(A,B)$ and $h \in \Trans_+(A',B')$, then
$f \ot h \in \Trans_+(AA',BB')$.
Each $a \in \Eff_+(A)$ maps each $x \in \St_+(A)$ to a nonnegative real number, and hence
$\Eff_+(A) \subseteq \St_+(A)^*$.
Note that the identification $\Eff_\Real(A) = \St_\Real(A)^*$ does not by itself imply
$\Eff_+(A) = \St_+(A)^*$; in an OPT, the latter equality need not hold.
If $f$ and $g$ are elements of $\Trans_\Real(A,B)$ and satisfy $g - f \in \Trans_+(A,B)$, we write
$f \le g$ or $g \ge f$.
\begin{ex}
 In quantum theory, $\Trans_\Real(A,B)$ and $\Trans_+(A,B)$ are the sets of Hermiticity-preserving maps and CP maps
 from $\Complex^{n_A \times n_A}$ to $\Complex^{n_B \times n_B}$, respectively.
 In particular, $\St_\Real(A)$ and $\Eff_\Real(A)$ are identified with $\Her_{n_A}$, while $\St_+(A)$ and $\Eff_+(A)$
 are identified with $\Her_{n_A}^+$.
 Since effects are identified with maps $X \mapsto \tr(EX)$, quantum theory satisfies $\Eff_+(A) = \St_+(A)^*$.
\end{ex}

A finite family of normalized states $\{ \rho_i \in \StN(A) \}_{i=1}^k$ is \termdef{perfectly distinguishable}
if there exists a measurement $\{ a_i \in \Eff(A) \}_{i=1}^k$ such that
$a_i(\rho_j) = \delta_{i,j}$ for all $i$ and $j$.
The maximum number of perfectly distinguishable normalized states of $A$ is called the \termdef{informational dimension}
of $A$ and is denoted by $n_A$.
Thus there exists a perfectly distinguishable family of size $n_A$, but no such family of size $n_A + 1$.
A perfectly distinguishable family is linearly independent in $\St_\Real(A)$, so its size is at most $d_A$.
We assume that there exists a system with informational dimension at least $2$.
\begin{ex}
 In quantum theory, the informational dimension equals the level $n_A$ of $A$.
 A family $\{ \rho_i \in \StN(A) \}_{i=1}^k$ is perfectly distinguishable iff the states are mutually orthogonal,
 where $\sigma$ and $\tau$ are density matrices of order $n_A$ and are orthogonal iff $\tr(\sigma\tau) = 0$.
\end{ex}

A channel $U \in \Trans(A,B)$ is \termdef{reversible} if there exists a channel
$\tU \in \Trans(B,A)$ such that
$\tU \c U = \id_A$ and $U \c \tU = \id_B$.
Then $\tU$ is called the \termdef{inverse channel} of $U$ and is denoted by $U^{-1}$.
If there exists a reversible channel from $A$ to $B$, the systems $A$ and $B$ are called
\termdef{equivalent}, written $A \simeq B$.
In particular, since $\id_A$ is a reversible channel on $A$, every system $A$ is equivalent to itself.
We assume that the order of systems in a composite can be interchanged up to equivalence, i.e.,
$AB \simeq BA$.
Thus properties of the composite system $AB$ that are invariant under equivalence also hold for the
reordered composite system $BA$.
If $A \simeq B$ and $B \simeq C$, then there exist reversible channels
$U \in \Trans(A,B)$ and $V \in \Trans(B,C)$, and
$V \c U \in \Trans(A,C)$ is reversible with inverse $U^{-1} \c V^{-1}$.
Hence $A \simeq C$.
Similarly, if $A \simeq B$ and $C \simeq D$, then there exist reversible channels
$U \in \Trans(A,B)$ and $W \in \Trans(C,D)$, and
$U \ot W \in \Trans(AC,BD)$ is reversible with inverse $U^{-1} \ot W^{-1}$.
Hence $AC \simeq BD$.
By repeated use of the order-interchange assumption, associativity of composites, and compatibility
of equivalence with parallel composition, any finite permutation of tensor factors changes a composite
system only up to equivalence.
In the proofs below, whenever composite systems differ only by such a permutation, we choose a
reversible channel implementing the corresponding equivalence and suppress it from the notation.
\begin{ex}
 In quantum theory, $A \simeq B$ iff $n_A = n_B$.
 A reversible channel from $A$ to $B$ has the form $X \mapsto V X V^\dag$ with a unitary matrix $V$,
 where $\dag$ denotes conjugate transpose; its inverse has the same form with $V^{-1}$.
\end{ex}

By Eq.~\eqref{eq:basic_StA} and the convexity of $\StN(A)$, every $x \in \St_+(A)$ can be written as
$x = a \sigma$ with $a \in \Realp$ and $\sigma \in \StN(A)$.
In particular, if $x \ne \zero$, then $a > 0$ and $e_A(x) = a$.
Thus every nonzero $x \in \St_+(A)$ can be written as
$x = e_A(x) \ol{x}$ with $\ol{x} \coloneqq x / e_A(x) \in \StN(A)$.
The state $\ol{x} = x / e_A(x)$ is called the \termdef{normalization} of $x$.
The following elementary consequence will be used later:
\begin{alignat}{3}
 x &\in \St_+(A), &~e_A(x) &\le 1 &\quad\Rightarrow\quad x &\in \St(A), \nonumber \\
 x &\in \St_+(A), &~e_A(x) &= 1 &\quad\Rightarrow\quad x &\in \StN(A).
 \label{eq:basic_St+_St}
\end{alignat}
Indeed, the case $x = \zero$ is immediate.
If $x \ne \zero$, then $x = e_A(x)\ol{x}$ with $\ol{x} \in \StN(A)$.
Hence $e_A(x) \le 1$ implies $x \in \St(A)$ by Eq.~\eqref{eq:basic_StA}, and
$e_A(x) = 1$ implies $x = \ol{x} \in \StN(A)$.

A normalized state $\rho \in \StN(A)$ is \termdef{pure} if every decomposition $\rho = p \sigma + (1 - p) \tau$
with normalized states $\sigma$ and $\tau$ of $A$ and $0 < p < 1$ implies $\sigma = \tau = \rho$.
A normalized state that is not pure is called \termdef{mixed}.
The \termdef{refinement set} of $\rho \in \StN(A)$ is
\begin{alignat}{1}
 D_\rho &\coloneqq \{ \sigma \in \St_+(A) \mid \sigma \le \rho \}.
\end{alignat}
Note that $\sigma \le \rho$ means $\rho - \sigma \in \St_+(A)$.
A state $\rho$ is \termdef{internal} if the real vector space spanned by $D_\rho$ is $\St_\Real(A)$.
\begin{ex}
 In quantum theory, $D_\rho$ is the set of positive semidefinite matrices $\sigma$ satisfying $\sigma \le \rho$,
 and $\rho$ is internal iff it is full rank.
\end{ex}

\subsection{Assumptions of OPTs}

For clarity, we summarize the assumptions made in the preceding subsection.
Except for local discriminability, which is not included here, the assumptions below are
taken from the causal OPT framework of Chaps.~II--VI of Ref.~\cite{Chi-Dar-Per-2010},
or are immediate consequences of that framework, and are rewritten in the present notation.
\begin{enumerate}
 \item There are systems, and for each pair $A$ and $B$, tests from $A$ to $B$ are defined. Each test is a finite
       collection of transformations.
 \item There is a trivial system $I$. Tests from $I$ to $I$ are exactly finite probability distributions
       $\{ p_i \in \Realp \}_{i=1}^k$ with $\sum_{i=1}^k p_i = 1$.
 \item Tests can be composed in series and in parallel; systems can be composed in parallel. These operations satisfy
       the evident identity, associativity, unit, and interchange laws, including $\id_{AB} = \id_A \ot \id_B$
       and $\id_I \ot f = f = f \ot \id_I$.
       For any tests for which the relevant serial and parallel compositions are defined, the interchange law is
       $(\cG \c \cF) \ot (\cG' \c \cF') = (\cG \ot \cG') \c (\cF \ot \cF')$.
 \item Composition of probabilities is ordinary multiplication.
 \item Equality of transformations is defined by operational statistics: $f = g$ iff
        $\craket{a|f|\rho} = \craket{a|g|\rho}$
       for every $E$, $\rho \in \StN(AE)$, and $a \in \Eff(BE)$.
 \item Each system $A$ has a deterministic effect $e_A$.
 \item Each system has at least one normalized state, and every state has the form $p \sigma$ with $p \in [0,1]$
       and $\sigma \in \StN(A)$.
 \item Tests are closed under coarse-graining.
 \item Conditioned tests are allowed: from a test $\{ f_i \}_{i=1}^k$ from $A$ to $B$ and tests
       $\{ g^{(i)}_j \}_{j=1}^{l_i}$ from $B$ to $C$, the collection
        $\{ g^{(i)}_j \c f_i \mid i=1,\dots,k, ~j=1,\dots,l_i \}$
       is a test from $A$ to $C$.
 \item $\St_\Real(A)$ is finite dimensional, and $\StN(A)$ is closed in $\St_\Real(A)$.
 \item There exists at least one system with informational dimension at least $2$.
 \item $AB\simeq BA$.
\end{enumerate}

\subsection{Quantum theory}

In view of the examples above, we define quantum theory as the following special case of an OPT.
\begin{enumerate}
 \item $\Trans(A,B)$ is the set of trace-nonincreasing CP maps from $\Complex^{n_A \times n_A}$ to
        $\Complex^{n_B \times n_B}$. In particular $n_I = 1$, and the map
        $\Complex^{1 \times 1} \ni z \mapsto p z \in \Complex^{1 \times 1}$ is
       identified with the probability $p$.
 \item A collection $\{ f_i \in \Trans(A,B) \}_{i=1}^k$ is a test iff $\sum_{i=1}^k f_i$ is trace preserving.
 \item Composition is composition of maps, and parallel composition is the tensor product.
\end{enumerate}
This satisfies all the examples of quantum theory stated above.
The quantum theory considered in this paper need not contain an $n$-level system for every natural number $n$;
for example, one may consider a quantum theory containing only systems of even levels.

\subsection{Basic properties of OPTs}

We collect several elementary properties of OPTs.
Many of the lemmas below are close to lemmas, corollaries, or theorems in Ref.~\cite{Chi-Dar-Per-2010}, and we briefly
indicate the correspondence.

\begin{lemma} \label{lemma:basic_compact_closed}
 For every system $A$, $\StN(A)$ is a compact convex set and $\St_+(A)$ is a closed convex cone.
 Moreover, $A$ has a normalized pure state.
\end{lemma}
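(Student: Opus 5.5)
The plan has three parts: show $\StN(A)$ is compact, deduce that $\St_+(A)$ is closed, and obtain a pure state from Krein--Milman. Convexity of $\StN(A)$ is already proved, and closedness in $\St_\Real(A)$ is an assumption. Since $\St_\Real(A)$ is finite dimensional, compactness will follow from boundedness.

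\textbf{Boundedness of $\StN(A)$.} Choose a basis $\{ y_j \}_{j=1}^{d_A}$ of $\Eff_\Real(A)$ consisting of effects. This is possible because $\Eff(A)$ spans $\Eff_\Real(A)$, which has dimension $d_A$. Since $\Eff_\Real(A)$ is identified with $\St_\Real(A)^*$, the map $x \mapsto (y_j(x))_{j=1}^{d_A}$ is a linear isomorphism $\St_\Real(A) \to \Real^{d_A}$. Every normalized state $\rho$ satisfies $y_j(\rho) \in [0,1]$, so the image of $\StN(A)$ lies in the cube $[0,1]^{d_A}$. Hence $\StN(A)$ is bounded, and being closed, it is compact.

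\textbf{Closedness of $\St_+(A)$.} Every nonzero element of $\St_+(A)$ has the form $x = e_A(x)\,\ol{x}$ with $\ol{x} \in \StN(A)$, so
\[
\St_+(A) = \{ t\sigma \mid t \in \Realp, ~\sigma \in \StN(A) \}.
\]
Take a sequence $x_k = t_k \sigma_k \in \St_+(A)$ converging to some $x$. By continuity of the linear functional $e_A$, we get $t_k = e_A(x_k) \to e_A(x)$, so the sequence $(t_k)$ is bounded. By compactness of $\StN(A)$, we can pass to a subsequence with $\sigma_k \to \sigma \in \StN(A)$. Then $x = e_A(x)\,\sigma \in \St_+(A)$. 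The argument is valid even when $e_A(x) = 0$, in which case $x = \zero$.

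\textbf{Existence of a pure state.} $\StN(A)$ is nonempty (by assumption), compact, and convex in a finite-dimensional space. By the Krein--Milman theorem, or more elementarily by Minkowski's theorem, it has an extreme point. An extreme point of $\StN(A)$ is by definition a pure state. For a concrete alternative, one can maximize a strictly convex continuous function such as $\|x\|^2$ over $\StN(A)$; any maximizer is extreme.

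I expect no real obstacle in this lemma. The only point needing care is boundedness: it relies on effects spanning the dual and taking values in $[0,1]$ on states, which is exactly what the separation and dimension facts stated earlier provide.
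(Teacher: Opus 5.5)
Your proposal is correct and follows essentially the same route as the paper. Boundedness of $\StN(A)$ comes from effects taking values in $[0,1]$ on normalized states; you use coordinates with respect to a basis of effects, where the paper uses the operational norm. Closedness of $\St_+(A)$ is the fact that the cone generated by a compact convex set not containing the origin is closed; you prove it directly with the functional $e_A$, where the paper cites a textbook. The pure state is an extreme point of the nonempty compact convex set $\StN(A)$, as in the paper.
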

This elementary lemma will be used frequently without further mention.
\begin{proof}
 We first show that $\StN(A)$ is compact and convex.
 It is closed and convex, and it remains to prove boundedness.
 Since $\St_\Real(A)$ is finite dimensional, all norms are equivalent.
 Use the operational norm of Ref.~\cite{Chi-Dar-Per-2010},
 $\|x\|_A \coloneqq \sup_{a \in \Eff(A)} a(x) - \inf_{b \in \Eff(A)} b(x)$.
 For $\rho \in \StN(A)$, $a(\rho) \le e_A(\rho) = 1$ and $b(\rho) \ge \zero(\rho) = 0$.
 The upper and lower bounds are attained by $e_A$ and $\zero$, respectively.
 Hence $\|\rho\|_A = 1$.
 Thus $\StN(A)$ is bounded, and therefore compact.
 Next, $\St_+(A)$ is the cone generated by the nonempty compact convex set $\StN(A)$, which does not contain $\zero$;
 such a cone is closed, e.g., by Proposition~1.4.7 of Ref.~\cite{Hir-Lem-2001}.
 Finally, any nonempty compact convex set has an extreme point, and every extreme point of $\StN(A)$ is a normalized
 pure state.
\end{proof}

\begin{lemma} \label{lemma:basic_deterministic}
 A transformation $f \in \Trans(A,B)$ is a channel iff $e_B \c f = e_A$.
\end{lemma}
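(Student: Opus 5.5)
The plan is to prove the two implications separately, using only the uniqueness of the deterministic effect (Lemma~\ref{lemma:basic_discard_uniq}), Lemma~\ref{lemma:basic_normalized_states}, closure under coarse-graining and conditioned tests, and the operational definition of equality of effects.

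\emph{Only if.} Let $f$ be a channel, so $\{ f \}$ is a test from $A$ to $B$. Composing it with the measurement $\{ e_B \}$ gives a test $\{ e_B \c f \}$ from $A$ to $I$. This is a singleton measurement, so $e_B \c f$ is a deterministic effect of $A$. By Lemma~\ref{lemma:basic_discard_uniq}, $e_B \c f = e_A$.

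\emph{If.} Assume $e_B \c f = e_A$.
1. Since $f \in \Trans(A,B)$, there is a test $\{ f_i \}_{i=1}^k$ with $f_1 = f$.
2. Coarse-graining outcomes $2,\dots,k$ yields a test $\{ f, g \}$ with $g \coloneqq \sum_{i\ge 2} f_i$ (if $k=1$, $f$ is already a channel).
3. Coarse-graining again shows that $f + g$ is a channel.
4. By the ``only if'' part, $e_B \c (f+g) = e_A$. Hence $e_B \c g = e_A - e_B \c f = \zero$ as effects, i.e.\ $\craket{e_B|g|\rho} = 0$ for all $E$ and $\rho \in \StN(AE)$, using $e_{BE} = e_B \ot e_E$.
5. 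The key step is to show $g = \zero$. For any $E$, $\rho \in \StN(AE)$ and $a \in \Eff(BE)$, complete $a$ to a measurement on $BE$ and coarse-grain to get $b$ with $a + b = e_{BE}$. Then
\[
0 \le \craket{a|g|\rho} \le \craket{e_{BE}|g|\rho} = \craket{e_B|g|\rho} = 0,
\]
where the last step uses $e_E(\sigma) = 1$ on the $E$-marginal. By the operational definition of equality, $g = \zero_{A,B}$.
6. Therefore $f = f + g$ is a channel.

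I expect step 5 to be the main obstacle. It requires expressing $e_B \c g = \zero$ in terms of the extended statistics $\craket{\cdot|g|\rho}$ with ancilla $E$. It also requires the positivity bound, which is exactly the argument in the proof of Lemma~\ref{lemma:basic_normalized_states}, now applied to transformations rather than states.
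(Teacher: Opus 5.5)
Your proof is correct. The paper gives no internal argument for this lemma: it cites Lemma~5 of Ref.~\cite{Chi-Dar-Per-2010} and asserts that it holds in any OPT. You instead prove it directly from the framework's assumptions, along the standard route.

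The ``only if'' direction is uniqueness of the deterministic effect (Lemma~\ref{lemma:basic_discard_uniq}) applied to the one-outcome measurement $\{e_B\circ f\}$. The ``if'' direction completes $f$ to a binary test $\{f,g\}$ and deduces $e_B\circ g=\zero$. It then upgrades this to $g=\zero$ using the same sandwich $0\le a(\cdot)\le e(\cdot)$ that the paper uses for states in Lemma~\ref{lemma:basic_normalized_states}.

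The citation is shorter. Your version, however, shows explicitly which assumptions are needed: uniqueness of $e$, coarse-graining, sequential and parallel composition, and the ancilla-based definition of equality. In particular it does not use local discriminability. That is worth having, because the paper's framework is CDP's with local discriminability removed, and the citation leaves the reader to check that the cited lemma survives this change.

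One small fix in steps 4--5. The expression $\craket{e_B|g|\rho}$ is not well typed for $\rho\in\StN(AE)$, and the remark about ``$e_E(\sigma)=1$ on the $E$-marginal'' is not what makes the quantity vanish. The clean statement is
\[
\craket{e_{BE}|g|\rho}=\bigl((e_B\circ g)\otimes e_E\bigr)(\rho)=(e_B\circ g)\bigl((\id_A\otimes e_E)(\rho)\bigr)=0 .
\]
This holds because $e_B\circ g=\zero$ as an effect of $A$ and $(\id_A\otimes e_E)(\rho)$ is a state of $A$. Equivalently, apply the operational definition of $e_B\circ g=\zero$ with ancilla effect $e_E$. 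With this, your bound $0\le\craket{a|g|\rho}\le\craket{e_{BE}|g|\rho}=0$ and the conclusion $f=f+g$ go through as written.
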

\begin{proof}
 This is Lemma~5 of Ref.~\cite{Chi-Dar-Per-2010}, which holds in any OPT and hence applies here.
\end{proof}

\begin{lemma} \label{lemma:basic_unitary_pure}
 Reversible channels map normalized pure states to normalized pure states.
\end{lemma}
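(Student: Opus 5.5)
Let $U \in \Trans(A,B)$ be reversible with inverse channel $U^{-1} \in \Trans(B,A)$, and let $\rho \in \StN(A)$ be pure. I will first show that $U(\rho)$ is normalized. Since $U$ is a channel, Lemma~\ref{lemma:basic_deterministic} gives $e_B \c U = e_A$. Hence $e_B(U(\rho)) = e_A(\rho) = 1$, so $U(\rho) \in \StN(B)$. The same argument applied to $U^{-1}$ shows that $U^{-1}$ maps $\StN(B)$ into $\StN(A)$.

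Next I will show purity by contradiction with the affine structure. Suppose $U(\rho) = p\sigma + (1-p)\tau$ with $\sigma, \tau \in \StN(B)$ and $0<p<1$. Apply $U^{-1}$. Transformations preserve convex combinations, which was shown just after the convexity lemma. Together with $U^{-1} \c U = \id_A$, this gives
\[
 \rho = U^{-1}(U(\rho)) = p\, U^{-1}(\sigma) + (1-p)\, U^{-1}(\tau).
\]
Here $U^{-1}(\sigma), U^{-1}(\tau) \in \StN(A)$ by the first step. Purity of $\rho$ then forces $U^{-1}(\sigma) = U^{-1}(\tau) = \rho$. Applying $U$ again and using $U \c U^{-1} = \id_B$ gives $\sigma = U(\rho) = \tau$. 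Hence $U(\rho)$ is pure.

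The only point needing some care is that composition of states with transformations is compatible with the identities $U^{-1}\c U = \id_A$ as equalities of transformations. This is immediate from associativity: $(U^{-1}\c U)\c\rho = U^{-1}\c(U\c\rho)$, and $\id_A \c \rho = \rho$. I do not expect a genuine obstacle. This is the analogue of the corresponding statement in Ref.~\cite{Chi-Dar-Per-2010}.
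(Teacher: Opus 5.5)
Your proof is correct and follows essentially the same route as the paper: normalization of $U(\rho)$ via Lemma~\ref{lemma:basic_deterministic}, and purity by pulling a convex decomposition back through $U^{-1}$, using that transformations preserve convex combinations. Your version just spells out details the paper leaves implicit, namely that $U^{-1}$ also preserves normalization and that applying $U$ again gives $\sigma = \tau = U(\rho)$.
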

\begin{proof}
 Let $U \in \Trans(A,B)$ be reversible and let $\psi \in \StN(A)$ be pure.
 By Lemma~\ref{lemma:basic_deterministic}, $e_B(U(\psi)) = e_A(\psi) = 1$, so $U(\psi)$ is normalized.
 If $U(\psi) = p \rho + (1 - p) \sigma$ with distinct normalized $\rho$ and $\sigma$ and $0 < p < 1$,
 then applying $U^{-1}$ gives a nontrivial convex decomposition of $\psi$, a contradiction.
 Thus $U(\psi)$ is pure.
\end{proof}

\begin{lemma} \label{lemma:basic_internal_exists}
 There exists a normalized state of $A$ that is an interior point of the state cone $\St_+(A)$ in $\St_\Real(A)$.
\end{lemma}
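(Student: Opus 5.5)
We need a normalized state lying in the interior of $\St_+(A)$. The plan is to take a finite spanning family of normalized states, average it, and show the average is interior by showing its refinement set spans the whole space.

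\emph{Choice of the candidate state.} Since $\St_\Real(A)$ is spanned by $\St(A)$, and by Eq.~\eqref{eq:basic_StA} every state is a probability multiple of a normalized state, $\StN(A)$ spans $\St_\Real(A)$. Because $d_A$ is finite, we can pick normalized states $\rho_1,\dots,\rho_{d_A}$ forming a basis of $\St_\Real(A)$. Set
\[ \omega \coloneqq \frac{1}{d_A}\sum_{i=1}^{d_A}\rho_i . \]
By convexity of $\StN(A)$, $\omega$ is a normalized state.

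\emph{Interiority.} The cone $\St_+(A)$ contains the set
\[ K \coloneqq \Bigl\{\sum_i c_i\rho_i \Bigm| c_i\ge 0\Bigr\}. \]
Since the $\rho_i$ form a basis, the coordinate map $x\mapsto (c_i)_i$ is a linear isomorphism $\St_\Real(A)\to\Real^{d_A}$. It is therefore continuous, and $K$ is the preimage of the nonnegative orthant. The vector $\omega$ has coordinates $c_i = 1/d_A > 0$, which lie in the open orthant. The preimage of this open orthant is an open neighborhood of $\omega$ contained in $K\subseteq\St_+(A)$, so $\omega$ is an interior point of $\St_+(A)$ in $\St_\Real(A)$.

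As a side remark, $\omega$ is then also internal in the paper's sense. Each $\rho_i/d_A \le \omega$, because $\omega - \rho_i/d_A$ is a nonnegative combination of states. Hence $D_\omega$ contains a basis and spans $\St_\Real(A)$.

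There is no real obstacle here. The only point needing care is that a spanning set of normalized states exists. This follows from the definition of $\St_\Real(A)$ as the span of $\St(A)$, together with the fact that $\zero$ contributes nothing to the span. Hence a basis can be extracted from $\StN(A)$.
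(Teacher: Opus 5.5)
Your proof is correct and is essentially the paper's argument. The paper also averages a basis of normalized states, then exhibits an open neighborhood of positive combinations of the $\rho_i$: it uses the coordinate box $|\lambda_i - 1/d_A| < 1/d_A$, whereas you use the whole open orthant, which makes no difference.
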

\begin{proof}
 Since $\St_\Real(A) = \Span \StN(A)$ is $d_A$-dimensional, choose a basis $\rho_1,\dots,\rho_{d_A}$ of $\St_\Real(A)$
 consisting of elements of $\StN(A)$.
 Put $\sigma \coloneqq \sum_{i=1}^{d_A} \rho_i / d_A$.
 Then $e_A(\sigma) = 1$, so $\sigma$ is normalized.
 Put $\cU \coloneqq \{ \sum_{i=1}^{d_A} \lambda_i \rho_i \mid |\lambda_i - 1/d_A| < 1/d_A ~(\forall i \in \{1,\dots,d_A\}) \}$.
 Then $\cU$ is an open neighborhood of $\sigma$, and each element of $\cU$ has the form
 $\sum_i \lambda_i \rho_i$ $~(\lambda_i > 0)$, so $\cU \subseteq \St_+(A)$.
 Thus $\sigma$ is an interior point.
\end{proof}

\begin{lemma} \label{lemma:basic_internal}
 A normalized state of $A$ is internal iff it is an interior point of $\St_+(A)$.
\end{lemma}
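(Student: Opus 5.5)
We must show that a normalized state $\rho \in \StN(A)$ is internal, meaning $\Span D_\rho = \St_\Real(A)$, if and only if $\rho$ lies in the topological interior of $\St_+(A)$ within $\St_\Real(A)$. The plan is to prove the two implications separately. Both use only the finite dimensionality of $\St_\Real(A)$ and the fact that $\St_+(A)$ is a closed convex cone (Lemma~\ref{lemma:basic_compact_closed}).

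\emph{Interior $\Rightarrow$ internal.} Suppose $\rho$ is an interior point of $\St_+(A)$, and fix any norm on $\St_\Real(A)$.
\begin{itemize}
 \item Choose $\epsilon>0$ such that the closed ball of radius $\epsilon$ around $\rho$ lies in $\St_+(A)$.
 \item Fix a basis $x_1,\dots,x_{d_A}$ of $\St_\Real(A)$, scaled so that $\|x_i\| \le \epsilon$.
 \item Then $\rho \pm x_i \in \St_+(A)$ for each $i$.
 \item Set $\sigma_i^\pm \coloneqq (\rho \pm x_i)/2$. These lie in $\St_+(A)$, and $\rho - \sigma_i^\pm = (\rho \mp x_i)/2 \in \St_+(A)$, so $\sigma_i^\pm \le \rho$ and hence $\sigma_i^\pm \in D_\rho$.
 \item Since $x_i = \sigma_i^+ - \sigma_i^-$, the span of $D_\rho$ contains every basis vector, so $\Span D_\rho = \St_\Real(A)$.
\end{itemize}

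\emph{Internal $\Rightarrow$ interior.} Suppose $\Span D_\rho = \St_\Real(A)$.
\begin{itemize}
 \item Choose $\sigma_1,\dots,\sigma_{d_A} \in D_\rho$ forming a basis of $\St_\Real(A)$.
 \item Note that $D_\rho$ is convex and contains $\zero$, so $t\sigma \in D_\rho$ for every $\sigma \in D_\rho$ and $t\in[0,1]$.
 \item Consider $\tau \coloneqq \frac{1}{2d_A}\sum_i \sigma_i$. This satisfies $\tau \le \rho$, and by the argument of Lemma~\ref{lemma:basic_internal_exists}, $\tau$ is an interior point of $\St_+(A)$: the set $\{\sum_i \lambda_i \sigma_i \mid \lambda_i>0\}$ is an open neighbourhood of $\tau$ contained in $\St_+(A)$.
 \item Write $\rho = \tau + (\rho - \tau)$, where $\rho - \tau \in \St_+(A)$. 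Let $\cU$ be an open neighbourhood of $\tau$ inside $\St_+(A)$. Then $\cU + (\rho-\tau)$ is an open neighbourhood of $\rho$, and it lies in $\St_+(A)$ because the cone is closed under addition.
\end{itemize}
Hence $\rho$ is an interior point.

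There is no genuine obstacle. The only point needing care is the step just above: interior plus cone element stays interior, which follows from the cone being closed under addition. Note also that no normalization issues arise, since $D_\rho$ is defined inside the cone $\St_+(A)$ rather than the state set.
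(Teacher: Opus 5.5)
Your proof is correct. For interior $\Rightarrow$ internal you use essentially the paper's idea: interiority leaves room to subtract small vectors from $\rho$. The paper phrases this as ``$\varepsilon\tau\in D_\rho$ for every $\tau\in\St_+(A)$, and $\St_+(A)$ spans''. You instead take differences $\sigma_i^+-\sigma_i^-$ built from a small basis, which is equivalent.

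The genuine difference is in internal $\Rightarrow$ interior. The paper argues by contradiction. At a boundary point it takes a supporting functional $f\in\St_+(A)^*$ with $f(\rho)=0$, and notes that $0\le f(\sigma)\le f(\rho)=0$ forces $f$ to vanish on the spanning set $D_\rho$. You argue constructively instead. $\rho$ dominates $\tau=\frac{1}{2d_A}\sum_i\sigma_i$, which is interior because the $\sigma_i\in D_\rho$ form a basis. Then $\cU+(\rho-\tau)\subseteq\St_+(A)$ is an explicit neighbourhood of $\rho$.

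The paper's route is shorter and uses the dual-cone language that recurs later. Yours avoids the separation theorem altogether. It also uses only closure of $\St_+(A)$ under addition and positive scaling, not closedness of the cone. So your opening remark that both directions rely on $\St_+(A)$ being closed overstates what you actually use.

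A minor point: the factor $\frac12$ in $\tau$ is unnecessary. $\frac{1}{d_A}\sum_i\sigma_i$ already lies in $D_\rho$ by convexity and is interior by the same argument.
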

\begin{proof}
 Suppose first that $\rho$ is internal but not an interior point.
 By Lemma~\ref{lemma:basic_compact_closed}, $\St_+(A)$ is a closed convex cone.
 Since $\rho \in \St_+(A)$ and $\rho$ is not an interior point, $\rho$ lies on the boundary of $\St_+(A)$.
 Hence the supporting hyperplane theorem gives a nonzero functional
 $f \in \St_+(A)^*$ such that $f(\rho) = 0$.
 For every $\sigma \in D_\rho$, one has $0 \le f(\sigma) \le f(\rho) = 0$, and hence $f(\sigma) = 0$.
 Since $D_\rho$ spans $\St_\Real(A)$, this implies $f = \zero$, a contradiction.

 Conversely, suppose that $\rho$ is an interior point of $\St_+(A)$.
 Then, for every $\tau \in \St_+(A)$, there exists a sufficiently small $\varepsilon > 0$ such that
 $\rho - \varepsilon \tau \in \St_+(A)$.
 Equivalently, $\varepsilon \tau \in D_\rho$.
 Hence $\St_+(A) \subseteq \Span D_\rho$, and taking spans gives $\Span D_\rho = \St_\Real(A)$.
 Thus $\rho$ is internal.
\end{proof}

\begin{lemma} \label{lemma:basic_internal_refinement}
 For every internal normalized state $\rho \in \StN(A)$ and every $x \in \St_+(A)$, there exists $\varepsilon > 0$
 such that $\varepsilon x \in D_\rho$.
\end{lemma}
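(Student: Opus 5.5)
By Lemma~\ref{lemma:basic_internal}, an internal normalized state $\rho$ is an interior point of $\St_+(A)$ in $\St_\Real(A)$. The plan is to use this interior property directly, together with the definition $D_\rho = \{ \sigma \in \St_+(A) \mid \rho - \sigma \in \St_+(A) \}$.

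Fix $x \in \St_+(A)$. If $x = \zero$, any $\varepsilon > 0$ works, since $\zero \in \St_+(A)$ and $\rho - \zero = \rho \in \St_+(A)$. Otherwise, interiority supplies an open neighborhood $\cU$ of $\rho$ with $\cU \subseteq \St_+(A)$. Because $\varepsilon \mapsto \rho - \varepsilon x$ is continuous, $\rho - \varepsilon x \in \cU \subseteq \St_+(A)$ for all sufficiently small $\varepsilon > 0$. Since $\St_+(A)$ is a cone, we also have $\varepsilon x \in \St_+(A)$. Together these give $\varepsilon x \le \rho$, that is, $\varepsilon x \in D_\rho$.

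This is essentially the argument already used in the converse direction of the proof of Lemma~\ref{lemma:basic_internal}, so there is no real obstacle. The only point to check is that "interior" refers to the topology of the finite-dimensional space $\St_\Real(A)$, where every norm gives the same topology. If desired, one can make $\varepsilon$ explicit: choose $r > 0$ with the open ball of radius $r$ about $\rho$ contained in $\St_+(A)$, and take $\varepsilon \coloneqq r / (2\|x\|)$ for $x \ne \zero$.
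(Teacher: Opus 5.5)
Your proposal is correct and matches the paper's proof. The paper likewise treats $x = \zero$ trivially, then uses Lemma~\ref{lemma:basic_internal} to make $\rho$ an interior point of $\St_+(A)$, so that $\rho - \varepsilon x \in \St_+(A)$ for small $\varepsilon > 0$; your explicit $\varepsilon = r/(2\|x\|)$ and the remark that $\varepsilon x \in \St_+(A)$ by the cone property spell out details the paper leaves implicit.
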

This is essentially a special case of Lemma~16 of Ref.~\cite{Chi-Dar-Per-2010}.
\begin{proof}
 If $x = \zero$, the claim is trivial.
 Otherwise, by Lemma~\ref{lemma:basic_internal}, $\rho$ is an interior point of $\St_+(A)$, and hence
  $\rho-\varepsilon x \in \St_+(A)$ for sufficiently small $\varepsilon > 0$.
\end{proof}

\section{Postulates} \label{sec:postulate}

We consider OPTs satisfying the following two postulates.

\subsection{Local equivalence} \label{subsec:postulate_LE}

The first postulate states that channels from $A$ to $B$ can be identified by states of $A$ and effects of $B$.
\begin{postulate}[Local equivalence]
 For arbitrary systems $A$ and $B$ and channels $\Lambda$ and $\cE$ from $A$ to $B$, if
 $a(\Lambda(\rho)) = a(\cE(\rho))$ for all $\rho \in \StN(A)$ and $a \in \Eff(B)$, then $\Lambda = \cE$.
 Equivalently,
 \begin{alignat}{1}
  \adjustbox{valign=c}{\includegraphics[alt={a(\Lambda(\rho)) = a(\cE(\rho))}]{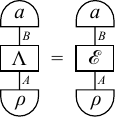}}
  \quad (\forall \rho \in \StN(A), ~a \in \Eff(B))
  &\qquad\Rightarrow\qquad
  \adjustbox{valign=c}{\includegraphics[alt={\Lambda = \cE}]{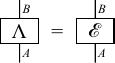}}
  \label{eq:process_eq_local2}\raisebox{-1.0em}{.}
 \end{alignat}
\end{postulate}
This postulate corresponds to what Ref.~\cite{Nak-2020} calls local equality.
We shall use the following equivalent formulation.
\begin{lemma} \label{lemma:LE_process}
 In an OPT, the following two properties are equivalent.
 \begin{enumerate}
  \item The theory satisfies local equivalence.
  \item For arbitrary systems $A$ and $B$ and transformations $f$ and $g$ from $A$ to $B$, if
        $a(f(\rho)) = a(g(\rho))$ for all $\rho \in \StN(A)$ and $a \in \Eff(B)$, then $f = g$.
 \end{enumerate}
\end{lemma}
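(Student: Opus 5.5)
The direction (2)$\Rightarrow$(1) is immediate, since channels are transformations. For (1)$\Rightarrow$(2), the plan is to embed two arbitrary transformations $f,g\in\Trans(A,B)$ with identical local statistics into two channels to which local equivalence applies. I would use a conditioned test together with a classical flag system: a channel that records which branch occurred on an ancilla.

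\textbf{Step 1: complete $f$ to a test.} Since $f$ is a transformation, it belongs to some test $\{f, f_2,\dots,f_k\}$ from $A$ to $B$. Coarse-graining the remaining outcomes gives a test $\{f, f'\}$, where $f' \coloneqq \sum_{i\ge 2} f_i$, and $f+f'$ is a channel. Do the same for $g$ to obtain a test $\{g,g'\}$.

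\textbf{Step 2: build the channels with a flag.} The natural approach needs a system $C$ with two perfectly distinguishable normalized states $\omega_1,\omega_2$, with effects $c_1,c_2$ satisfying $c_i(\omega_j)=\delta_{ij}$. Such a system exists by the standing assumption of informational dimension at least $2$. Preparing $\omega_i$ is a singleton test $I\to C$, so the conditioned and parallel constructions give tests from $A$ to $BC$. Coarse-graining then yields the channels
\[
\Lambda \coloneqq f\ot\omega_1 + f'\ot\omega_2,\qquad \cE \coloneqq g\ot\omega_1 + g'\ot\omega_2 .
\]
We do not yet know that $f'$ and $g'$ have the same local statistics. However, $e_B\c(f+f') = e_A = e_B\c(g+g')$ by Lemma~\ref{lemma:basic_deterministic}, so for effects of the form $e_B$ they already agree. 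To fix the remaining discrepancy, I would replace $f'$ in $\Lambda$ and $g'$ in $\cE$ by a common object: the map $\rho\mapsto (e_A - e_B\c f)(\rho)\,\sigma_0$ for a fixed $\sigma_0\in\StN(B)$. This is realized by composing $f'$ with the channel $\sigma_0\c e_B$ (discard, then prepare $\sigma_0$), using $e_B\c f' = e_A - e_B\c f$. The same construction applied to $g'$ gives an identical map, because $e_B\c g = e_B\c f$ is forced by the hypothesis applied with $a=e_B$.

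\textbf{Step 3: apply local equivalence and extract $f=g$.} The two channels are now
\[
\Lambda = f\ot\omega_1 + (\sigma_0\c e_B\c f')\ot\omega_2,\qquad \cE = g\ot\omega_1 + (\sigma_0\c e_B\c g')\ot\omega_2 .
\]
Their second terms are identical on every input, not merely in local statistics. I then need to check that $\Lambda$ and $\cE$ agree on all local statistics, that is, on $d(\Lambda(\rho))$ for every $d\in\Eff(BC)$ and $\rho\in\StN(A)$. This is the main obstacle, because $d$ is a possibly entangled effect on $BC$, whereas the hypothesis only covers effects on $B$. The remedy is to observe that $d\c(\id_B\ot\omega_1)$ is itself an element of $\Eff(B)$. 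Hence $d((f\ot\omega_1)(\rho)) = [d\c(\id_B\ot\omega_1)](f(\rho))$, and the hypothesis applies to this term; the second term is literally the same for $\Lambda$ and $\cE$. Local equivalence then gives $\Lambda=\cE$ as transformations, including under all ancillas $E$.

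Finally, compose both channels with $\id_B\ot c_1$. Because $c_1(\omega_1)=1$ and $c_1(\omega_2)=0$, this yields $(\id_B\ot c_1)\c\Lambda = f$ and $(\id_B\ot c_1)\c\cE = g$, so $f=g$.
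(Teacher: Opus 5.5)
Your argument is correct and is essentially the paper's. You complete $f,g$ to binary tests, replace the complementary branch by the discard-and-reprepare channel $\sigma_0\circ e_B$ so that it becomes identical for both channels, and use $d\circ(\id_B\otimes\omega_1)\in\Eff(B)$ to check local agreement; the paper's analogue of that last step is $a\circ U\in\Eff(B)$.

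The only difference is your classical flag $C$. The paper builds $F=f+R\circ h$ and $G=g+R\circ q$ with $R=\tau\circ e_B$ directly as channels from $A$ to $B$, and then cancels the common term $R\circ h=R\circ q$ in $\Trans_\Real(A,B)$. So the flag, and with it the appeal to a system with two perfectly distinguishable states, is unnecessary, though harmless.
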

\begin{proof}
 Since channels are special transformations, $(2) \Rightarrow (1)$ is immediate.
 We prove $(1) \Rightarrow (2)$.
 Complete $f$ and $g$ to binary tests $\{ f,h \}$ and $\{ g,q \}$ by coarse-graining suitable tests;
 then $f + h$ and $g + q$ are channels.
 Choose any $\tau \in \StN(B)$ and put $R \coloneqq \tau \c e_B \in \Trans(B,B)$, which is a channel.
 For $U = \id_B$ or $U = R$, define the following channels from $A$ to $B$:
 \begin{alignat}{1}
  \includegraphics[scale=1.0,alt={
  F \coloneqq U \c f + R \c h, \quad G \coloneqq U \c g + R \c q
  }]{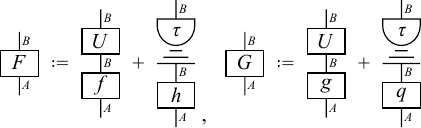}
  \label{eq:LE_process_tf_tg}.
 \end{alignat}
 The transformations $F$ and $G$ are channels because they are obtained by conditioning on the outcomes of the
 binary tests $\{ f,h \}$ and $\{ g,q \}$, respectively, and then coarse-graining all resulting outcomes.
 For every $\rho \in \StN(A)$ and $a \in \Eff(B)$, the assumption gives $e_B(f(\rho)) = e_B(g(\rho))$,
 and hence $e_B(h(\rho)) = e_B(q(\rho))$.
 Therefore
 \begin{alignat}{1}
  a(R(h(\rho))) &= a(\tau) e_B(h(\rho)) = a(\tau) e_B(q(\rho)) = a(R(q(\rho))).
 \end{alignat}
 Since $a \c U \in \Eff(B)$, the assumption also gives $a(U(f(\rho))) = a(U(g(\rho)))$.
 Hence
 \begin{alignat}{1}
  a(F(\rho)) &= a(U(f(\rho))) + a(R(h(\rho))) = a(U(g(\rho))) + a(R(q(\rho))) = a(G(\rho)).
 \end{alignat}
 Local equivalence applied to the channels $F$ and $G$ gives $F = G$.
 Taking $U = R$ gives $R \c f + R \c h = R \c g + R \c q$; since $R \c f = R \c g$, it follows that $R \c h = R \c q$.
 Taking $U = \id_B$ gives $f + R \c h = g + R \c q$, and hence $f = g$.
\end{proof}

In the standard OPT framework, equality of transformations is defined by statistics involving arbitrary ancillary
systems.
If local equivalence holds, Lemma~\ref{lemma:LE_process} shows that no ancillary system is needed
to identify a transformation.
It is known that the following property, called local tomography, is equivalent to local equivalence;
see Lemma~34 of Ref.~\cite{Nak-2020}.
Note that, in Ref.~\cite{Nak-2020}, property~(2) of Lemma~\ref{lemma:LE_process} is called local equivalence;
Lemma~\ref{lemma:LE_process} shows that it is equivalent to the definition adopted here.
\begin{define}[Local tomography]
 A theory satisfies \termdef{local tomography} if, for every pair of systems $A$ and $B$,
 any two normalized states $\rho$ and $\sigma$ of $\AB$ that satisfy
 $(a \ot b)(\rho) = (a \ot b)(\sigma)$ for all $a \in \Eff(A)$ and $b \in \Eff(B)$
 must satisfy $\rho = \sigma$.
\end{define}
Local tomography can also be expressed as the dimensional identity~\cite{Chi-Dar-Per-2010}
\begin{alignat}{1}
 d_\AB &= d_A d_B
 \label{eq:dimension}.
\end{alignat}
The contrapositive formulation, namely that for any two distinct normalized states
$\rho,\sigma \in \StN(AB)$, there exist effects $a \in \Eff(A)$ and $b \in \Eff(B)$ such that
$(a \ot b)(\rho) \neq (a \ot b)(\sigma)$, is also called \termdef{local discriminability}
\cite{Chi-Dar-Per-2010}.

\begin{lemma} \label{lemma:basic_product_span}
 For every $A$ and $B$, $\St_\Real(AB)$ is spanned by product states $\rho \ot \sigma$ with $\rho \in \StN(A)$
 and $\sigma \in \StN(B)$.
\end{lemma}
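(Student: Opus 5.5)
The plan is a dimension count: show that product states span a subspace of dimension at least $d_A d_B$, and that $\St_\Real(AB)$ has dimension at most $d_A d_B$. The upper bound is where local equivalence enters. I use it through the equivalent property of local tomography (Lemma~34 of Ref.~\cite{Nak-2020}, recalled above), and I assume the statement is meant for theories satisfying local equivalence.

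\emph{Lower bound.} Fix a basis $\rho_1,\dots,\rho_{d_A}$ of $\St_\Real(A)$ consisting of normalized states, and a basis $a_1,\dots,a_{d_A}$ of $\Eff_\Real(A)$ consisting of effects; this is possible since these spaces are spanned by $\StN(A)$ and $\Eff(A)$. Choose $\sigma_1,\dots,\sigma_{d_B}$ and $b_1,\dots,b_{d_B}$ for $B$ in the same way. Since $\Eff_\Real(A)$ is identified with $\St_\Real(A)^*$, the matrix $M^A_{ik} \coloneqq a_i(\rho_k)$ is invertible, and likewise $M^B_{jl} \coloneqq b_j(\sigma_l)$. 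By the interchange law and the fact that composition of probabilities is multiplication,
\begin{alignat}{1}
 (a_i \ot b_j)(\rho_k \ot \sigma_l) &= a_i(\rho_k)\, b_j(\sigma_l).
\end{alignat}
The pairing matrix between the $d_A d_B$ product states $\rho_k \ot \sigma_l$ and the product effects $a_i \ot b_j$ is therefore the Kronecker product $M^A \ot M^B$, which is invertible. Hence the product states $\rho_k \ot \sigma_l$ are linearly independent in $\St_\Real(AB)$. So their span $W$ satisfies $\dim W \ge d_A d_B$.

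\emph{Upper bound.} I claim that the product effects $a \ot b$ separate all of $\St_\Real(AB)$. Take $x \in \St_\Real(AB)$ with $(a \ot b)(x) = 0$ for all $a \in \Eff(A)$ and $b \in \Eff(B)$.
\begin{enumerate}
 \item Since $\St_\Real(AB)$ is spanned by the cone $\St_+(AB)$, write $x = p\rho - q\sigma$ with $p,q \ge 0$ and $\rho,\sigma \in \StN(AB)$.
 \item Applying $e_A \ot e_B = e_{AB}$ gives $p = q$.
 \item If $p > 0$, then $\rho$ and $\sigma$ have equal statistics on all product effects, so local tomography gives $\rho = \sigma$, hence $x = \zero$.
\end{enumerate}
Thus $x$ is annihilated by the span $P$ of product effects only if $x = \zero$. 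Therefore $\dim \St_\Real(AB) \le \dim P \le d_A d_B$, since $P$ is spanned by the $d_A d_B$ elements $a_i \ot b_j$ (bilinearity of $\ot$ on $\Trans_\Real$).

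\emph{Conclusion.} Combining the bounds, $W \subseteq \St_\Real(AB)$ has dimension at least $\dim \St_\Real(AB)$, so $W = \St_\Real(AB)$. This also recovers Eq.~\eqref{eq:dimension}.

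The only nontrivial input is local tomography. If one wanted to avoid citing Ref.~\cite{Nak-2020}, this would be the main obstacle. In that case I would derive it from Lemma~\ref{lemma:LE_process} directly: given a state of $AB$, build a transformation from $A$ to $B$ by applying an effect on one factor, and compare two states via these induced transformations. I expect this step to require some care with ancillary systems.
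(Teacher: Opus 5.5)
Your proof is correct and follows the paper's argument. Both show that the $d_Ad_B$ product states are linearly independent by pairing them with product effects (the paper uses dual bases, you use an invertible Kronecker pairing matrix), and both combine this with $d_{AB}=d_Ad_B$ from local tomography. The only difference is that the paper cites the dimension identity, Eq.~\eqref{eq:dimension}, directly. You instead derive the needed bound $d_{AB}\le d_Ad_B$ from local tomography by showing that product effects separate $\St_\Real(AB)$, which is a valid, self-contained elaboration.
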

\begin{proof}
 Since $\St_\Real(A) = \Span \StN(A)$ and $\St_\Real(B) = \Span \StN(B)$, choose bases
 $\{ \rho_i \in \StN(A) \}_{i=1}^{d_A}$ and $\{ \sigma_j \in \StN(B) \}_{j=1}^{d_B}$, and dual bases
 $\{ a_i \in \Eff_\Real(A) \}_{i=1}^{d_A}$ and $\{ b_j \in \Eff_\Real(B) \}_{j=1}^{d_B}$.
 If $\sum_{i,j} c_{i,j} \rho_i \ot \sigma_j = \zero$, then, for every $k$ and $l$,
 $c_{k,l} = (a_k \ot b_l) (\sum_{i,j} c_{i,j} \rho_i \ot \sigma_j) = 0$.
 Thus these product states are linearly independent.
 By local tomography, $d_{AB} = d_A d_B$, so they form a basis of $\St_\Real(AB)$.
\end{proof}

\begin{lemma} \label{lemma:basic_trans_linear}
 The map sending $f \in \Trans_\Real(A,B)$ to the linear map
 $\tf \colon \St_\Real(A) \ni x \mapsto f(x) \in \St_\Real(B)$ is bijective.
\end{lemma}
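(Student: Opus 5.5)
The map $f \mapsto \tf$ is well defined and linear, because each $f \in \Trans_\Real(A,B)$ is a finite real combination $\sum_i c_i f_i$ of transformations. Each $f_i$ acts on $\St(A)$ as the restriction of its unique linear extension $(f_i)_\mathrm{lin}$. So $\tf \coloneqq \sum_i c_i (f_i)_\mathrm{lin}$ is a linear map $\St_\Real(A) \to \St_\Real(B)$. It does not depend on the representative: if two combinations agree in $\Trans_\Real(A,B)$, they agree on all statistics, in particular with trivial ancilla $E = I$. Equality of states then gives the same output on every $\rho \in \StN(A)$, and hence on $\St_\Real(A) = \Span \StN(A)$. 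Linearity of $f \mapsto \tf$ is immediate from the construction.

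\emph{Injectivity.} Suppose $\tf = \tg$. Writing $f - g = \sum_i c_i f_i$, I would separate the positive and negative coefficients to get $f - g = p F - q G$, where $p,q \ge 0$ and $F, G \in \Trans_+(A,B)$. After rescaling, $F$ and $G$ can be taken to be single transformations, since $\Trans(A,B)$ is convex and closed under probability multiples. The hypothesis $\tf = \tg$ gives $p\,a(F(\rho)) = q\,a(G(\rho))$ for all $\rho \in \StN(A)$ and $a \in \Eff(B)$. If $p = q$, Lemma~\ref{lemma:LE_process} yields $F = G$, hence $f = g$. If $p \neq q$, say $p > q$ with $p > 0$, then $a(F(\rho)) = a((q/p) G(\rho))$, and $(q/p)G$ is again a transformation. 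Lemma~\ref{lemma:LE_process} gives $F = (q/p)G$, so $pF - qG = \zero$. The case $p = 0$ is trivial. Hence the kernel is zero. The step that needs care is justifying that a positive combination of transformations can be rescaled into a single transformation. This holds because for any finite family $\{f_i\}$ and weights $\lambda_i \ge 0$ summing to one, $\sum_i \lambda_i f_i$ is a transformation: take the probability test $\{\lambda_i\}$, condition on it with tests containing $f_i$, and coarse-grain.

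\emph{Surjectivity.} This is the main obstacle. I would settle it by a dimension count rather than by constructing preimages. Injectivity gives $\dim \Trans_\Real(A,B) \le d_A d_B$. For the reverse inequality, I would exhibit $d_A d_B$ transformations whose linear maps are independent. Choose bases $\{\rho_j\} \subset \StN(B)$ and $\{a_i\} \subset \Eff(A)$ spanning $\Eff_\Real(A)$; the effects span because $\Eff_\Real(A)$ is their real span, and it has dimension $d_A$. The measure-and-prepare transformations $\rho_j \c a_i \in \Trans(A,B)$ have linear maps $x \mapsto a_i(x)\rho_j$. These are the rank-one maps forming a basis of $\mathrm{Hom}(\St_\Real(A), \St_\Real(B)) \cong \St_\Real(A)^* \ot \St_\Real(B)$, using the identification $\Eff_\Real(A) = \St_\Real(A)^*$. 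Thus the image contains a basis of the full space of linear maps, so the map is surjective, and together with injectivity it is bijective.
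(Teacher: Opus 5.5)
Your proof is correct. Your surjectivity argument is the paper's: the measure-and-prepare transformations $\rho_j \c a_i$ realize the rank-one maps $x \mapsto a_i(x)\rho_j$, and these span all linear maps. The preliminary dimension count is unnecessary, since exhibiting a basis inside the image already gives surjectivity.

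The genuine difference is in injectivity.

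\textbf{The paper's route.} It takes $h = f - g$ with $h(x) = \zero$ on $\St_\Real(A)$ and removes the ancilla directly. By Lemma~\ref{lemma:basic_product_span}, every $\omega \in \StN(AE)$ is a real combination $\sum_i c_i \rho_i \ot \eta_i$ of product states, so $(h \ot \id_E)(\omega) = \sum_i c_i h(\rho_i) \ot \eta_i = \zero$. This argument is purely linear-algebraic and handles an arbitrary $h \in \Trans_\Real(A,B)$ at once. However, it rests on local tomography in the form $d_{AE} = d_A d_E$. The paper imports this from the literature: the equivalence of local tomography with local equivalence, and the dimension identity.

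\textbf{Your route.} You reduce instead to Lemma~\ref{lemma:LE_process}, which the paper proves directly from the postulate. You write $f - g = pF - qG$ with $p, q \ge 0$ and $F, G \in \Trans(A,B)$, using convexity and closure under probability multiples. The hypothesis then says that the genuine transformations $F$ and $(q/p)G$ agree locally (or $F$ and $G$ when $p = q$). Lemma~\ref{lemma:LE_process} upgrades this to full operational equality.

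\textbf{Comparison.} Your route is more self-contained: it never needs product states to span the composite. The cost is the positive/negative splitting and a small case analysis. That splitting is needed precisely because Lemma~\ref{lemma:LE_process} is formulated for transformations, not for arbitrary elements of $\Trans_\Real(A,B)$. In effect, you have shown that Lemma~\ref{lemma:LE_process} extends to $\Trans_\Real(A,B)$. One small point: your remark that ``the case $p = 0$ is trivial'' is really covered by the symmetric case $q > p$.
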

\begin{proof}
 We first prove injectivity.
 Let $f, g \in \Trans_\Real(A,B)$ induce the same linear map
 from $\St_\Real(A)$ to $\St_\Real(B)$, and put $h \coloneqq f - g \in \Trans_\Real(A,B)$.
 Then $h(x) = \zero$ for every $x \in \St_\Real(A)$.
 It is enough to prove $h = \zero$, namely $a((h \ot \id_E)(\omega)) = 0$ for every system $E$ and
 every $\omega \in \StN(AE)$ and $a \in \Eff(BE)$.
 Fix such $E$, $\omega$, and $a$.
 By Lemma~\ref{lemma:basic_product_span}, applied to $A$ and $E$, write
 $\omega = \sum_i c_i \rho_i \ot \eta_i$ with $c_i \in \Real$, $\rho_i \in \StN(A)$,
 and $\eta_i \in \StN(E)$.
 Then $(h \ot \id_E)(\omega) = \sum_i c_i h(\rho_i) \ot \eta_i = \zero$,
 because $h(\rho_i) = \zero$ for every $i$.
 Hence $a((h \ot \id_E)(\omega)) = 0$.
 Thus $h = \zero$, and injectivity follows.

 For surjectivity, every rank-one linear map from $\St_\Real(A)$ to $\St_\Real(B)$ has the form
 $x \mapsto a(x) \tau$ with $a \in \Eff_\Real(A) = \St_\Real(A)^*$ and $\tau \in \St_\Real(B)$,
 and is represented by $\tau \c a \in \Trans_\Real(A,B)$.
 Rank-one maps span the space of all linear maps, so the map is surjective.
\end{proof}
By this bijection, we identify $f$ with $\tf$ below.

Consequently, the map $\ot \colon \StN(A) \times \StN(B) \to \StN(\AB)$ extends uniquely to a bilinear map
$\ot \colon \St_\Real(A) \times \St_\Real(B) \to \St_\Real(\AB)$.
Similarly, for $f \in \Trans_\Real(A,B)$ and $g \in \Trans_\Real(A',B')$, the parallel composition $f \ot g$ is
the unique linear map satisfying
$(f \ot g)(\rho \ot \sigma) = f(\rho) \ot g(\sigma)$
$~(\forall \rho \in \StN(A), ~\sigma \in \StN(B))$.
Mathematically, the structure of transformations under composition and parallel composition can be regarded
as a subcategory of the strict monoidal category of finite-dimensional real vector spaces and linear maps.

\begin{lemma} \label{lemma:basic_pure_internal}
 The parallel composition $\psi \ot \phi$ of normalized pure states $\psi \in \StN(A)$ and $\phi \in \StN(B)$ is pure.
 Also, the parallel composition $\rho \ot \sigma$ of internal normalized states $\rho \in \StN(A)$ and
  $\sigma \in \StN(B)$ is
 internal.
\end{lemma}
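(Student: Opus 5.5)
The plan is to argue directly from the cone structure, using local tomography (equivalent to local equivalence by Lemma~34 of Ref.~\cite{Nak-2020}) for the pure case and Lemma~\ref{lemma:basic_product_span} for the internal case. Both parts rest on one elementary observation: parallel composition maps $\St_+(A)\times\St_+(B)$ into $\St_+(AB)$, and $\id_A\ot b$ with $b \in \Eff(B)$ maps $\St_+(AB)$ into $\St_+(A)$.

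\emph{Purity.} First I would record a small auxiliary fact. If $\psi\in\StN(A)$ is pure, $x\in\St_+(A)$, and $x\le\lambda\psi$ for some $\lambda\ge 0$, then $x=\mu\psi$ with $\mu = e_A(x)$. To see this, write $\lambda\psi = x + y$ with $y\in\St_+(A)$, normalize $x$ and $y$, and invoke purity; the degenerate cases $x=\zero$ or $y=\zero$ are trivial.

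Now suppose $\psi\ot\phi = p\tau + (1-p)\tau'$ with $\tau,\tau'\in\StN(AB)$ and $0<p<1$. The argument has three steps.
\begin{enumerate}
 \item For each $b\in\Eff(B)$, apply $\id_A\ot b$. This gives $p(\id_A\ot b)(\tau)\le b(\phi)\psi$ in $\St_+(A)$. By the auxiliary fact, $(\id_A\ot b)(\tau) = c(b)\psi$ for some scalar $c(b)$.
 \item Apply $e_A$ to identify the scalar: $c(b) = b(\tau_B)$, where $\tau_B \coloneqq (e_A\ot\id_B)(\tau)\in\StN(B)$.
 \item Apply $e_A\ot\id_B$ to the decomposition. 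This gives $p\tau_B\le\phi$. By the auxiliary fact together with normalization, $\tau_B=\phi$.
\end{enumerate}
Hence $(a\ot b)(\tau) = a(\psi)\,b(\phi) = (a\ot b)(\psi\ot\phi)$ for all $a\in\Eff(A)$ and $b\in\Eff(B)$. Local tomography then yields $\tau=\psi\ot\phi$, and by symmetry $\tau'=\psi\ot\phi$. So $\psi\ot\phi$ is pure. Normalization of $\psi\ot\phi$ follows from $e_{AB}=e_A\ot e_B$.

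\emph{Internality.} Take $x\in D_\rho$ and $y\in D_\sigma$. Then
\[
\rho\ot\sigma - x\ot y = (\rho-x)\ot\sigma + x\ot(\sigma-y),
\]
and both terms lie in $\St_+(AB)$. So $x\ot y\in D_{\rho\ot\sigma}$. Because $D_\rho$ spans $\St_\Real(A)$, we can choose bases of $\St_\Real(A)$ and $\St_\Real(B)$ inside $D_\rho$ and $D_\sigma$ respectively. Their pairwise products are linearly independent (pair them against dual bases, as in the proof of Lemma~\ref{lemma:basic_product_span}). By $d_{AB}=d_Ad_B$ these products span $\St_\Real(AB)$. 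Hence $\Span D_{\rho\ot\sigma}=\St_\Real(AB)$, so $\rho\ot\sigma$ is internal.

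The only genuinely nontrivial step is the last one in the purity argument, which passes from the product statistics to $\tau=\psi\ot\phi$. This is exactly where local equivalence, via local tomography, is indispensable.
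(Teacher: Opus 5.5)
Your proof is correct, but it takes a different route from the paper. The paper gives no argument of its own here. It invokes Lemma~18 and Theorem~2 of Ref.~\cite{Chi-Dar-Per-2010} and notes only that those results hold in any OPT with local discriminability, which is equivalent to local equivalence.

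You instead verify both claims inside the present framework:
\begin{itemize}
 \item \emph{Purity.} Applying $\id_A \ot b$ and $e_A \ot \id_B$ to a convex decomposition of $\psi \ot \phi$ forces each normalized component $\tau$ to satisfy $(a \ot b)(\tau) = a(\psi)\, b(\phi)$ for all $a \in \Eff(A)$ and $b \in \Eff(B)$. Local tomography then gives $\tau = \psi \ot \phi$.
 \item \emph{Internality.} You use the identity $\rho \ot \sigma - x \ot y = (\rho - x) \ot \sigma + x \ot (\sigma - y)$ to show that $x \ot y \in D_{\rho \ot \sigma}$ whenever $x \in D_\rho$ and $y \in D_\sigma$. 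Such products span $\St_\Real(AB)$ because $d_{AB} = d_A d_B$.
\end{itemize}

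The citation is shorter. However, it leaves the reader to check that the cited setting transfers to the present one: there local discriminability is part of the framework and outcome sets are more general. Your argument is self-contained. It also makes explicit that local equivalence enters only twice: through local tomography in the final step of the purity part, and through the dimension identity in the internality part.

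One minor simplification: in the internality part you do not need the linear-independence count. Since $D_\rho$ and $D_\sigma$ span the factor spaces, bilinearity of $\ot$ together with Lemma~\ref{lemma:basic_product_span} already shows that the products $x \ot y$ span $\St_\Real(AB)$.
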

\begin{proof}
 The purity assertion is Lemma~18 of Ref.~\cite{Chi-Dar-Per-2010}; the internality assertion is Theorem~2 of the same
 reference.
 These results hold for any OPT satisfying local discriminability and hence apply here.
\end{proof}

\subsection{Equivalent-system purification}

The second postulate states that every state $\rho \in \StN(A)$ can be purified, essentially uniquely, by a pure state
$\Psi \in \StN(A\tA)$ where $\tA$ is equivalent to $A$.
We first define purification.
\begin{define}[Purification]
 For $\rho \in \StN(A)$, if there exist a system $A'$ and a normalized pure state $\Psi \in \StN(AA')$ such that
  $(\id_A \ot e_{A'})(\Psi) = \rho$, i.e.,
 \begin{alignat}{1}
  \includegraphics[scale=1.0,alt={}]{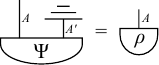}
  \label{eq:purify},
 \end{alignat}
 then $\Psi$ is called a \termdef{purification} of $\rho$.
 Similarly, a normalized pure state $\Phi \in \StN(A'A)$ satisfying $(e_{A'} \ot \id_A)(\Phi) = \rho$ is also called
 a purification of $\rho$.
 The system $A'$ is called the \termdef{purifying system}.
\end{define}

\begin{postulate}[Equivalent-system (ES) purification]
 For every system $A$, there exists a system $\tA$ equivalent to $A$ such that every normalized state $\rho \in \StN(A)$
 has a purification in $A\tA$.
 Moreover, if $\Psi$ and $\Psi'$ are both normalized states of $A\tA$ and are purifications of $\rho$, then
there exists a
  reversible
 channel $U$ on $\tA$ such that
 \begin{alignat}{1}
  \includegraphics[scale=1.0,alt={\Psi' = (\id_A \ot U)(\Psi)}]{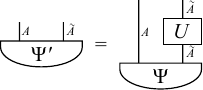}
  \label{eq:purify_uniq}.
 \end{alignat}
 Such a system $\tA$ is called a \termdef{conjugate system} of $A$.
\end{postulate}

If $\Psi \in \StN(A\tA)$ is a purification of $\rho \in \StN(A)$, the \termdef{complementary state} of $\rho$
with respect to $\Psi$ is defined by
\begin{alignat}{1}
 \includegraphics[scale=1.0,alt={\trho \coloneqq (e_A \ot \id_\tA)(\Psi)}]{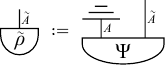}
 \label{eq:purify_conjugate}.
\end{alignat}
By definition, $\Psi$ is also a purification of the complementary state $\trho$.
By definition, a conjugate system $\tA$ of $A$ is equivalent to $A$.
Conversely, any system $C$ equivalent to $A$ is also a conjugate system of $A$.
Indeed, since $\tA \simeq C$, there exists a reversible channel $W \in \Trans(\tA,C)$.
If $\Psi \in \StN(A\tA)$ purifies $\rho$, then
$\Phi \coloneqq (\id_A \ot W)(\Psi) \in \StN(AC)$ is normalized and pure by Lemma~\ref{lemma:basic_unitary_pure},
and $(\id_A \ot e_C)(\Phi) = (\id_A \ot e_\tA)(\Psi) = \rho$.
If $\Phi_1$ and $\Phi_2$ are both normalized states of $AC$ and purify $\rho$, then
$(\id_A \ot W^{-1})(\Phi_1)$ and
$(\id_A \ot W^{-1})(\Phi_2)$ are purifications in $A\tA$; since $\tA$ is conjugate to $A$,
they are related by a reversible channel $U$ on $\tA$.
Thus $\Phi_1$ and $\Phi_2$ are related by the reversible channel $W \c U \c W^{-1}$ on $C$.
Consequently, a conjugate system can be replaced by any equivalent system.
In particular, if $\tA$ is conjugate to $A$, then $A$ can also be regarded as conjugate to $\tA$.
For conjugate systems $\tA$ and $\tB$ of $A$ and $B$, the composite $\tA\tB$ is equivalent to $AB$ and hence is
a conjugate system of $AB$.
For each system $A$, we fix once and for all one conjugate system and denote it by $\tA$.
Whenever a purifying system is required to be conjugate to a given system, we shall also use any
equivalent system as a conjugate system, suppressing the corresponding reversible equivalences from the notation.

As stated in the main text, this postulate is stronger than the standard purification postulate
of Chiribella--D'Ariano--Perinotti~\cite{Chi-Dar-Per-2010,Chi-Dar-Per-2011,Dar-Chi-Per-2017} in the sense that it
requires purification with a system $\tA$ equivalent to $A$.
On the other hand, the standard purification postulate asserts uniqueness for purifications using arbitrary purifying
systems $B$, whereas ES purification does not require this for general $B$.
The authors of Ref.~\cite{Chi-Dar-Per-2010} raised the question of whether OPTs satisfying local discriminability and purification,
in their sense, are necessarily quantum.
Subsequent work derived finite-dimensional quantum theory by adding three further operational postulates to local
discriminability and purification in a causal OPT~\cite{Chi-Dar-Per-2011}.
Although the present definition of purification differs slightly as explained above, the present work addresses
this motivation more directly by deriving standard finite-dimensional quantum theory from only local equivalence,
equivalently local discriminability, and ES purification.

\section{Derivation of symmetric cones}

In what follows, we derive quantum theory from local equivalence and ES purification.
The argument has four main steps.
\begin{enumerate}
 \item The state cone $\St_+(A)$ is shown to be symmetric for every $A$
       (Theorem~\ref{thm:self_dual}).
 \item The no-restriction hypothesis is established (Theorem~\ref{thm:norestriction});
       the hypothesis is explained at the beginning of Sec.~\ref{subsec:reconstruction_norestriction}.
 \item The normalized state space is identified with $\Den_{n_A}$ for every $A$,
       namely $\StN(A) \cong \Den_{n_A}$ (Theorem~\ref{thm:Complex}).
 \item These results are combined to conclude that the OPT is quantum theory
       (Theorem~\ref{thm:quantum}).
\end{enumerate}
Throughout the rest of this supplement, we work in a fixed OPT satisfying local equivalence and ES purification.

In a real vector space $\V$, a convex cone $\cC$ is \termdef{homogeneous} if it has interior points
and, for any two interior points $x$ and $y$ of $\cC$, there exists a linear automorphism $\Gamma_{x,y}$ of $\cC$
such that $\Gamma_{x,y}(x) = y$.
In a real inner-product space $\V$, a convex cone $\cC \subseteq \V$ is \termdef{self-dual} if
$\{ y \in \V \mid \braket{y,x} \ge 0 ~(\forall x \in \cC) \}$ is equal to $\cC$.
A cone in a real inner-product space is called a \termdef{symmetric cone} if it is homogeneous and self-dual.
This section proves that every state cone $\St_+(A)$ is symmetric.

\subsection{Preparations}

We collect basic properties of OPTs satisfying local equivalence and ES purification.
As noted in the main text, ES purification differs slightly from the purification postulate of
Ref.~\cite{Chi-Dar-Per-2010}.
Therefore, although several arguments are close to those used under the standard purification postulate,
some proofs require modifications.

\begin{lemma} \label{lemma:basic_pure_unitary}
 For any two normalized pure states $\psi$ and $\phi$ of $A$, there exists a reversible channel $U$ on $A$
 such that $\phi = U(\psi)$.
\end{lemma}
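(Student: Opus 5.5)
The argument purifies with a fixed pure state on the conjugate system and then applies the uniqueness clause of ES purification. Let $\tA$ be the fixed conjugate system of $A$. By Lemma~\ref{lemma:basic_compact_closed}, $\tA$ has a normalized pure state $\eta$. By Lemma~\ref{lemma:basic_pure_internal}, the product states $\psi \ot \eta$ and $\phi \ot \eta$ are normalized pure states of $A\tA$. Their marginals on $A$ are $(\id_A \ot e_\tA)(\psi \ot \eta) = \psi$ and $(\id_A \ot e_\tA)(\phi\ot\eta)=\phi$, so each is a purification in $A\tA$. This does not yet relate $\psi$ and $\phi$, because uniqueness in ES purification refers to purifications of the same state of $A$. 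We therefore exchange the roles of the two factors and view these states as purifications of a state of $\tA$.

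Consider the complementary states $(e_A \ot \id_\tA)(\psi\ot\eta) = \eta$ and $(e_A\ot\id_\tA)(\phi\ot\eta)=\eta$. Both pure states of $A\tA$ are therefore purifications of the same state $\eta$ of $\tA$, with purifying system $A$. As explained after the ES purification postulate, $A$ may be regarded as a conjugate system of $\tA$: any system equivalent to the conjugate system $\tilde{\tA}$ is itself conjugate, and $A \simeq \tA \simeq \tilde{\tA}$. Permuting factors via the reversible channel implementing $A\tA \simeq \tA A$, whose image of a pure state is again pure by Lemma~\ref{lemma:basic_unitary_pure}, we are exactly in the setting of the uniqueness clause for the system $\tA$. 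It supplies a reversible channel $U$ on $A$ with $\phi\ot\eta = (U \ot \id_\tA)(\psi \ot \eta)$.

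Applying $\id_A \ot e_\tA$ to both sides gives $\phi = U(\psi)$, since $e_\tA(\eta)=1$. The main obstacle is bookkeeping. We must check that the transfer of conjugacy (replacing $\tilde{\tA}$ by the equivalent system $A$) together with the swap equivalence $A\tA\simeq\tA A$ still yields a reversible channel acting on the $A$ factor. The permutation channel and the equivalence $W$ conjugate the uniqueness channel, and the composite $W\c U'\c W^{-1}$ used in the paper's remark on replacing conjugate systems is again reversible on $A$.
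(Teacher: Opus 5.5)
Your proof is correct and is essentially the paper's argument: tensor a fixed pure state $\eta$ of $\tA$ with $\psi$ and with $\phi$, regard both products as purifications of $\eta$ whose purifying system is $A$ (which is conjugate to $\tA$), apply the uniqueness clause of ES purification, and then discard $\tA$. The only difference is that the paper writes the products as $\eta \ot \psi$ and $\eta \ot \phi$ in $\StN(\tA A)$ from the start, so it needs no swap step; this is slightly cleaner, since the framework only posits $A\tA \simeq \tA A$ via some reversible channel and does not guarantee that this channel acts as the factor swap on product states.
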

This is the same assertion as Lemma~20 of Ref.~\cite{Chi-Dar-Per-2010}.
\begin{proof}
 Let $\tA$ be equivalent to $A$ and choose a normalized pure state $\alpha \in \StN(\tA)$.
 By Lemma~\ref{lemma:basic_pure_internal}, both $\alpha \ot \psi$ and $\alpha \ot \phi$ are pure.
 Moreover,
 \begin{alignat}{1}
  \includegraphics[scale=1.0,alt={(\id_\tA \ot e_A)(\alpha \ot \psi) = \alpha}]{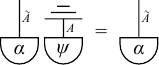}
  \label{eq:basic_pure_unitary_purify},
 \end{alignat}
 so $\alpha \ot \psi$ is a purification of $\alpha$, and similarly for $\alpha \ot \phi$.
 ES purification gives a reversible channel $U$ on $A$ such that
 \begin{alignat}{1}
  \includegraphics[scale=1.0,alt={\alpha \ot \phi = (\id_\tA \ot U)(\alpha \ot \psi)}]{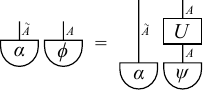}
  \label{eq:basic_pure_unitary_uniq}.
 \end{alignat}
 Applying $e_\tA \ot \id_A$ yields $\phi = U(\psi)$.
\end{proof}

\begin{lemma} \label{lemma:basic_composite_unitary}
 If $\Psi$ and $\Psi'$ are both normalized states of $A\tA B$ and are purifications of $\rho \in \StN(A)$,
 then there exists a reversible channel $U$ on $\tA B$ such that $\Psi' = (\id_A \ot U)(\Psi)$.
\end{lemma}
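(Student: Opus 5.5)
The plan is to reduce the claim to the uniqueness clause of ES purification for an enlarged system, $S \coloneqq A\tB$, chosen so that the purifying system $\tA B$ is equivalent to it. Since $\tA \simeq A$ and $\tB \simeq B$, compatibility of equivalence with parallel composition and the order-interchange assumption give $\tA B \simeq A \tB = S$. By the remark after the definition of ES purification, any system equivalent to $S$ is a conjugate system of $S$. So $\tA B$ may serve as the conjugate system of $S$, and uniqueness holds for purifications of states of $S$ in $S\,\tA B$.

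Concretely, fix a normalized pure state $\beta \in \StN(\tB)$, which exists by Lemma~\ref{lemma:basic_compact_closed}, and consider $\Psi \ot \beta$ and $\Psi' \ot \beta$ in $\StN(A\tA B\tB)$. By Lemma~\ref{lemma:basic_pure_internal} both are pure. After reordering the factors as $(A\tB)(\tA B)$, they remain normalized and pure by Lemma~\ref{lemma:basic_unitary_pure}, since the reordering is implemented by a reversible channel. Discarding $\tA B$ gives $(\id_A \ot e_{\tA B})(\Psi) \ot \beta = \rho \ot \beta$ in both cases, because $e_{\tA B} = e_\tA \ot e_B$ and the deterministic effect factorizes. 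Hence both states are purifications of $\rho \ot \beta \in \StN(S)$ with purifying system $\tA B$. ES purification for $S$ then yields a reversible channel $U$ on $\tA B$ with
\begin{alignat}{1}
 \Psi' \ot \beta = (\id_A \ot U \ot \id_\tB)(\Psi \ot \beta),
\end{alignat}
after undoing the reordering.

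Finally, apply $\id_{A\tA B} \ot e_\tB$ to both sides. Since $e_\tB(\beta) = 1$, the interchange law gives $\Psi' = (\id_A \ot U)(\Psi)$, as claimed.

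The only delicate point is the bookkeeping of the reordering channels. The identification of $S$ with $A\tB$ in the order $(A\tB)(\tA B)$ and the equivalence $\tA B \simeq S$ are implemented by reversible channels that act only on the factor ordering. They must leave $\id_A$ untouched and must commute with appending and discarding $\beta$ on $\tB$. I expect this to be routine because the conjugating equivalence can be chosen to act only on the $\tA B$ factor, so conjugating $U$ by it again gives a reversible channel on $\tA B$, exactly as in the earlier argument that a conjugate system can be replaced by any equivalent one.
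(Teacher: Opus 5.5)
Your strategy is the paper's: adjoin a pure state of $\tB$, apply the uniqueness clause of ES purification to an enlarged system whose purifying system is $\tA B$, then discard $\tB$. The step that does not go through as written is the reordering you flag.

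The framework assumes only $XY \simeq YX$, i.e., that \emph{some} reversible channel exists between the two orderings. It does not supply a natural swap. So a reversible channel $\sigma$ from $\tA B\tB$ to $\tB\tA B$ need not satisfy $(\id_\tB \ot e_{\tA B})\circ\sigma = e_{\tA B}\ot\id_\tB$, nor $\sigma\circ(U\ot\id_\tB) = (\id_\tB\ot U)\circ\sigma$. This breaks two steps:
\begin{enumerate}
\item Without the first property, the reordered states $(\id_A\ot\sigma)(\Psi\ot\beta)$ and $(\id_A\ot\sigma)(\Psi'\ot\beta)$ need not purify the same state of $A\tB$. Their marginals are $(\id_A\ot\Theta)(\Psi)$ and $(\id_A\ot\Theta)(\Psi')$ with $\Theta \coloneqq (\id_\tB\ot e_{\tA B})\circ\sigma\circ(\id_{\tA B}\ot\beta)$. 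This channel acts on the purifying side, which is exactly where $\Psi$ and $\Psi'$ may differ.
\item Without the second property, undoing the reordering only gives $\Psi' = (\id_A\ot\Lambda)(\Psi)$ with $\Lambda = (\id_{\tA B}\ot e_\tB)\circ\sigma^{-1}\circ(\id_\tB\ot U)\circ\sigma\circ(\id_{\tA B}\ot\beta)$. This is a channel on $\tA B$ that is not evidently reversible.
\end{enumerate}
Your justification covers a different equivalence. The identification $\tA B\simeq S$ of purifying systems does act on the purifying factor alone and is harmless; it is the replacement remark after the postulate. But permuting $A\tA B\tB$ into $(A\tB)(\tA B)$ moves $\tB$ across $\tA B$, and nothing controls that.

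The repair is to put the ancilla on the left, as the paper does. Then $\beta\ot\Psi$ and $\beta\ot\Psi'$ are states of $\tB A\tA B = (\tB A)(\tA B)$ by strict associativity, so no permutation is needed. Both are pure by Lemma~\ref{lemma:basic_pure_internal}, and by the interchange law both purify $\beta\ot\rho \in \StN(\tB A)$. Since $\tA B \simeq A\tB \simeq \tB A$, the system $\tA B$ is a conjugate system of $\tB A$. ES purification then gives a reversible $U$ on $\tA B$ with $\beta\ot\Psi' = (\id_{\tB A}\ot U)(\beta\ot\Psi)$. Applying $e_\tB\ot\id_{A\tA B}$ yields $\Psi' = (\id_A\ot U)(\Psi)$. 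With this change your argument is complete.
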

This gives the part of the standard purification postulate needed below.
\begin{proof}
 Choose a normalized pure state $\varphi \in \StN(\tB)$.
 Then $\varphi \ot \Psi$ and $\varphi \ot \Psi'$ are pure by Lemma~\ref{lemma:basic_pure_internal}, and
 \begin{alignat}{1}
  \includegraphics[scale=1.0,alt={
  (\id_{\tB A} \ot e_{\tA B})(\varphi \ot \Psi) = \varphi \ot \rho
  }]{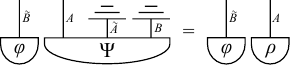}
  \label{eq:basic_composite_unitary_purify},
 \end{alignat}
 so both purify $\varphi \ot \rho$.
 ES purification gives a reversible channel $U$ on $\tA B$ with
 \begin{alignat}{1}
  \includegraphics[scale=1.0,alt={
  \varphi \ot \Psi' = (\id_{\tB A} \ot U)(\varphi \ot \Psi)
  }]{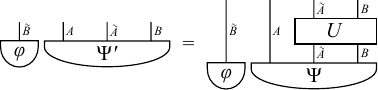}
  \label{eq:basic_composite_unitary_uniq},
 \end{alignat}
 and applying $e_\tB \ot \id_{A\tA B}$ gives the claim.
\end{proof}

\begin{lemma} \label{lemma:basic_channel}
 If $\Psi \in \StN(AB)$ and $\Psi' \in \StN(AC)$ are both purifications of $\rho \in \StN(A)$,
 then there exists a channel $\Lambda \in \Trans(B,C)$ such that $\Psi' = (\id_A \ot \Lambda)(\Psi)$.
\end{lemma}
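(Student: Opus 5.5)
The plan is the standard CDP argument: embed both purifying systems into a common larger system, use the uniqueness of purification already established in Lemma~\ref{lemma:basic_composite_unitary} for purifying systems of the form $\tA D$, and then discard the unwanted ancilla. The only subtlety is that uniqueness in ES purification is available only for purifying systems of the form $\tA D$. So we must first pad $B$ and $C$ into a common system of that shape.

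\emph{Step 1: pad $\Psi$ and $\Psi'$ to a common shape.} Choose normalized pure states $\beta \in \StN(B)$ and $\gamma \in \StN(C)$, and fix a purification $\Theta \in \StN(A\tA)$ of $\rho$. Consider the pure states (by Lemma~\ref{lemma:basic_pure_internal})
\[
\Psi \ot \gamma \ot \alpha \in \StN(ABC\tA), \qquad \Psi' \ot \beta \ot \alpha \in \StN(ACB\tA),
\]
where $\alpha \in \StN(\tA)$ is pure. Both purify $\rho$, and, after suppressing the permutation equivalences, both live in $A\,\tA D$ with $D \coloneqq BC$.

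\emph{Step 2: apply uniqueness.} Lemma~\ref{lemma:basic_composite_unitary} applies to purifications in $A\tA D$, with $D$ in the role of $B$ there. It yields a reversible channel $U$ on $\tA BC$ such that
\[
\Psi' \ot \beta \ot \alpha = (\id_A \ot U)(\Psi \ot \gamma \ot \alpha).
\]

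\emph{Step 3: define $\Lambda$ and check it.} Set
\[
\Lambda \coloneqq (e_{\tA} \ot \id_C \ot e_B) \c U \c (\id_B \ot \gamma \ot \alpha) \in \Trans(B,C),
\]
where the padding channel $\id_B \ot \gamma \ot \alpha$ prepares $\gamma$ and $\alpha$, and the discarding uses the unique deterministic effects. Since this is a composite of channels, $\Lambda$ is a channel by Lemma~\ref{lemma:basic_deterministic}. Applying $\id_A \ot (e_\tA \ot \id_C \ot e_B)$ to the identity in Step 2 gives $\Psi' = (\id_A \ot \Lambda)(\Psi)$, because $e(\beta) = e(\alpha) = 1$. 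The interchange law justifies moving the discards past the identity on $A$.

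\emph{Main obstacle.} The delicate point is bookkeeping with the system reorderings. The reversible channels implementing $ABC\tA \simeq A\tA BC$ must be composed consistently on both sides. They must also act as the identity on the $A$ factor, so that the purifications remain purifications of $\rho$ on $A$; this holds because the permutations can be chosen to touch only the ancilla factors. Lemma~\ref{lemma:basic_composite_unitary} itself takes purifying systems $\tA B$ with $B$ arbitrary, so no extra conjugate-system argument is needed beyond this padding by $\alpha$.
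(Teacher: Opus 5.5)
Your proposal is correct and follows the paper's proof essentially step for step. You pad $\Psi$ and $\Psi'$ with pure states $\alpha\in\StN(\tA)$ and $\gamma$ (respectively $\beta$) so that both become purifications of $\rho$ with purifying system $\tA BC$, apply Lemma~\ref{lemma:basic_composite_unitary} with $BC$ in place of $B$, and define $\Lambda$ by preparing $\gamma$ and $\alpha$, applying $U$, and discarding $\tA$ and $B$. The only blemish is the auxiliary purification $\Theta\in\StN(A\tA)$ in Step~1, which is never used and can be dropped.
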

This is close to Lemma~21 of Ref.~\cite{Chi-Dar-Per-2010}, but the statement is slightly different because the standard
purification postulate cannot be used directly.
\begin{proof}
 Choose normalized pure states $\alpha \in \StN(\tA)$, $\beta \in \StN(B)$, and $\gamma \in \StN(C)$.
 Regard $\Phi \coloneqq \Psi \ot \alpha \ot \gamma$ and
 $\Phi' \coloneqq \Psi' \ot \alpha \ot \beta$ as states of $A\tA BC$ after permuting tensor factors.
 Following the convention stated above, we suppress the reversible channels implementing these permutations.
 By Lemma~\ref{lemma:basic_pure_internal}, both are pure, and both purify $\rho$ with purifying system $\tA BC$.
 Lemma~\ref{lemma:basic_composite_unitary}, applied with $BC$ in the role of the system $B$
 in Lemma~\ref{lemma:basic_composite_unitary}, gives a reversible channel $U$ on $\tA BC$ such that
 \begin{alignat}{1}
  \includegraphics[scale=1.0,alt={\Phi' = (\id_A \ot U)(\Phi)}]{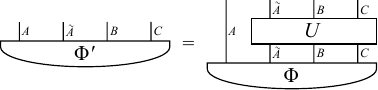}
  \label{eq:basic_channel_uniq}.
 \end{alignat}
 Define
 \begin{alignat}{1}
  \includegraphics[scale=1.0,alt={
  \Lambda \coloneqq (e_\tA \ot e_B \ot \id_C) \c U \c (\alpha \ot \id_B \ot \gamma)
  }]{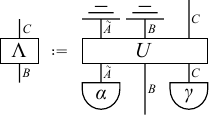}
  \label{eq:basic_channel_Lambda}.
 \end{alignat}
 Then $\Lambda$ is a channel, and
 \begin{alignat}{1}
  \footnoteinset{-1.75}{0.3}{\eqref{eq:basic_channel_uniq}}{%
  \footnoteinset{5.25}{0.3}{\eqref{eq:basic_channel_Lambda}}{%
  \includegraphics[scale=1.0,alt={\Psi' = (\id_A \ot \Lambda)(\Psi)}]{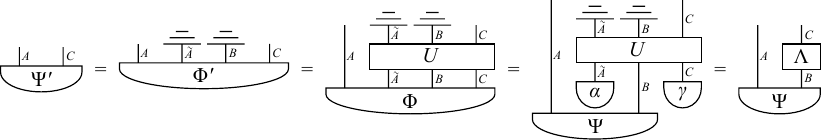}}}
  \label{eq:basic_channel_Psi}.
 \end{alignat}
 Note that the equation labels displayed above equal signs (here, \eqref{eq:basic_channel_uniq}
 and \eqref{eq:basic_channel_Lambda}) indicate the identities used in the diagrammatic calculation;
 the same convention is used below.
\end{proof}

\begin{lemma} \label{lemma:basic_finite_steering}
 Let $\rho \in \StN(A)$ and let $\Psi \in \StN(AB)$ be a purification of $\rho$.
 Let $\{ \sigma_i \in \St(A) \}_{i=1}^n$ satisfy $\sum_{i=1}^n \sigma_i = \rho$.
 Then there exists a measurement $\{ b_i \}_{i=1}^n$ on $B$ such that $(\id_A \ot b_i)(\Psi) = \sigma_i$ for every
  $i$.
\end{lemma}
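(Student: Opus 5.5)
The plan is the standard steering construction of Chiribella--D'Ariano--Perinotti. I would build a second purification of $\rho$ that carries a classical record of the index $i$, and then transport it back to $\Psi$ using Lemma~\ref{lemma:basic_channel}.

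\emph{Step 1: a classical register.} I first need a system $C$ with pure, perfectly distinguishable normalized states $\gamma_1,\dots,\gamma_n$. By assumption there is a system $S$ with $n_S \ge 2$, so there are $\omega_1,\omega_2 \in \StN(S)$ and a measurement $\{c_1,c_2\}$ with $c_k(\omega_l)=\delta_{k,l}$.
\begin{itemize}
 \item \emph{Making the states pure.} Each $\omega_l$ can be replaced by a pure state. By Krein--Milman/Carath\'eodory in the compact convex set $\StN(S)$, write $\omega_l$ as a convex combination of pure states with positive weights. Since $c_l(\omega_l)=1$ and every effect takes values in $[0,1]$, every pure component $\psi_l$ also satisfies $c_l(\psi_l)=1$, hence $c_k(\psi_l)=\delta_{k,l}$.
 \item \emph{Getting $n$ labels.} Take $C \coloneqq S^{\ot m}$ with $2^m \ge n$. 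The product states $\psi_{l_1}\ot\cdots\ot\psi_{l_m}$ are pure by Lemma~\ref{lemma:basic_pure_internal}. The product measurement $\{c_{k_1}\ot\cdots\ot c_{k_m}\}$ is a test, by parallel composition, and it distinguishes them perfectly.
 \item \emph{Restricting to $n$ outcomes.} Keep $n$ of these product states as $\gamma_1,\dots,\gamma_n$. Coarse-grain the outcomes of the product measurement into a measurement $\{d_i\}_{i=1}^n$ with $d_i(\gamma_j)=\delta_{i,j}$, assigning the unused outcomes arbitrarily, say to $d_n$.
\end{itemize}

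\emph{Step 2: a purification remembering $i$.}
\begin{itemize}
 \item Write $\sigma_i = p_i\rho_i$ with $p_i = e_A(\sigma_i)$ and $\rho_i \in \StN(A)$. We have $\sum_i p_i = e_A(\rho) = 1$; if $p_i = 0$, take $\rho_i$ arbitrary.
 \item By ES purification, choose purifications $\Psi_i \in \StN(A\tA)$ of $\rho_i$.
 \item Set $\Sigma \coloneqq \sum_i p_i\, \Psi_i \ot \gamma_i \in \StN(A\tA C)$, which is normalized by convexity. Its $A$-marginal is $\sum_i p_i\rho_i = \rho$.
 \item Purify $\Sigma$ by ES purification with a conjugate system $D$ of $A\tA C$, giving a pure $\Omega \in \StN(A\tA C D)$.
\end{itemize}
Then $\Omega$ is a purification of $\rho$ with purifying system $\tA C D$.

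\emph{Step 3: transport and measure.}
\begin{itemize}
 \item Since $\Psi$ and $\Omega$ both purify $\rho$, Lemma~\ref{lemma:basic_channel} gives a channel $\Lambda \in \Trans(B,\tA C D)$ with $\Omega = (\id_A \ot \Lambda)(\Psi)$.
 \item Define $b_i \coloneqq (e_\tA \ot d_i \ot e_D)\c\Lambda$. The family $\{b_i\}_{i=1}^n$ is a measurement on $B$, being the composite of the channel $\{\Lambda\}$ with the measurement $\{e_\tA\ot d_i\ot e_D\}_i$.
 \item Then
\begin{alignat}{1}
 (\id_A\ot b_i)(\Psi) &= (\id_A\ot e_\tA\ot d_i\ot e_D)(\Omega) = (\id_A\ot e_\tA\ot d_i)(\Sigma) \nonumber\\
 &= \sum_j p_j\, d_i(\gamma_j)\,\rho_j = p_i\rho_i = \sigma_i .
\end{alignat}
\end{itemize}
Here the second equality uses that $\Omega$ purifies $\Sigma$, and the third uses $e_{AB}=e_A\ot e_B$ together with bilinearity of $\ot$.

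The main obstacle is Step 1, namely producing a register of $n$ pure, perfectly distinguishable states in some system. The key points are that extreme points inside a perfectly distinguishable state inherit distinguishability, and that tensor powers multiply the number of labels. Purity of the $\gamma_i$ is not actually needed, since only the distinguishing measurement $\{d_i\}$ is used. The remaining steps are direct applications of ES purification and Lemma~\ref{lemma:basic_channel}.
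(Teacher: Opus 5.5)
Your proof is correct and follows essentially the same route as the paper. Both build an $n$-outcome classical register $C$ from tensor powers of a system with informational dimension at least $2$, purify a register-tagged decomposition of $\rho$ using a conjugate system $D$, transport $\Psi$ to that purification via Lemma~\ref{lemma:basic_channel}, and read out the register. The only differences are unnecessary extras: the paper purifies $\sum_i \sigma_i \otimes \varphi_i \in \StN(AC)$ directly, so neither your intermediate purifications $\Psi_i$ of the $\rho_i$ nor purity of the register states is needed.
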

This is essentially Theorem~6 of Ref.~\cite{Chi-Dar-Per-2010}.
\begin{proof}
 The case $n = 1$ is immediate with $b_1 \coloneqq e_B$.
 Assume $n \ge 2$.
 From a system of informational dimension at least $2$, repeated composition yields a system $C$ having $n$ perfectly
 distinguishable normalized states $\{\varphi_i\}_{i=1}^n$ and a measurement $\{c_i\}_{i=1}^n$ distinguishing them.
 Indeed, tensor powers of two perfectly distinguishable states give $2^m$ perfectly distinguishable product
 states for sufficiently large $m$, and coarse-graining the corresponding distinguishing measurement gives any
 desired number $n \le 2^m$.
 Put
 \begin{alignat}{1}
  \includegraphics[scale=1.0,alt={\Omega \coloneqq \sum_{i=1}^n \sigma_i \ot \varphi_i}]{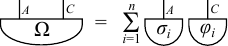}
  \label{eq:basic_finite_steering_Omega}.
 \end{alignat}
 Then $e_{AC}(\Omega) = \sum_{i=1}^n e_A(\sigma_i) e_C(\varphi_i) = e_A(\rho) = 1$,
 and Eq.~\eqref{eq:basic_St+_St} shows that $\Omega$ is normalized.
 Let $\Phi \in \StN(ACD)$ be a purification of $\Omega$, with $D$ conjugate to $AC$.
 Since
 \begin{alignat}{1}
  \includegraphics[scale=1.0,alt={
  (\id_A \ot e_C \ot e_D)(\Phi) = (\id_A \ot e_C)(\Omega) = \rho
  }]{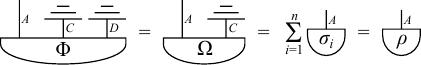}
  \label{eq:basic_finite_steering_Phi},
 \end{alignat}
 the state $\Phi$ also purifies $\rho$.
 By Lemma~\ref{lemma:basic_channel}, there is a channel $\Lambda \in \Trans(B,CD)$
 such that $\Phi = (\id_A \ot \Lambda)(\Psi)$.
 Define
 \begin{alignat}{1}
  \includegraphics[scale=1.0,alt={b_i \coloneqq (c_i \ot e_D) \c \Lambda}]{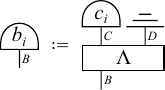}
  \label{eq:basic_finite_steering_b}.
 \end{alignat}
 Since $\Lambda$ is a channel and $\{ c_i \ot e_D \}_{i=1}^n$ is a measurement on $CD$,
 $\{ b_i \}_{i=1}^n$ is a measurement on $B$.
 Moreover,
 \begin{alignat}{1}
  \footnoteinset{-3.40}{0.3}{\eqref{eq:basic_finite_steering_b}}{%
  \footnoteinset{4.13}{0.3}{\eqref{eq:basic_finite_steering_Omega}}{%
  \includegraphics[scale=1.0,alt={
  (\id_A \ot b_i)(\Psi) = (\id_A \ot ((c_i \ot e_D) \c \Lambda))(\Psi)
  = (\id_A \ot c_i \ot e_D)(\Phi) = (\id_A \ot c_i)(\Omega) = \sigma_i
  }]{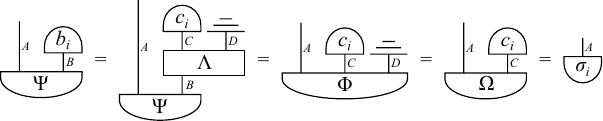}}}
  \label{eq:basic_finite_steering_sigma}.
 \end{alignat}
\end{proof}

\begin{lemma} \label{lemma:basic_steering}
 If $\Psi \in \StN(AB)$ is a purification of $\rho \in \StN(A)$, then, for every $\sigma \in D_\rho$,
 there exists an effect $b \in \Eff(B)$ such that $\sigma = (\id_A \ot b)(\Psi)$.
\end{lemma}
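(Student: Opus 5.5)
The plan is to reduce the statement to Lemma~\ref{lemma:basic_finite_steering} with $n = 2$. Fix $\sigma \in D_\rho$, so that $\sigma \in \St_+(A)$ and $\tau \coloneqq \rho - \sigma \in \St_+(A)$. To apply the finite steering lemma we need $\sigma$ and $\tau$ to be states, i.e., elements of $\St(A)$, not merely elements of the cone. We also need $\sigma + \tau = \rho$, which holds by construction.

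First, show that $\sigma, \tau \in \St(A)$. Since $e_A$ is a nonnegative functional on $\St_+(A)$, we have $e_A(\sigma) \ge 0$ and $e_A(\tau) \ge 0$. Moreover $e_A(\sigma) + e_A(\tau) = e_A(\rho) = 1$, so both $e_A(\sigma) \le 1$ and $e_A(\tau) \le 1$. By Eq.~\eqref{eq:basic_St+_St}, both $\sigma$ and $\tau$ belong to $\St(A)$.

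Next, apply Lemma~\ref{lemma:basic_finite_steering} with $n = 2$, $\sigma_1 \coloneqq \sigma$ and $\sigma_2 \coloneqq \tau$. This yields a measurement $\{ b_1, b_2 \}$ on $B$ such that $(\id_A \ot b_1)(\Psi) = \sigma$. Setting $b \coloneqq b_1$ gives an effect of $B$, since by definition an effect is any element of a measurement. This completes the argument.

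The edge cases need no separate treatment. If $\sigma = \zero$ or $\tau = \zero$, it is still a state, namely $0$ times a normalized state, and the lemma allows $\sigma_i = \zero$. The only point needing care is the passage from cone elements to states, and that is handled by Eq.~\eqref{eq:basic_St+_St}. No real obstacle is expected, since the steering construction has already been done in Lemma~\ref{lemma:basic_finite_steering}.
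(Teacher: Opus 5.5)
Your proposal is correct and follows essentially the same route as the paper: set $\tau = \rho - \sigma$, use positivity of $e_A$ together with Eq.~\eqref{eq:basic_St+_St} to conclude that $\sigma$ and $\tau$ are states, and then apply Lemma~\ref{lemma:basic_finite_steering} with $n = 2$, taking the first effect of the resulting measurement. The only cosmetic difference is that you obtain $e_A(\sigma), e_A(\tau) \le 1$ from $e_A(\sigma) + e_A(\tau) = 1$ and nonnegativity, whereas the paper uses $\sigma \le \rho$ and $\tau \le \rho$ directly; the two are equivalent.
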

This is essentially Corollary~9 of Ref.~\cite{Chi-Dar-Per-2010}.
\begin{proof}
 Put $\tau \coloneqq \rho - \sigma$.
 Since $\sigma \in D_\rho$, one has $\sigma,\tau \in \St_+(A)$, $\sigma \le \rho$, and $\tau \le \rho$.
 Since $e_A$ is positive and $e_A(\rho) = 1$, one has $e_A(\sigma) \le 1$ and $e_A(\tau) \le 1$.
 Hence Eq.~\eqref{eq:basic_St+_St} implies that $\sigma$ and $\tau$ are states of $A$.
 Applying Lemma~\ref{lemma:basic_finite_steering} to $\rho = \sigma + \tau$, there exists
 a measurement $\{ b, c \}$ on $B$ such that $(\id_A \ot b)(\Psi) = \sigma$.
 Thus the required effect is $b$.
\end{proof}

\begin{lemma} \label{lemma:basic_complementary_internal}
 The complementary state of an internal normalized state is internal.
\end{lemma}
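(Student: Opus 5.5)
Fix an internal $\rho \in \StN(A)$, a purification $\Psi \in \StN(A\tA)$, and its complementary state $\trho = (e_A \ot \id_\tA)(\Psi)$. The plan is to show that $D_\trho$ spans $\St_\Real(\tA)$, by comparing dimensions through a linear map.

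\textbf{Step 1: set up the map and place its image in $D_\trho$.} Consider the linear map
\[
T \colon \Eff_\Real(A) \to \St_\Real(\tA), \qquad a \mapsto (a \ot \id_\tA)(\Psi).
\]
For every effect $a \in \Eff(A)$, coarse-grain the remaining outcomes of a measurement containing $a$ to obtain $a'$ with $a + a' = e_A$. Then $\trho - T(a) = T(a') \in \St_+(\tA)$, and also $T(a) \in \St_+(\tA)$. Hence $T(a) \in D_\trho$. Since $\Eff(A)$ spans $\Eff_\Real(A)$, we get $T(\Eff_\Real(A)) \subseteq \Span D_\trho$. It therefore suffices to show that $T$ is surjective.

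\textbf{Step 2: reduce surjectivity to injectivity.} Because $\tA \simeq A$, a reversible channel $W$ and its inverse induce mutually inverse linear maps between $\St_\Real(A)$ and $\St_\Real(\tA)$ (Lemma~\ref{lemma:basic_trans_linear}). Hence $d_\tA = d_A$. Together with $\dim \Eff_\Real(A) = d_A$, this shows that $T$ is a linear map between spaces of equal finite dimension. So surjectivity is equivalent to injectivity.

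\textbf{Step 3: prove injectivity by steering.} Suppose $T(a) = \zero$ for some $a \in \Eff_\Real(A)$. Then for every $b \in \Eff(\tA)$,
\[
a\bigl((\id_A \ot b)(\Psi)\bigr) = (a \ot b)(\Psi) = b(T(a)) = 0.
\]
By Lemma~\ref{lemma:basic_steering}, applied with $B = \tA$, every $\sigma \in D_\rho$ has the form $(\id_A \ot b)(\Psi)$ for some effect $b$. Hence $a$ vanishes on $D_\rho$. Since $\rho$ is internal, $D_\rho$ spans $\St_\Real(A)$, so $a = \zero$ under the identification $\Eff_\Real(A) = \St_\Real(A)^*$. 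Thus $T$ is injective, hence surjective. Therefore $\Span D_\trho = \St_\Real(\tA)$, i.e., $\trho$ is internal.

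\textbf{Expected obstacle.} The main point is Step~3: it needs the steering lemma to reach \emph{all} of $D_\rho$, not merely some subset, so that internality of $\rho$ can be used. The remaining bookkeeping is routine: bilinearity of the maps $T$ and $b \mapsto (\id_A \ot b)(\Psi)$, and the equality $d_\tA = d_A$, which is exactly where equivalence of the conjugate system enters.
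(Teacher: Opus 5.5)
Your proof is correct and follows the paper's argument essentially step for step. You use the same map $a \mapsto (a \ot \id_\tA)(\Psi)$, prove its injectivity via the steering lemma and the internality of $\rho$, place its image inside $\Span D_\trho$, and count dimensions using $d_\tA = d_A = \dim \Eff_\Real(A)$. The only difference is that you skip the reverse inclusion $\Span D_\trho \subseteq T(\Eff_\Real(A))$, which the paper obtains from a second use of steering; given the dimension count, that inclusion is indeed redundant.
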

This is essentially the statement obtained by combining Theorems~8 and 9 and Lemma~30 of Ref.~\cite{Chi-Dar-Per-2010}.
\begin{proof}
 Let $\rho \in \StN(A)$ be internal, let $\Psi \in \StN(A\tA)$ be a purification of $\rho$,
 and put $\trho \coloneqq (e_A \ot \id_\tA)(\Psi)$.
 It is enough to show $\dim \Span(D_{\trho}) = \dim \St_\Real(\tA) = \dim \Eff_\Real(A)$.
 Indeed, a reversible channel between $A$ and $\tA$ induces a linear isomorphism
 $\St_\Real(A) \cong \St_\Real(\tA)$, while $\Eff_\Real(A) = \St_\Real(A)^*$.
 Hence both dimensions equal $d_A$.
 Consider the linear map
 \begin{alignat}{1}
  \tau &\colon \Eff_\Real(A) ~\ni~ \adjustbox{valign=c}{\includegraphics[alt={a}]{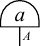}}
  ~\mapsto~ \adjustbox{valign=c}{\includegraphics[alt={(a \ot \id_\tA)(\Psi)}]{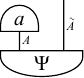}}
  ~\in~ \St_\Real(\tA)
  \label{eq:basic_complementary_internal_tau}.
 \end{alignat}
 If $\tau(a) = \zero$, then for every $\sigma \in D_\rho$, Lemma~\ref{lemma:basic_steering} gives $b \in \Eff(\tA)$
 with $\sigma = (\id_A \ot b)(\Psi)$, and hence
 \begin{alignat}{1}
  \includegraphics[scale=1.0,alt={a(\sigma) = (a \ot b)(\Psi)}]{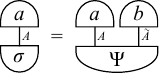}
  \label{eq:basic_complementary_internal_a_sigma}.
 \end{alignat}
 The right-hand side equals $b(\tau(a)) = b(\zero) = 0$.
 Since $D_\rho$ spans $\St_\Real(A)$, $a = \zero$; hence $\tau$ is injective.
 Thus $\dim \tau(\Eff_\Real(A)) = \dim \Eff_\Real(A)$.
 For each $a \in \Eff(A)$, $a \le e_A$ gives $\tau(a) \le \trho$, hence $\tau(a) \in D_{\trho}$,
 so $\tau(\Eff_\Real(A)) \subseteq \Span D_{\trho}$.
 Conversely, for $\tsigma \in D_\trho$, applying Lemma~\ref{lemma:basic_steering} to the purification $\Psi$ of $\trho$
 gives $a \in \Eff(A)$ with $\tsigma = (a \ot \id_\tA)(\Psi) = \tau(a)$.
 Hence $D_{\trho} \subseteq \tau(\Eff(A))$, and so $\Span D_{\trho} \subseteq \tau(\Eff_\Real(A))$.
 Therefore $\tau(\Eff_\Real(A)) = \Span D_{\trho}$ and the dimensions coincide.
\end{proof}

\begin{lemma} \label{lemma:basic_pure_eq}
 Let $\rho \in \StN(A)$ be internal and let $\Psi \in \StN(AB)$ be a purification of $\rho$.
 For any elements $f$ and $g$ of $\Trans_\Real(A,C)$,
 \begin{alignat}{1}
  \adjustbox{valign=c}{\includegraphics[alt={(f \ot \id_B)(\Psi) = (g \ot \id_B)(\Psi)}]{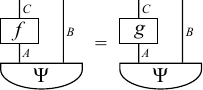}}
  &\quad\Rightarrow\quad
  \adjustbox{valign=c}{\includegraphics[alt={f = g}]{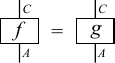}}
  \label{lemma:basic_pure_eq_fg}\raisebox{-1em}{.}
 \end{alignat}
\end{lemma}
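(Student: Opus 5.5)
By linearity it suffices to treat $h \coloneqq f - g \in \Trans_\Real(A,C)$ and to show that $(h \ot \id_B)(\Psi) = \zero$ implies $h = \zero$. By Lemma~\ref{lemma:basic_trans_linear}, $h$ is identified with a linear map $\St_\Real(A) \to \St_\Real(C)$. It therefore suffices to show that $h(x) = \zero$ for every $x$ in a spanning set of $\St_\Real(A)$. Since $\rho$ is internal, $D_\rho$ spans $\St_\Real(A)$, so $D_\rho$ is the natural spanning set to use.

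Fix $\sigma \in D_\rho$. By Lemma~\ref{lemma:basic_steering}, applied to the purification $\Psi \in \StN(AB)$ of $\rho$, there is an effect $b \in \Eff(B)$ with $\sigma = (\id_A \ot b)(\Psi)$. By the interchange law, extended bilinearly to $\Trans_\Real$,
\begin{alignat}{1}
 h(\sigma) = h\bigl((\id_A \ot b)(\Psi)\bigr) = (\id_C \ot b)\bigl((h \ot \id_B)(\Psi)\bigr) = (\id_C \ot b)(\zero) = \zero .
\end{alignat}
Hence $h$ vanishes on $D_\rho$, and by linearity $h$ vanishes on $\Span D_\rho = \St_\Real(A)$. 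The bijection of Lemma~\ref{lemma:basic_trans_linear} then gives $h = \zero$, that is, $f = g$.

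The only point that needs care is the commutation step in the display above. We must check that $(h \ot \id_B)$ followed by $(\id_C \ot b)$ equals $(\id_A \ot b)$ followed by $h$ for an element $h$ of $\Trans_\Real(A,C)$ rather than a genuine transformation. This holds because serial and parallel composition were extended bilinearly to $\Trans_\Real$ in a representative-independent way. So we may write $h = \sum_i c_i h_i$ with $h_i \in \Trans(A,C)$ and apply the interchange law termwise: $(\id_C \ot b) \c (h_i \ot \id_B) = h_i \ot b = h_i \c (\id_A \ot b)$, using $\id_I$ units. Apart from this, the argument relies only on internality of $\rho$, the steering lemma, and the local-equivalence bijection. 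The last of these is what lets us conclude $h = \zero$ from its action on states of $A$ alone, without an ancilla.
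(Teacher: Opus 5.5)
Your proof is correct and follows the paper's argument: you steer each $\sigma \in D_\rho$ from the purification via Lemma~\ref{lemma:basic_steering}, move the effect past $f$ and $g$ (here $h = f-g$) using the interchange law, and conclude from $\Span D_\rho = \St_\Real(A)$ together with the identification of Lemma~\ref{lemma:basic_trans_linear}. Your explicit checks — that the interchange law extends bilinearly to $\Trans_\Real$, and that injectivity in Lemma~\ref{lemma:basic_trans_linear} is what gives $h = \zero$ — spell out steps the paper leaves implicit.
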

The converse is immediate.
This is the linear-map version of Theorem~8 of Ref.~\cite{Chi-Dar-Per-2010}.
As explained in Sec.~\ref{subsec:postulate_LE}, $f \ot \id_B$ is the unique linear map satisfying
$(f \ot \id_B)(\rho \ot \sigma) = f(\rho) \ot \sigma$ on product normalized states, and similarly for $g \ot \id_B$.
\begin{proof}
 For every $\sigma \in D_\rho$, Lemma~\ref{lemma:basic_steering} gives $b \in \Eff(B)$ with
  $\sigma = (\id_A \ot b)(\Psi)$.
 Hence $f(\sigma) = (f \ot b)(\Psi) = (g \ot b)(\Psi) = g(\sigma)$.
 Since $D_\rho$ spans $\St_\Real(A)$, we obtain $f = g$.
\end{proof}

\subsection{Homogeneity}

\begin{lemma} \label{lemma:basic_zigzag}
 Let $\Psi_\rho \in \StN(A\tA)$ be a purification of an internal normalized state $\rho \in \StN(A)$.
 Then there exist $E_\rho \in \Eff(\tA A)$ and a positive real number $p_\rho$ such that
 \begin{alignat}{1}
  \includegraphics[scale=1.0,alt={
  (\id_A \ot E_\rho) \c (\Psi_\rho \ot \id_A) = p_\rho \id_A,
  (E_\rho \ot \id_\tA) \c (\id_\tA \ot \Psi_\rho) = p_\rho \id_\tA
  }]{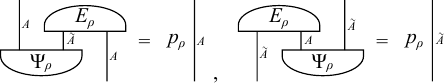}
  \label{eq:basic_zigzag}.
 \end{alignat}
 Note that the left and right equalities displayed in the diagram mean, respectively,
 $(\id_A \ot E_\rho) \c (\Psi_\rho \ot \id_A) = p_\rho \id_A$ and
 $(E_\rho \ot \id_\tA) \c (\id_\tA \ot \Psi_\rho) = p_\rho \id_\tA$.
\end{lemma}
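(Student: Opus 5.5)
The plan is to obtain $E_\rho$ by steering, using the doubled state $\Psi_\rho \ot \Psi_\rho$. I will then read off both equations in \eqref{eq:basic_zigzag} from Lemma~\ref{lemma:basic_pure_eq}.

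\emph{Setting up the doubled state.} Write $\trho \coloneqq (e_A \ot \id_\tA)(\Psi_\rho)$, which is internal by Lemma~\ref{lemma:basic_complementary_internal}. Regard $\Psi_\rho \ot \Psi_\rho$ as a state of $A\tA A\tA$, with factors labelled $A_1\tA_1A_2\tA_2$. Group the \emph{outer} factors $A_1\tA_2$ as one system and the \emph{middle} factors $\tA_1A_2$ as the other. The state is pure by Lemma~\ref{lemma:basic_pure_internal}. Its marginal on the outer system is $\rho \ot \trho$, so after the permutation convention it is a purification of $\rho\ot\trho$ with purifying system $\tA A$. By Lemma~\ref{lemma:basic_pure_internal}, $\rho\ot\trho$ is internal.

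\emph{Constructing $E_\rho$ and $p_\rho$.} By Lemma~\ref{lemma:basic_internal_refinement}, there is $\varepsilon>0$ with $\varepsilon\Psi_\rho \in D_{\rho\ot\trho}$, viewing $\Psi_\rho$ as a state of the outer system $A_1\tA_2$. Lemma~\ref{lemma:basic_steering} then gives an effect $E_\rho\in\Eff(\tA A)$ on the middle factors such that
\[
(\id_{A_1}\ot E_\rho\ot\id_{\tA_2})(\Psi_\rho\ot\Psi_\rho)=\varepsilon\,\Psi_\rho .
\]
I set $p_\rho\coloneqq\varepsilon$.

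\emph{First equation.} Define $M\coloneqq(\id_A\ot E_\rho)\c(\Psi_\rho\ot\id_A)\in\Trans_\Real(A,A)$. By the interchange law, $(M\ot\id_\tA)(\Psi_\rho)$ is exactly the left-hand side of the display above, so it equals $(\varepsilon\,\id_A\ot\id_\tA)(\Psi_\rho)$. Since $\rho$ is internal, Lemma~\ref{lemma:basic_pure_eq} gives $M=p_\rho\id_A$.

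\emph{Second equation.} Define $N\coloneqq(E_\rho\ot\id_\tA)\c(\id_\tA\ot\Psi_\rho)\in\Trans_\Real(\tA,\tA)$. Then $(\id_A\ot N)(\Psi_\rho)$ is again the same contracted diagram, hence equals $\varepsilon\Psi_\rho$. I apply the mirror version of Lemma~\ref{lemma:basic_pure_eq}, in which $\Psi_\rho$ purifies the internal state $\trho$ from the left. Its proof is identical, since Lemma~\ref{lemma:basic_steering} applies equally to left purifications after a swap. This gives $N=p_\rho\id_\tA$.

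\emph{Main obstacle.} The only delicate point is bookkeeping. One must check that both zigzag composites really reduce to the same contraction $(\id\ot E_\rho\ot\id)(\Psi_\rho\ot\Psi_\rho)$. This means tracking the suppressed permutation channels carefully, so that the outer/middle grouping is consistent with the conventions of Lemmas~\ref{lemma:basic_steering} and~\ref{lemma:basic_pure_eq}.
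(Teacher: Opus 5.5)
Your proposal is correct and follows the paper's proof essentially step for step. In both, steering on the middle factors of the pure doubled state $\Psi_\rho\ot\Psi_\rho$ (a purification of the internal state $\rho\ot\trho$) yields $E_\rho$ with $(\id_A\ot E_\rho\ot\id_\tA)(\Psi_\rho\ot\Psi_\rho)=p_\rho\Psi_\rho$. Both zigzag identities then follow from Lemma~\ref{lemma:basic_pure_eq}, applied with $\Psi_\rho$ viewed as a purification of $\rho$ and, with the two systems interchanged, of $\trho$.
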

This is essentially a special case of Corollary~19 of Ref.~\cite{Chi-Dar-Per-2010}.
\begin{proof}
 By Lemma~\ref{lemma:basic_complementary_internal}, the complementary state $\trho$ of $\rho$ with respect
 to $\Psi_\rho$ is internal.
 By Lemma~\ref{lemma:basic_pure_internal}, $\rho \ot \trho$ is internal.
 Lemma~\ref{lemma:basic_internal_refinement} therefore gives $p_\rho > 0$ with $p_\rho \Psi_\rho \le \rho \ot \trho$,
 i.e., $p_\rho \Psi_\rho \in D_{\rho \ot \trho}$.
 Since $\Psi_\rho \ot \Psi_\rho$ is the parallel composition of normalized pure states,
 it is pure by Lemma~\ref{lemma:basic_pure_internal}; it purifies $\rho \ot \trho$ in the sense that
 $(\id_A \ot e_{\tA A} \ot \id_\tA)(\Psi_\rho \ot \Psi_\rho) = \rho \ot \trho$.
 Applying Lemma~\ref{lemma:basic_steering} to $p_\rho \Psi_\rho \in D_{\rho \ot \trho}$ gives an effect
  $E_\rho \in \Eff(\tA A)$ such that
 \begin{alignat}{1}
  \includegraphics[scale=1.0,alt={
  (\id_A \ot E_\rho \ot \id_\tA)(\Psi_\rho \ot \Psi_\rho) = p_\rho \Psi_\rho
  }]{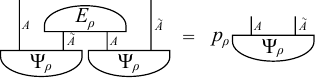}
  \label{eq:basic_zigzag_swapping}.
 \end{alignat}
 This equation can be rewritten as
 \begin{alignat}{1}
  \includegraphics[scale=1.0,alt={
  (((\id_A \ot E_\rho) \c (\Psi_\rho \ot \id_A)) \ot \id_\tA)(\Psi_\rho)
  = (p_\rho \id_A \ot \id_\tA)(\Psi_\rho),
  (\id_A \ot ((E_\rho \ot \id_\tA) \c (\id_\tA \ot \Psi_\rho)))(\Psi_\rho)
  = (\id_A \ot p_\rho \id_\tA)(\Psi_\rho)
  }]{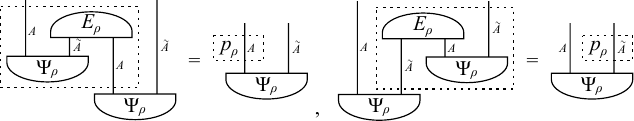}.
  \label{eq:basic_zigzag_swapping2}
 \end{alignat}
 Since $\Psi_\rho$ is a purification of the internal state $\rho$, the left equality of
 Eq.~\eqref{eq:basic_zigzag_swapping2} and Lemma~\ref{lemma:basic_pure_eq} imply the left equality of
 Eq.~\eqref{eq:basic_zigzag}.
 Similarly, since $\Psi_\rho$ is a purification of the internal state $\trho$, the right equality of
 Eq.~\eqref{eq:basic_zigzag_swapping2} and Lemma~\ref{lemma:basic_pure_eq} with the two systems interchanged imply
 the right equality of Eq.~\eqref{eq:basic_zigzag}.
\end{proof}

\begin{lemma} \label{lemma:homogeneity_iso}
 For internal normalized states $\tau$ and $\rho$ of $A$, there exists a linear automorphism $\Gamma_{\tau,\rho}$
 of $\St_+(A)$ such that
 $\Gamma_{\tau,\rho}(\tau) = \rho$ and both $\Gamma_{\tau,\rho}$ and
$\Gamma_{\tau,\rho}^{-1}$ belong to $\Trans_+(A,A)$.
\end{lemma}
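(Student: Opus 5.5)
We build $\Gamma_{\tau,\rho}$ by "teleporting" through purifications, using the zigzag identities of Lemma~\ref{lemma:basic_zigzag}. Fix purifications $\Psi_\tau,\Psi_\rho\in\StN(A\tA)$ of the internal states $\tau$ and $\rho$. Lemma~\ref{lemma:basic_zigzag} applied to $\tau$ gives an effect $E_\tau\in\Eff(\tA A)$ and $p_\tau>0$. Define
\begin{alignat}{1}
 \Gamma_{\tau,\rho} \coloneqq p_\tau^{-1}\,(\id_A \ot E_\tau)\c(\Psi_\rho \ot \id_A),
\end{alignat}
which takes an input on $A$, prepares $\Psi_\rho$ on $A\tA$, and applies $E_\tau$ to the $\tA$ factor of $\Psi_\rho$ together with the input. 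Since $(\id_A\ot E_\tau)\c(\Psi_\rho\ot\id_A)\in\Trans(A,A)$, we get $\Gamma_{\tau,\rho}\in\Trans_+(A,A)$. The candidate inverse is built symmetrically, with the roles of $\tau$ and $\rho$ exchanged: $\Gamma' \coloneqq p_\rho^{-1}(\id_A\ot E_\rho)\c(\Psi_\tau\ot\id_A)$, which also lies in $\Trans_+(A,A)$.

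First we check $\Gamma_{\tau,\rho}(\tau)=\rho$. Write $\tau=(\id_A\ot e_\tA)(\Psi_\tau)$. Then $\Gamma_{\tau,\rho}(\tau)$ is obtained by applying $p_\tau^{-1}(\id_A\ot E_\tau\ot e_\tA)$ to $\Psi_\rho\ot\Psi_\tau$. Reorganizing the diagram, this is $\Psi_\rho$ with its $\tA$ factor fed into the map
\begin{alignat}{1}
 p_\tau^{-1}(E_\tau\ot e_\tA)\c(\id_\tA\ot\Psi_\tau).
\end{alignat}
The right zigzag identity for $\tau$ says $(E_\tau\ot\id_\tA)\c(\id_\tA\ot\Psi_\tau)=p_\tau\id_\tA$. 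Composing with $e_\tA$ shows that this map equals $e_\tA$, so $\Gamma_{\tau,\rho}(\tau)=(\id_A\ot e_\tA)(\Psi_\rho)=\rho$. The same computation gives $\Gamma'(\rho)=\tau$.

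The main step is showing that $\Gamma'\c\Gamma_{\tau,\rho}=\id_A$ and $\Gamma_{\tau,\rho}\c\Gamma'=\id_A$. Expanding $\Gamma'\c\Gamma_{\tau,\rho}$ gives a composite containing $\Psi_\rho$, $\Psi_\tau$, $E_\tau$ and $E_\rho$. The zigzag identities each involve a single state, $E_\tau$ paired with $\Psi_\tau$ and $E_\rho$ paired with $\Psi_\rho$, whereas this composite pairs them crosswise. So the identities cannot be applied directly, and I expect this to be the main obstacle.

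To handle it, I would not compute the composite directly but test it on a purification, using Lemma~\ref{lemma:basic_pure_eq}. Since $\tau$ is internal and $\Psi_\tau$ purifies it, it suffices to show
\begin{alignat}{1}
 ((\Gamma'\c\Gamma_{\tau,\rho})\ot\id_\tA)(\Psi_\tau)=\Psi_\tau .
\end{alignat}
The first step on the left is $(\Gamma_{\tau,\rho}\ot\id_\tA)(\Psi_\tau)$. The left zigzag identity (for $\tau$) slides $E_\tau$ against $\Psi_\tau$ across the diagram. I expect this to give
\begin{alignat}{1}
 (\Gamma_{\tau,\rho}\ot\id_\tA)(\Psi_\tau) = (\id_A\ot \Theta)(\Psi_\rho),
\end{alignat}
where $\Theta$ is the transpose-like map on $\tA$ built from $E_\tau$ and $\Psi_\tau$. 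Applying $\Gamma'$ then lets the left zigzag identity for $\rho$ cancel the $E_\rho$–$\Psi_\rho$ pair, returning $\Psi_\tau$.

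If the sliding does not close up cleanly, the fallback is a weaker route. Show that $\Gamma'\c\Gamma_{\tau,\rho}$ maps the relevant purification to one of $\tau$, identify it as a reversible channel on $\tA$ via ES purification, and absorb that channel into the definition of $\Gamma'$.

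Once both composites are the identity, $\Gamma_{\tau,\rho}$ is a linear bijection on $\St_\Real(A)$. Both it and its inverse lie in $\Trans_+(A,A)$, so both map $\St_+(A)$ into itself. Hence $\Gamma_{\tau,\rho}(\St_+(A))=\St_+(A)$, and $\Gamma_{\tau,\rho}$ is the required linear automorphism of $\St_+(A)$ with $\Gamma_{\tau,\rho}(\tau)=\rho$ and $\Gamma_{\tau,\rho}^{-1}\in\Trans_+(A,A)$.
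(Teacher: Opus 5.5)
Your $\Gamma_{\tau,\rho}$ and $\Gamma'$ are exactly the maps the paper uses. Your check of $\Gamma_{\tau,\rho}(\tau)=\rho$ and your closing cone argument are also correct. The gap is the invertibility step, which is the real content of the lemma. You never establish $\Gamma'\c\Gamma_{\tau,\rho}=\id_A$ or $\Gamma_{\tau,\rho}\c\Gamma'=\id_A$, and the obstacle you cite does not exist.

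The crosswise pairing inside each factor becomes a matched pairing in the composite. The $\Psi_\rho$ prepared by $\Gamma_{\tau,\rho}$ sends its $A$ output into the $A$ slot of the $E_\rho$ coming from $\Gamma'$. That effect's $\tA$ slot receives the $\tA$ factor of the $\Psi_\tau$ prepared by $\Gamma'$. This is exactly the configuration of the right-hand identity in Eq.~\eqref{eq:basic_zigzag}. Write $Z_\sigma \coloneqq (E_\sigma\ot\id_\tA)\c(\id_\tA\ot\Psi_\sigma)$. The interchange law then gives
\[
 \Gamma'\c\Gamma_{\tau,\rho} = p_\rho^{-1}p_\tau^{-1}\,(\id_A\ot E_\tau)\c(\id_A\ot Z_\rho\ot\id_A)\c(\Psi_\tau\ot\id_A).
\]
The right identity for $\rho$ gives $Z_\rho = p_\rho\id_\tA$, which reduces this to $p_\tau^{-1}(\id_A\ot E_\tau)\c(\Psi_\tau\ot\id_A)$. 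The left identity for $\tau$ then gives $\id_A$. Exchanging the roles of $\tau$ and $\rho$ gives $\Gamma_{\tau,\rho}\c\Gamma'=\id_A$. This two-step cancellation is precisely the paper's argument.

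Your purification-test alternative would also close, but not in the form you anticipate. It is the right-hand identity for $\tau$, not the left one, that applies, and it gives $(\Gamma_{\tau,\rho}\ot\id_\tA)(\Psi_\tau)=\Psi_\rho$ exactly. So your ``transpose-like'' $\Theta$ is simply $\id_\tA$. Symmetrically, the right-hand identity for $\rho$ gives $(\Gamma'\ot\id_\tA)(\Psi_\rho)=\Psi_\tau$. Lemma~\ref{lemma:basic_pure_eq}, applied with the internal state $\tau$, then yields $\Gamma'\c\Gamma_{\tau,\rho}=\id_A$. Applied with the internal state $\rho$, it yields $\Gamma_{\tau,\rho}\c\Gamma'=\id_A$. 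This is a slightly longer route to the same conclusion.

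The fallback is unnecessary and, as stated, unjustified. The map $\Gamma'\c\Gamma_{\tau,\rho}$ is only known to lie in $\Trans_+(A,A)$, so its output on $\Psi_\tau$ is not known to be a normalized pure state to which ES purification applies. It is also unclear how a reversible channel on $\tA$ would be ``absorbed'' into a map acting on $A$.
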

\begin{proof}
 For $\sigma \in \{\tau,\rho\}$, let $\Psi_\sigma \in \StN(A\tA)$ be a purification of $\sigma$,
 and let $E_\sigma$ and $p_\sigma$ be as in Lemma~\ref{lemma:basic_zigzag}.
 Define
 \begin{alignat}{1}
  \includegraphics[scale=1.0,alt={
   \Gamma_{\tau,\rho} \coloneqq p_\tau^{-1} (\id_A \ot E_\tau) \c (\Psi_\rho \ot \id_A),
   \Gamma_{\rho,\tau} \coloneqq p_\rho^{-1} (\id_A \ot E_\rho) \c (\Psi_\tau \ot \id_A)
  }]{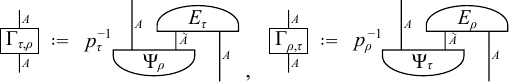}
  \label{eq:homogeneity_iso_Gamma}.
 \end{alignat}
 Then Lemma~\ref{lemma:basic_zigzag} yields
 \begin{alignat}{1}
  \footnoteinset{3.74}{0.3}{\eqref{eq:basic_zigzag}}{%
  \footnoteinset{7.02}{0.3}{\eqref{eq:basic_zigzag}}{%
  \includegraphics[scale=1.0,alt={
  \Gamma_{\rho,\tau} \c \Gamma_{\tau,\rho} = \id_A
  }]{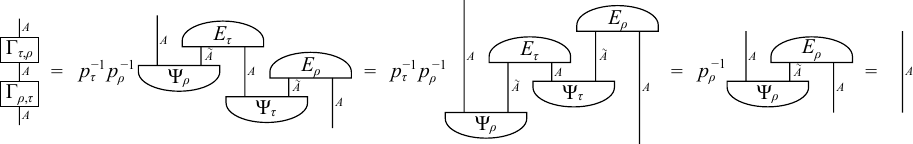}}}
  \label{eq:homogeneity_iso_Gamma_id1},
 \end{alignat}
 and
 \begin{alignat}{1}
  \footnoteinset{3.74}{0.3}{\eqref{eq:basic_zigzag}}{%
  \footnoteinset{7.02}{0.3}{\eqref{eq:basic_zigzag}}{%
  \includegraphics[scale=1.0,alt={
  \Gamma_{\tau,\rho} \c \Gamma_{\rho,\tau} = \id_A
  }]{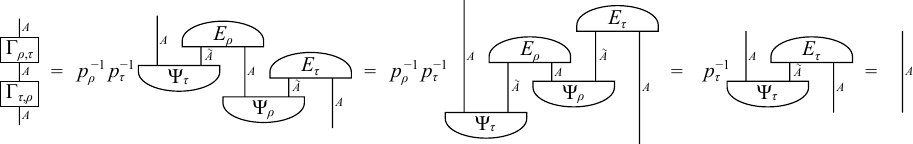}}}
  \label{eq:homogeneity_iso_Gamma_id2},
 \end{alignat}
 so $\Gamma_{\rho,\tau}$ is the inverse of $\Gamma_{\tau,\rho}$.
 Moreover,
 \begin{alignat}{1}
  \footnoteinset{4.60}{0.3}{\eqref{eq:basic_zigzag}}{%
  \includegraphics[scale=1.0,alt={\Gamma_{\tau,\rho}(\tau) = \rho}]{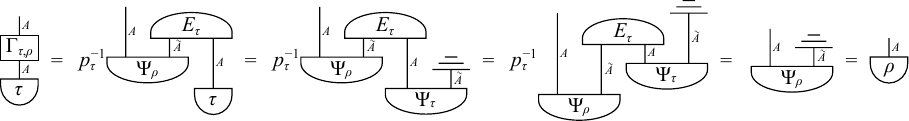}}
  \label{eq:homogeneity_iso_Gamma_tau},
 \end{alignat}
 where the second equality uses that $\Psi_\tau$ purifies $\tau$ and the last equality uses
 that $\Psi_\rho$ purifies $\rho$.
 Formula~\eqref{eq:homogeneity_iso_Gamma} also shows that $\Gamma_{\tau,\rho} \in \Trans_+(A,A)$,
 and similarly for its inverse.
 Hence $\Gamma_{\tau,\rho}$ is a linear automorphism of $\St_+(A)$.
\end{proof}

\begin{proposition}[Homogeneity of state cones] \label{pro:homogeneity}
 In every OPT satisfying local equivalence and ES purification, $\St_+(A)$ is homogeneous for every system $A$.
\end{proposition}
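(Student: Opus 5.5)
The plan is to reduce homogeneity to Lemma~\ref{lemma:homogeneity_iso} by a rescaling argument. By Lemma~\ref{lemma:basic_internal_exists}, $\St_+(A)$ has interior points, so the first condition in the definition of a homogeneous cone holds. It then remains to connect two arbitrary interior points $x$ and $y$ of $\St_+(A)$ by a linear automorphism of the cone.

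First I will show that every interior point $x$ of $\St_+(A)$ is nonzero. If $\zero$ were interior, $\St_+(A)$ would be all of $\St_\Real(A)$. Then $-\sigma \in \St_+(A)$ for some normalized $\sigma$, and $e_A(-\sigma) = -1 < 0$ contradicts the positivity of $e_A$ on $\St_+(A)$. Hence, as shown before Eq.~\eqref{eq:basic_St+_St}, we can write $x = e_A(x)\,\ol{x}$ with $e_A(x) > 0$ and $\ol{x} \in \StN(A)$. Scaling by the positive number $e_A(x)^{-1}$ is a linear automorphism of $\St_\Real(A)$ preserving $\St_+(A)$ and its interior, so $\ol{x}$ is an interior point. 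By Lemma~\ref{lemma:basic_internal}, $\ol{x}$ is therefore an internal normalized state. The same argument gives an internal normalized state $\ol{y}$ with $y = e_A(y)\,\ol{y}$.

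Next I apply Lemma~\ref{lemma:homogeneity_iso} to the internal states $\ol{x}$ and $\ol{y}$. It gives a linear automorphism $\Gamma_{\ol{x},\ol{y}}$ of $\St_+(A)$ with $\Gamma_{\ol{x},\ol{y}}(\ol{x}) = \ol{y}$. Then I set
\begin{alignat}{1}
 \Gamma_{x,y} \coloneqq \frac{e_A(y)}{e_A(x)}\, \Gamma_{\ol{x},\ol{y}} .
\end{alignat}
This is a composition of linear automorphisms of $\St_+(A)$, so it is again one. By linearity, $\Gamma_{x,y}(x) = e_A(y)\,\Gamma_{\ol{x},\ol{y}}(\ol{x}) = e_A(y)\,\ol{y} = y$.

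All the substantive work, namely the zigzag identities from ES purification and local equivalence, is already contained in Lemma~\ref{lemma:homogeneity_iso}. The only point needing care here is the passage from interior points of the cone to internal normalized states, which is handled by the nonvanishing of $e_A$ and Lemma~\ref{lemma:basic_internal}. I expect no real obstacle.
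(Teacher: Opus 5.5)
Your proposal is correct and follows essentially the same route as the paper: normalize the two interior points to internal normalized states via Lemma~\ref{lemma:basic_internal}, connect them with the automorphism from Lemma~\ref{lemma:homogeneity_iso}, and rescale by $e_A(y)/e_A(x)$. Your explicit check that interior points are nonzero fills in a step the paper leaves implicit. You also do not need the paper's separate treatment of $A = I$, since Lemma~\ref{lemma:homogeneity_iso} is stated for every system and that case is trivially homogeneous anyway.
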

\begin{proof}
 If $A = I$, the state cone is a half-line and is homogeneous.
 Assume $A \neq I$.
 By Lemma~\ref{lemma:basic_internal_exists}, the cone has an interior point.
 Let $x$ and $y$ be arbitrary interior points of $\St_+(A)$.
 Write $x = \lambda_x \ol{x}$ and $y = \lambda_y \ol{y}$ with
 $\lambda_x \coloneqq e_A(x) > 0$ and $\lambda_y \coloneqq e_A(y) > 0$,
 and normalized states $\ol{x}$ and $\ol{y}$.
 These normalized states are also interior points, and hence internal by Lemma~\ref{lemma:basic_internal}.
 Lemma~\ref{lemma:homogeneity_iso} gives a linear automorphism $\Gamma_{\ol{x},\ol{y}}$ of $\St_+(A)$ sending $\ol{x}$
 to $\ol{y}$.
 Then $\Upsilon_{\ol{x},\ol{y}} \coloneqq \lambda_x^{-1} \lambda_y \Gamma_{\ol{x},\ol{y}}$ is a linear automorphism
 of $\St_+(A)$ sending $x$ to $y$.
\end{proof}

The following lemma is Theorem~2 of Ref.~\cite{Bar-Udu-Wet-2023} translated into the present OPT notation.
\begin{lemma} \label{lemma:Barnum_self_dual}
 In an OPT, suppose that $\St_+(A)$ is homogeneous and that, for any two normalized pure states
  $\psi$ and $\phi$ of $A$,
 there exists a reversible channel $U$ on $A$ with $\phi = U(\psi)$.
 Then $\St_+(A)$ is a symmetric cone.
\end{lemma}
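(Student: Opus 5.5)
The plan is to follow the group-theoretic route of Ref.~\cite{Bar-Udu-Wet-2023}, adapted to the OPT notation. It has three parts: build a canonical inner product from the reversible channels, show that a suitable compact stabilizer acts transitively on pure rays, and then prove self-duality.

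\emph{Step 1: a compact symmetry group and an invariant inner product.} Let $G_A$ be the set of reversible channels on $A$. By Lemma~\ref{lemma:basic_trans_linear}, each $U \in G_A$ is a linear automorphism of $\St_+(A)$ that preserves $\StN(A)$ and satisfies $e_A \c U = e_A$ (Lemma~\ref{lemma:basic_deterministic}). Since $\StN(A)$ is compact and spans $\St_\Real(A)$, $G_A$ is a bounded subgroup of $GL(\St_\Real(A))$. Its closure $\ol{G}_A$ is therefore a compact group of automorphisms of the closed cone $\St_+(A)$ that fix $e_A$.

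Average with Haar measure on $\ol{G}_A$. Fixing a pure state $\psi$, set $u \coloneqq \int U(\psi)\,dU$; this is $\ol{G}_A$-invariant and independent of $\psi$ by the transitivity hypothesis. It lies in the interior, because its refinement set contains a multiple of every pure state and pure states span $\St_\Real(A)$. Averaging an arbitrary inner product likewise gives a $\ol{G}_A$-invariant inner product $\braket{\cdot,\cdot}$, which I normalize so that $\braket{u,x} = e_A(x)$. This is possible because both $e_A$ and $\braket{u,\cdot}$ are $\ol{G}_A$-invariant functionals; I would check the compatibility by averaging $e_A$ into the inner product.

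\emph{Step 2: the stabilizer acts transitively on pure rays.} Let $K_u$ be the stabilizer of $u$ in $\mathrm{Aut}(\St_+(A))$. Homogeneity makes $\mathrm{Aut}(\St_+(A))$ a Lie group acting transitively on the interior. By the standard argument, the stabilizer of an interior point is compact and contains $\ol{G}_A$ after conjugation. Combined with the hypothesis, this shows that $K_u$ is transitive on the extreme rays. It also lets me replace $\braket{\cdot,\cdot}$ by a $K_u$-invariant inner product.

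\emph{Step 3: self-duality.} This is the main obstacle. I would use Vinberg's criterion: a homogeneous cone is self-dual with respect to some inner product iff its automorphism group is closed under the adjoint. Concretely, I would take the Cartan decomposition $\mathfrak{aut} = \mathfrak{k}_u \oplus \mathfrak{p}$ with respect to the $K_u$-invariant inner product. I would then show that $\mathfrak{p}$ consists of self-adjoint maps, using transitivity of $K_u$ on extreme rays to rule out a non-reductive (triangular) part, as in Ref.~\cite{Bar-Udu-Wet-2023}.

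An alternative I would try first, because it is more elementary, is a direct argument.
\begin{enumerate}
\item Invariance and transitivity give a constant $c$ with $\braket{\psi,\psi} = c$ for every pure state $\psi$. They also give $\braket{\psi,\phi} \ge 0$ for pure $\psi,\phi$, which I would derive by averaging over $K_u$ and using that $u$ lies in the interior. This yields $\St_+(A) \subseteq \St_+(A)^*$.
\item For the reverse inclusion, take $y$ in the dual cone but outside $\St_+(A)$. Its nearest-point projection onto the cone produces a boundary face. Homogeneity, via maps $\Gamma$ in $\mathrm{Aut}(\St_+(A))$ whose adjoints preserve the dual cone, would then move this face to contradict the $K_u$-symmetry.
\end{enumerate}
If this direct route fails, I would fall back on citing Vinberg and Ref.~\cite{Bar-Udu-Wet-2023} verbatim, checking only that the OPT objects ($\St_+(A)$, reversible channels, $e_A$) satisfy their hypotheses.
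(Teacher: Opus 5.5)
Your fallback is the paper's proof. The paper only checks that each reversible channel $U$ acts on $\St_\Real(A)$ as a linear bijection with $U(\St_+(A))=\St_+(A)$, the inverse map being induced by $U^{-1}$, and with $e_A\circ U=e_A$ by Lemma~\ref{lemma:basic_deterministic}. So $U$ is a normalized order isomorphism, and your transitivity hypothesis is ``pure transitivity'' in the sense of Ref.~\cite{Bar-Udu-Wet-2023}. Theorem~2 there, together with homogeneity, gives self-duality and hence symmetry. No separate appeal to Vinberg is needed. For this check you also do not need Lemma~\ref{lemma:basic_trans_linear}: it relies on local tomography, which this lemma does not assume, and the induced linear map is all that is used.

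As a self-contained argument, however, your main route has a genuine gap exactly where the content of that theorem lies.

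Steps~1--2 are fine in substance. The averaging projection $\int U\,dU$ sends $x$ to $e_A(x)u$, so every invariant functional is a multiple of $e_A$. This justifies $\langle u,\cdot\rangle=e_A$ and even shows that $e_A$ is $K_u$-invariant. Interiority of $u$ is better argued via a functional $f\ge 0$ with $f(u)=0$, which would then have to vanish on the whole orbit, i.e.\ on all pure states. But the conclusion of Step~2, that $K_u$ is transitive on extreme rays, is just your hypothesis restated, since $G_A\subseteq K_u$. Everything nontrivial---how homogeneity plus this transitivity forces self-duality---is pushed into Step~3, which is only a gesture (``rule out a triangular part'').

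The elementary alternative fails at its first claim. Averaging over $K_u$ yields only $\int_{K_u}\langle\psi,k\phi\rangle\,dk=c\,e_A(\psi)>0$, i.e.\ positivity on average, not $\langle\psi,\phi\rangle\ge 0$ for every pair. Write $x=e_A(x)u+x_\perp$; invariance leaves the scale of the inner product on $u^\perp$ free. Since the orbit average of $\phi_\perp$ vanishes, some pure pairs have $\langle\psi_\perp,\phi_\perp\rangle<0$ (whenever $A$ has more than one pure state), and enlarging that scale makes their overlap negative. So $\St_+(A)\subseteq\St_+(A)^*$ cannot follow from invariance and interiority alone; homogeneity must be used to fix the scale. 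Your item~2 (nearest-point projection, ``moving the face'') does not yet say how.

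You therefore need either to carry out the argument of Ref.~\cite{Bar-Udu-Wet-2023} in full, or to do what the paper does: cite its Theorem~2 after the hypothesis check above.
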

\begin{proof}
 Define an order on $\St_\Real(A)$ by $x \le y$ iff $y - x \in \St_+(A)$.
 A reversible channel $U$ is a normalized order isomorphism in the sense of Ref.~\cite{Bar-Udu-Wet-2023}.
 Indeed, $U$ is a linear bijection satisfying $U(\St_+(A)) = \St_+(A)$, and Lemma~\ref{lemma:basic_deterministic}
 gives $e_A \c U = e_A$.
 Thus the theory satisfies pure transitivity in the sense of Ref.~\cite{Bar-Udu-Wet-2023}.
 Since $\St_+(A)$ is homogeneous by assumption, Theorem~2 of Ref.~\cite{Bar-Udu-Wet-2023} implies that $\St_+(A)$ is
 self-dual.
 Hence it is homogeneous and self-dual, and therefore symmetric.
\end{proof}

\begin{thm}[Symmetry of state cones] \label{thm:self_dual}
 In every OPT satisfying local equivalence and ES purification, $\St_+(A)$ is a symmetric cone for every system $A$.
\end{thm}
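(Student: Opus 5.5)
The proof will simply combine the three results already in place, applying Lemma~\ref{lemma:Barnum_self_dual} to each system $A$. That lemma has two hypotheses: homogeneity of $\St_+(A)$, and transitivity of reversible channels on normalized pure states. Both have been established, so the main work is checking that they are stated in exactly the form the lemma needs.

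\textbf{Trivial system.} I will first dispose of $A = I$, as in the proof of Proposition~\ref{pro:homogeneity}. Here $\St_\Real(I) = \Real$ and $\St_+(I) = \Realp$. This half-line is self-dual with respect to the standard inner product and is homogeneous, so it is symmetric.

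\textbf{General system.} For arbitrary $A$, I will invoke Proposition~\ref{pro:homogeneity} to obtain homogeneity of $\St_+(A)$. Its proof relies on Lemma~\ref{lemma:homogeneity_iso}, which uses both local equivalence (through Lemma~\ref{lemma:basic_pure_eq} and the identification of transformations with linear maps in Lemma~\ref{lemma:basic_trans_linear}) and ES purification. Next, Lemma~\ref{lemma:basic_pure_unitary} gives, for any two normalized pure states $\psi,\phi$ of $A$, a reversible channel $U$ on $A$ with $\phi = U(\psi)$. This is precisely the pure-transitivity hypothesis. Lemma~\ref{lemma:Barnum_self_dual} then yields that $\St_+(A)$ is self-dual with respect to some inner product on $\St_\Real(A)$. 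Being homogeneous as well, it is symmetric.

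\textbf{Main point to check.} The step I expect to need the most care is confirming that the hypotheses of Theorem~2 of Ref.~\cite{Bar-Udu-Wet-2023} are met, since that theorem is external. It requires a finite-dimensional ordered vector space with a closed, generating, pointed cone and an order unit ($e_A$), together with transitivity of normalized order automorphisms on extreme rays. These conditions hold as follows:
\begin{itemize}
 \item Closedness follows from Lemma~\ref{lemma:basic_compact_closed}.
 \item The cone is generating because $\St_\Real(A) = \Span \StN(A)$.
 \item Pointedness holds because $e_A$ is strictly positive on nonzero elements of $\St_+(A)$, since every such element is $a\sigma$ with $a>0$.
 \item Extreme rays of $\St_+(A)$ correspond to normalized pure states.
 \item Reversible channels preserve the cone and satisfy $e_A \c U = e_A$ by Lemma~\ref{lemma:basic_deterministic}.
\end{itemize}
With these verified, the theorem follows for every system $A$.
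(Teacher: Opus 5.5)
Your proposal is correct and follows the paper's proof: homogeneity from Proposition~\ref{pro:homogeneity}, transitivity of reversible channels on pure states from Lemma~\ref{lemma:basic_pure_unitary}, and then Lemma~\ref{lemma:Barnum_self_dual}. Your separate treatment of $A = I$ and your explicit check of the external theorem's hypotheses are harmless extra verification; the paper handles these inside Proposition~\ref{pro:homogeneity} and Lemma~\ref{lemma:Barnum_self_dual}.
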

\begin{proof}
 Proposition~\ref{pro:homogeneity} gives homogeneity of $\St_+(A)$.
 Lemma~\ref{lemma:basic_pure_unitary} gives transitivity of reversible channels on normalized pure states.
 Lemma~\ref{lemma:Barnum_self_dual} then implies that $\St_+(A)$ is symmetric.
\end{proof}

\subsection{Symmetric cones and Euclidean Jordan algebras}

By Theorem~\ref{thm:self_dual}, $\St_+(A)$ is a symmetric cone.
It is well known that the real vector space spanned by a symmetric cone can be regarded as a Euclidean Jordan algebra
(EJA).
This subsection recalls how the vector space spanned by $\St_+(A)$ is naturally associated with an EJA,
and records basic facts about symmetric cones and EJAs.
For proofs, see for example Ref.~\cite{Far-1994}.

An \termdef{EJA} is a finite-dimensional real inner-product space $\EJA$ equipped
with a bilinear product $\b$ and a unit element $u$ such that, for all $x$ and $y$ in $\EJA$,
$x \b y = y \b x$, $x \b (x^2 \b y) = x^2 \b (x \b y)$ where $x^2 \coloneqq x \b x$, $u \b x = x$, and
\begin{alignat}{1}
 \braket{x \b y, z} &= \braket{y, x \b z}, \quad \forall x,y,z \in \EJA.
 \label{eq:EJA_associative}
\end{alignat}
The \termdef{positive cone} of $\EJA$ is $\EJA^+ \coloneqq \{ x^2 \mid x \in \EJA \}$.
This cone is symmetric.
Conversely, as recalled below, for every symmetric cone $\cC$, there exists an EJA $\EJA$ with $\cC = \EJA^+$.
If $x$ and $y$ are in $\EJA$ and satisfy $y - x \in \EJA^+$, we write $x \le y$ or $y \ge x$.

An element $p \in \EJA$ with $p \b p = p$ is called an \termdef{idempotent}.
A nonzero idempotent $p$ is \termdef{primitive} if there do not exist nonzero idempotents $p_1$ and $p_2$ with
 $p_1 + p_2 = p$
and $p_1 \b p_2 = \zero$.
For a convex cone $\cC$, a half-line $\Realp x \coloneqq \{ c x \mid c \in \Realp \}$ generated by nonzero $x \in \cC$
is an \termdef{extreme ray} of $\cC$ if $x = y+z$ with $y$ and $z$ in $\cC$ implies that $y$ and $z$ are in $\Realp x$.
We also say that such $x$ \termdef{lies on an extreme ray} of $\cC$.
For every primitive idempotent $p$, $\Realp p$ is an extreme ray of $\EJA^+$, and every extreme ray
of $\EJA^+$ is of this form.
Elements $x$ and $y$ in $\EJA$ are called \termdef{orthogonal}, written $x \perp y$, if $x \b y = \zero$.
A family of primitive idempotents $\{ p_i \}_{i=1}^r$ satisfying $\sum_{i=1}^r p_i = u$ is called a \termdef{Jordan
 frame}.
The natural number $r$ is uniquely determined and called the \termdef{rank} of $\EJA$.
Elements of a Jordan frame are mutually orthogonal.
Any family of mutually orthogonal primitive idempotents has size at most $r$ and can be extended to a Jordan frame.
In particular, a mutually orthogonal family of $r$ primitive idempotents is already a Jordan frame.
For all $x$ and $y$ in $\EJA^+$,
\begin{alignat}{1}
 \braket{x,y} = 0 &\quad\Leftrightarrow\quad x \b y = \zero \quad\Leftrightarrow\quad x \perp y,
 \label{eq:EJA_pos_perp}
\end{alignat}
where the right equivalence is the definition of orthogonality.
Thus any nonzero mutually orthogonal family in $\EJA^+$ has at most $r$ elements.

Every $x \in \EJA$ admits a spectral decomposition $x = \sum_{i=1}^r \lambda_i p_i$, where $\lambda_i \in \Real$
and $\{ p_i \}_{i=1}^r$ is a Jordan frame.
The list of $\lambda_i$ is called the \termdef{spectrum} of $x$.
In this decomposition, $x \in \EJA^+$ iff all $\lambda_i \ge 0$.
The map $x \mapsto \sum_{i=1}^r \lambda_i$ is called the \termdef{trace} and is denoted by $\tr$.
Below we use the standard trace inner product on an EJA, namely $\braket{x,y} \coloneqq \tr(x \bullet y)$.
Then $\braket{u,x} = \tr x$ for the unit $u$, and $\EJA^+$ is self-dual with respect to this inner product.

\begin{thm}[Koecher--Vinberg theorem~\cite{Koe-1957,Vin-1960}] \label{thm:Koecher_Vinberg}
 For any symmetric cone $\cC$ and any interior point $u$ of $\cC$, there exists an EJA $\EJA$ whose underlying real
 vector space is the span of $\cC$, whose unit is $u$, and whose positive cone satisfies $\EJA^+ = \cC$.
\end{thm}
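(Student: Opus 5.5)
The plan follows the Lie-theoretic proof of Faraut--Kor\'anyi (Ref.~\cite{Far-1994}, Chaps.~I--III). Let $\V \coloneqq \Span \cC$ and let $G \coloneqq \{ g \in GL(\V) \mid g(\cC) = \cC \}$ be the automorphism group of $\cC$. It is a closed subgroup of $GL(\V)$, hence a Lie group with Lie algebra $\mathfrak g$.

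\textbf{Step 1: $G$ is self-adjoint.} Self-duality with respect to the given inner product implies that $g \in G$ forces $g^{\mathrm T} \in G$. Indeed, $g^{\mathrm T}$ preserves the dual cone, which is $\cC$.

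\textbf{Step 2: an inner product adapted to $u$.} The resulting EJA does not have to carry the original inner product, only some inner product. I will therefore replace the inner product by one whose orthogonal group contains the stabilizer $G_u$.
\begin{itemize}
 \item Take the characteristic function $\varphi(x) \coloneqq \int_{\cC^*} e^{-\braket{x,y}}\,dy$ on the interior of $\cC$.
 \item It satisfies $\varphi(gx) = |\det g|^{-1}\varphi(x)$, so the Hessian of $\log\varphi$ is $G$-invariant.
 \item This Hessian is positive definite by a Cauchy--Schwarz argument. Evaluated at $u$, it gives an inner product $\braket{\cdot,\cdot}_u$ that is $G_u$-invariant.
\end{itemize}
Self-adjointness and self-duality have to hold for this new inner product. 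The route is to show that some point $e$ satisfies $G_e = G \cap O(\V)$ for the original inner product, for example a critical point of $\varphi$ restricted to $\{\braket{x,x} = \mathrm{const}\}$ or the fixed point of $x \mapsto \nabla\log\varphi$. Then I transport the structure to $u$ using homogeneity: pick $h \in G$ with $h e = u$ and use the inner product $\braket{h^{-1}\cdot, h^{-1}\cdot}$. Self-duality is preserved because $h$ is a cone automorphism. From now on, $G$ is self-adjoint and $K \coloneqq G_u = G \cap O(\V)$ is compact.

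\textbf{Step 3: Cartan decomposition and the Jordan product.} Because $G$ is self-adjoint, $\mathfrak g = \mathfrak k \oplus \mathfrak p$, where $\mathfrak k$ consists of the skew elements and $\mathfrak p$ of the symmetric ones, and
\[
[\mathfrak k,\mathfrak p] \subseteq \mathfrak p, \qquad [\mathfrak p,\mathfrak p] \subseteq \mathfrak k .
\]
The map $\mathfrak p \ni X \mapsto Xu \in \V$ is a linear isomorphism:
\begin{itemize}
 \item Surjectivity: the orbit $Gu$ is the open interior by homogeneity, so $\mathfrak g u = \V$, while $\mathfrak k u = 0$.
 \item Injectivity: if $X \in \mathfrak p$ and $Xu = 0$, then $e^{tX} \in K$ is both orthogonal and positive definite, so $X = 0$.
\end{itemize}
Let $L(x) \in \mathfrak p$ be the unique element with $L(x)u = x$, and define $x \b y \coloneqq L(x)y$. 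The axioms then follow one by one:
\begin{itemize}
 \item Unit: $L(u) = \id$, since $\id \in \mathfrak p$ and $\id\,u = u$.
 \item Commutativity: $x \b y - y \b x = [L(x),L(y)]u = 0$, because the commutator lies in $\mathfrak k$ and $\mathfrak k$ kills $u$.
 \item Eq.~\eqref{eq:EJA_associative}: holds because $L(x)$ is symmetric.
\end{itemize}

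\textbf{Step 4 (expected main obstacle): Jordan identity and $\EJA^+ = \cC$.} For the Jordan identity, I would first show that each $k \in K$ satisfies $kL(x)k^{-1} = L(kx)$, so every element of $\mathfrak k$ acts as a derivation of $\b$. Next, I would show that $D \coloneqq [L(x),L(y)] \in \mathfrak k$ is a derivation, and combine $[L(x),L(y)]z = D z$ with the polarized form of the identity to obtain $[L(x),L(x^2)] = 0$. For the cone, I would use the polar decomposition $G = \exp(\mathfrak p)K$ to show that the interior equals $\{ e^{L(x)}u \}$. I would then identify $e^{L(x)}u$ with the Jordan exponential $\exp(x) = (\exp(x/2))^2$, which gives $\mathrm{int}\,\cC \subseteq \EJA^+$. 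Conversely, squares lie in $\cC$ by self-duality: $\braket{x^2, y} = \braket{x, L(x)y}$, and a positivity argument via $e^{tL(x)}$ completes the inclusion. Taking closures then gives $\EJA^+ = \cC$. The delicate points are the derivation computation behind the Jordan identity and the identification $e^{L(x)}u = \exp(x)$. For both I would first verify power-associativity on the subalgebra generated by $x$, following Ref.~\cite{Far-1994}.
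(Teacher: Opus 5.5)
Your plan is correct and is essentially the Faraut--Kor\'anyi argument of Chap.~III of Ref.~\cite{Far-1994}, which is all the paper relies on here, since it states the Koecher--Vinberg theorem without proof. Transporting the structure from a point $e$ with $G_e = G \cap O(\V)$ to the prescribed $u$ via some $h \in G$ with $he = u$ is exactly what gives an arbitrary unit, though there are two small slips: $e$ should be the fixed point of $x \mapsto -\nabla\log\varphi(x)$ (the map $\nabla\log\varphi$ sends the interior into $-\cC$), and power-associativity is neither needed nor available before the Jordan identity, which follows directly from $\braket{[L(x),L(x^2)]z,w} = \braket{[L(z),L(w)]x,x^2}$ together with the fact that the skew derivation $D = [L(z),L(w)] \in \mathfrak k$ satisfies $\braket{Dx,x^2} = -\tfrac12\braket{Dx,x^2}$.
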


In every OPT satisfying local equivalence and ES purification, $\St_+(A)$ is symmetric by Theorem~\ref{thm:self_dual}.
Since the only $x \in \St_+(A)$ with $e_A(x) = 0$ is $\zero$, $e_A$ is an interior point of the dual cone $\St_+(A)^*$.
Below we fix, for each $A$, an EJA structure $\EJA_A$ whose positive cone is $\St_+(A)$ and whose trace $\tr_A$ agrees
with the deterministic effect: $\tr_A x = e_A(x)$ for all $x \in \EJA_A$.
Such a choice is always possible: for a symmetric cone $\cC$ and an interior point $e$ of its dual cone,
one can put an EJA structure on $\Span \cC$ whose positive cone is $\cC$ and whose trace functional is $e$;
see Chap.~III of Ref.~\cite{Far-1994}.
Let $u_A$ be the unit of $\EJA_A$, and let the inner product of $\EJA_A$ be the trace inner product.
Then $e_A(x) = \tr_A x = \braket{u_A,x}$.
As a real vector space, $\EJA_A = \St_\Real(A)$, and $\St_+(A) = \EJA_A^+$.
Since $\Eff_\Real(A)$ is identified with $\St_\Real(A)^*$, it is also identified with $\EJA_A^*$.
We write $r_A$ for the rank of $\EJA_A$ and call it the \termdef{rank} of $A$.
A state $\rho \in \St_+(A)$ is normalized iff $\tr_A \rho = 1$, and a normalized state is pure iff it is a primitive
idempotent, because primitive idempotents are precisely the extreme points of the convex set
$\{ x \in \EJA_A^+ \mid \tr_A x = 1 \}$.

A nonzero linear subspace $\EJA'$ of an EJA $\EJA$ is a \termdef{subalgebra} if $x \b y \in \EJA'$ for all
 $x$ and $y$ in $\EJA'$.
Such a subalgebra has a unit $u'$ satisfying $u' \b x = x$ for all $x \in \EJA'$.
For every nonzero idempotent $p$ of $\EJA$, the set $\{ x \in \EJA \mid p \b x = x \}$ is a subalgebra.
Two subalgebras $\EJA_1$ and $\EJA_2$ are \termdef{orthogonal}, written $\EJA_1 \perp \EJA_2$, if $x \b y = \zero$
for every $x \in \EJA_1$ and $y \in \EJA_2$.
For their units $u_1$ and $u_2$, one has%
\footnote{The implication $\EJA_1 \perp \EJA_2 \Rightarrow u_1 \perp u_2$ is immediate.
Conversely, suppose $u_1 \perp u_2$.
For $a \in \EJA_1^+$ and $b \in \EJA_2^+$, one has
$a \le c u_1$ and $b \le d u_2$ for some $c$ and $d$ in $\Realp$, hence
$0 \le \braket{a,b} \le \braket{c u_1,d u_2} = cd \braket{u_1,u_2} = 0$.
By Eq.~\eqref{eq:EJA_pos_perp}, this gives $a \b b = \zero$.
Every $x \in \EJA_1$ can be written as $x = x_+ - x_-$ with $x_+$ and $x_-$ in $\EJA_1^+$, and every
$y \in \EJA_2$ as $y = y_+ - y_-$ with $y_+$ and $y_-$ in $\EJA_2^+$.
By bilinearity, $x \b y = \zero$ for every $x \in \EJA_1$ and $y \in \EJA_2$, so
$\EJA_1 \perp \EJA_2$.}
\begin{alignat}{1}
 \EJA_1 \perp \EJA_2 &\quad\Leftrightarrow\quad u_1 \perp u_2
 \label{eq:EJA_perp}.
\end{alignat}
A \termdef{direct-sum decomposition} $\EJA = \EJA_1 \oplus \EJA_2$ means that
$\EJA = \EJA_1 \dotplus \EJA_2$ as real vector spaces and that the two subalgebras are orthogonal.
Here $\oplus$ denotes direct sums as EJAs, whereas $\dotplus$ denotes direct sums as real vector spaces.
If $\EJA = \EJA_1 \oplus \EJA_2$, then every $x \in \EJA$ can be written uniquely as
$x = x_1 + x_2$ $~(x_1 \in \EJA_1, ~x_2 \in \EJA_2)$.
The linear map $P_1 \colon \EJA \to \EJA$ defined by $P_1(x) \coloneqq x_1$ is called the
\termdef{projection} onto $\EJA_1$.
The projection onto $\EJA_2$ is defined similarly.
For $x \in \EJA^+$, the components satisfy $x_i \in \EJA_i^+$.
Indeed, if $x = y^2$ and $y = y_1 + y_2$ with $y_i \in \EJA_i$, then
$x = y^2 = y_1^2 + y_2^2$ by orthogonality, so $x_i = y_i^2 \in \EJA_i^+$.
Thus the projection onto $\EJA_1$ maps $\EJA^+$ into $\EJA_1^+$.

A linear isomorphism preserving both product and inner product is called an \termdef{EJA isomorphism}.
Such a map preserves spectra, maps the unit of the domain to the unit of the codomain, maps the positive cone onto
the positive cone, and preserves the trace.
If there exists an EJA isomorphism from $\EJA_1$ to $\EJA_2$, we say that they are \termdef{EJA-isomorphic},
written $\EJA_1 \congEJA \EJA_2$.
An EJA isomorphism from $\EJA$ to itself is an \termdef{EJA automorphism}.
For a linear map $g \colon \EJA \to \EJA$, the \termdef{adjoint map} $g^*$ is defined by
$\braket{g(x),y} = \braket{x,g^*(y)}$.

A linear subspace $\W$ of an EJA $\EJA$ is an \termdef{ideal} if $x \b y \in \W$ for every $x \in \W$ and $y \in \EJA$.
A nonzero EJA is \termdef{simple} if its only ideals are $\{ \zero \}$ and itself.
If an EJA is not simple, it decomposes as $\EJA = \EJA_1 \oplus \EJA_2$ for nonzero EJAs $\EJA_1$ and $\EJA_2$.
Let $\Quaternion$ and $\Octonion$ denote the quaternions and octonions.
For $\Field \in \{ \Real, \Complex, \Quaternion, \Octonion \}$, let $\Her_n(\Field)$ be the set of
Hermitian matrices of order $n$ over $\Field$.
Simple EJAs are classified into five types: real $\Her_r(\Real)$, complex $\Her_r = \Her_r(\Complex)$, quaternionic
$\Her_r(\Quaternion)$, spin factors $\Spin_d$ with $d \ge 5$ and $d \neq 6$, and the exceptional algebra
$\Her_3(\Octonion)$~\cite{Jor-Neu-Wig-1934}.
Their ranks and dimensions are summarized in Table~\ref{table:EJA}.
Except for the exceptional EJA isomorphisms $\Her_1(\Real) \congEJA \Her_1 \congEJA \Her_1(\Quaternion) ~(\congEJA \Real)$,
the simple EJAs in this classification are mutually non-EJA-isomorphic.
Note that the algebras $\Spin_3$, $\Spin_4$, $\Spin_6$, $\Her_1(\Octonion)$, and $\Her_2(\Octonion)$ are also simple,
but they are EJA-isomorphic to $\Her_2(\Real)$, $\Her_2$, $\Her_2(\Quaternion)$, $\Real$, and $\Spin_{10}$,
respectively, so excluding them entails no essential loss.
\begin{table}[h]
 \centering
 \caption{Ranks and dimensions of simple EJAs}
 \label{table:EJA}
 \begin{tabular}{ccc}
  \hline
  Type & Rank & Dimension \\ \hline
  $\Her_r(\Real)$ & $r$ & $r(r+1)/2$ \\
  $\Her_r$ & $r$ & $r^2$ \\
  $\Her_r(\Quaternion)$ & $r$ & $r(2r-1)$ \\
  $\Spin_d$ & $2$ & $d$ \\
  $\Her_3(\Octonion)$ & $3$ & $27$ \\ \hline
 \end{tabular}
\end{table}

\subsection{No-restriction hypothesis} \label{subsec:reconstruction_norestriction}

The \termdef{no-restriction hypothesis} asserts that any collection $\{ a_i \}_{i=1}^n$ of elements of $\St_+(A)^*$
satisfying $\sum_{i=1}^n a_i = e_A$ is a measurement.
We show that every OPT satisfying local equivalence and ES purification satisfies this hypothesis.
For $z \in \EJA_A$, define $z^\dag \in \Eff_\Real(A)$ by
\begin{alignat}{1}
 z^\dag &\colon \EJA_A \ni x \mapsto \braket{z,x} \in \Real
 \label{eq:z_dag}.
\end{alignat}
Then $u_A^\dag = e_A$.

The proof of no-restriction below has three steps.
First, Lemma~\ref{lemma:norestriction_meas} reduces the no-restriction hypothesis
to the cone equality $\Eff_+(A) = \St_+(A)^*$.
Second, Lemma~\ref{lemma:norestriction_effect_closed} proves that $\Eff_+(A)$ is closed,
and Lemma~\ref{lemma:norestriction_pure} uses this closedness to obtain a normalized pure state
$\psi$ with $\psi^\dag \in \Eff_+(A)$.
Third, Theorem~\ref{thm:norestriction} uses transitivity of reversible channels on pure states
to obtain $s^\dag \in \Eff_+(A)$ for every normalized pure state $s$, and then uses self-duality and
spectral decomposition to prove $\Eff_+(A) = \St_+(A)^*$.

\begin{lemma} \label{lemma:norestriction_effect_closed}
 For every system $A$, $\Eff_+(A)$ is a closed convex cone.
\end{lemma}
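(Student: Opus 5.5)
The plan is to realize $\Eff_+(A)$ as a linear preimage of a state cone, using the steering map already built in the proof of Lemma~\ref{lemma:basic_complementary_internal}. By Lemma~\ref{lemma:basic_internal_exists}, fix an internal normalized state $\rho \in \StN(A)$. By ES purification, fix a purification $\Psi \in \StN(A\tA)$ of $\rho$ and let $\trho \coloneqq (e_A \ot \id_\tA)(\Psi)$. Consider the linear map $\tau \colon \Eff_\Real(A) \to \St_\Real(\tA)$, $a \mapsto (a \ot \id_\tA)(\Psi)$.

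First I collect what the proof of Lemma~\ref{lemma:basic_complementary_internal} already gives.
\begin{itemize}
 \item $\tau$ is injective.
 \item $\dim \Eff_\Real(A) = d_A = \dim \St_\Real(\tA)$, so $\tau$ is a linear isomorphism.
 \item $\tau(\Eff(A)) = D_{\trho}$.
\end{itemize}
For the last point, one inclusion comes from Lemma~\ref{lemma:basic_steering} applied to $\Psi$ viewed as a purification of $\trho$. The other comes from writing $a + b = e_A$ by coarse-graining: then $\trho - \tau(a) = \tau(b) \in \St_+(\tA)$, and $\tau(a) \in \St_+(\tA)$.

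Since $\tau$ is linear, it maps the cone generated by $\Eff(A)$ onto the cone generated by $D_{\trho}$. Hence $\tau(\Eff_+(A)) = \operatorname{cone}(D_{\trho})$.

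Next I identify this cone. By Lemma~\ref{lemma:basic_complementary_internal}, $\trho$ is internal. Lemma~\ref{lemma:basic_internal_refinement} then says every $x \in \St_+(\tA)$ has $\varepsilon x \in D_{\trho}$ for some $\varepsilon > 0$. Conversely, $D_{\trho} \subseteq \St_+(\tA)$. Therefore $\operatorname{cone}(D_{\trho}) = \St_+(\tA)$.

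Combining the two steps gives $\Eff_+(A) = \tau^{-1}(\St_+(\tA))$. The cone $\St_+(\tA)$ is closed by Lemma~\ref{lemma:basic_compact_closed}, and $\tau$ is a linear isomorphism of finite-dimensional spaces, hence a homeomorphism. So $\Eff_+(A)$ is closed. Convexity is immediate from the definition of $\Eff_+(A)$ as a convex cone.

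The only delicate point is making sure $\tau$ is surjective onto $\St_\Real(\tA)$, so that the preimage is taken under a homeomorphism. This follows from injectivity together with $\St_\Real(\tA) \cong \St_\Real(A)$, the latter induced by the reversible channel witnessing $A \simeq \tA$. In fact closedness of the preimage only needs continuity of $\tau$, so the argument is robust.
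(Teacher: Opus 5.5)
Your argument is correct, but it differs from the paper's at the key step. Both proofs use the same map $a \mapsto (a \ot \id_\tA)(\Psi)$, which is your $\tau$ and the paper's $\widehat{\Psi}_\rho$. The paper shows this map is a cone isomorphism using the zigzag identities of Lemma~\ref{lemma:basic_zigzag}. It writes down an explicit inverse $\widehat{E}_\rho \colon x \mapsto p_\rho^{-1}E_\rho \c (x \ot \id_A)$. Because this inverse is composition with a fixed element of $\Eff_+(\tA A)$, the inclusion $\widehat{E}_\rho(\St_+(\tA)) \subseteq \Eff_+(A)$ is immediate, and no dimension count is needed.

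You instead get bijectivity from injectivity together with $\dim \Eff_\Real(A) = \dim \St_\Real(\tA)$. You get the hard inclusion $\St_+(\tA) \subseteq \tau(\Eff_+(A))$ from $\tau(\Eff(A)) = D_{\trho}$ combined with internality of $\trho$ (Lemmas~\ref{lemma:basic_complementary_internal} and~\ref{lemma:basic_internal_refinement}). In effect, you upgrade the span statement already proved inside Lemma~\ref{lemma:basic_complementary_internal} to a cone statement.

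Your route is lighter. It avoids Lemma~\ref{lemma:basic_zigzag}, and with it the internality of $\rho \ot \trho$ and Lemma~\ref{lemma:basic_pure_eq}. It also gives the sharper description $\Eff(A) = \tau^{-1}(D_{\trho})$. The paper's route reuses machinery it needs anyway for homogeneity, and it exhibits the inverse in operational form.

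One minor point. What actually carries $\Eff_+(A) = \tau^{-1}(\St_+(\tA))$ is injectivity of $\tau$, since that gives $\tau^{-1}(\tau(S)) = S$. Surjectivity is not the delicate step: it follows automatically from $\tau(\Eff_+(A)) = \St_+(\tA)$. Since you do have injectivity, nothing is missing.
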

\begin{proof}
 By Lemma~\ref{lemma:basic_compact_closed}, $\St_+(\tA)$ is a closed convex cone.
 Let $\rho \in \StN(A)$ be internal and let $\Psi_\rho \in \StN(A\tA)$ be a purification.
 Such a state exists by Lemmas~\ref{lemma:basic_internal_exists} and \ref{lemma:basic_internal}.
 Let $E_\rho$ and $p_\rho$ be as in Lemma~\ref{lemma:basic_zigzag}.
 Consider the linear maps
 \begin{alignat}{1}
  \widehat{\Psi}_\rho &\colon \Eff_\Real(A) ~\ni~ \adjustbox{valign=c}{\includegraphics[alt={a}]{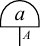}} ~\mapsto~
  \adjustbox{valign=c}{\includegraphics[alt={(a \ot \id_\tA)(\Psi_\rho)}]{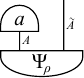}} ~\in~ \St_\Real(\tA)
  \label{eq:basic_transition_Psi}
 \end{alignat}
 and
 \begin{alignat}{1}
  \widehat{E}_\rho &\colon \St_\Real(\tA) ~\ni~ \adjustbox{valign=c}{\includegraphics[alt={x}]{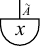}} ~\mapsto~
  \adjustbox{valign=c}{\includegraphics[alt={p_\rho^{-1} E_\rho \c (x \ot \id_A)}]{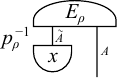}} ~\in~ \Eff_\Real(A)
  \label{eq:basic_transition_E}.
 \end{alignat}
 Lemma~\ref{lemma:basic_zigzag} gives, for all $x \in \St_\Real(\tA)$,
 \begin{alignat}{1}
  (\widehat{\Psi}_\rho \c \widehat{E}_\rho)(x) &~=~
  \footnoteinset{2.65}{0.3}{\eqref{eq:basic_zigzag}}{%
  \adjustbox{valign=c}{\includegraphics[alt={x}]{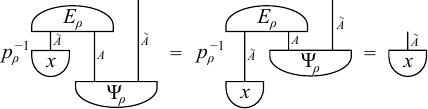}}}
  \label{eq:basic_transition_Psi_E}\raisebox{-1em}{,}
 \end{alignat}
 and, for all $a \in \Eff_\Real(A)$,
 \begin{alignat}{1}
  (\widehat{E}_\rho \c \widehat{\Psi}_\rho)(a) &~=~
  \footnoteinset{2.65}{0.3}{\eqref{eq:basic_zigzag}}{%
  \adjustbox{valign=c}{\includegraphics[alt={a}]{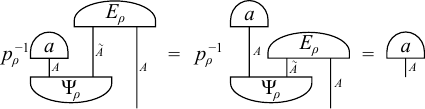}}}
  \label{eq:basic_transition_E_Psi}\raisebox{-1em}{.}
 \end{alignat}
 Thus $\widehat{E}_\rho$ is the inverse of $\widehat{\Psi}_\rho$.
 Since a linear isomorphism between finite-dimensional vector spaces maps closed convex cones to closed convex cones,
 $\widehat{E}_\rho(\St_+(\tA))$ is a closed convex cone.
 It remains to identify it with $\Eff_+(A)$.
 If $x \in \St_+(\tA)$, then $\widehat{E}_\rho(x)$ is built by composing $x$ with $p_\rho^{-1}E_\rho \in \Eff_+(\tA A)$,
 so it lies in $\Eff_+(A)$.
 Conversely, if $a \in \Eff_+(A)$, then
$\widehat{\Psi}_\rho(a) \in \St_+(\tA)$.
Thus $\widehat{\Psi}_\rho(\Eff_+(A)) \subseteq \St_+(\tA)$.
Applying $\widehat{E}_\rho$ and using
$\widehat{E}_\rho \c \widehat{\Psi}_\rho = \id_{\Eff_\Real(A)}$ gives
$a \in \widehat{E}_\rho(\St_+(\tA))$.
 Hence $\widehat{E}_\rho(\St_+(\tA)) = \Eff_+(A)$.
\end{proof}

\begin{lemma} \label{lemma:norestriction_dag_injective}
 The map $\dag \colon \EJA_A \ni z \mapsto z^\dag \in \Eff_\Real(A)$ is injective.
\end{lemma}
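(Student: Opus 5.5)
The map $\dag$ is linear, so injectivity amounts to showing that $z^\dag = \zero$ forces $z = \zero$. The tool is nondegeneracy of the trace inner product on $\EJA_A$. Recall that $\Eff_\Real(A)$ has been identified with $\St_\Real(A)^* = \EJA_A^*$. Under this identification, $z^\dag$ is the linear functional $x \mapsto \braket{z,x}$ on $\EJA_A = \St_\Real(A)$. Two elements of $\Eff_\Real(A)$ are equal exactly when they agree on all states, hence on all of $\St_\Real(A)$, by linearity and because $\StN(A)$ spans $\St_\Real(A)$. So $z^\dag = \zero$ means $\braket{z,x} = 0$ for every $x \in \EJA_A$.

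Take $x \coloneqq z$. Then $\braket{z,z} = \tr_A(z \b z) = 0$. Let $z = \sum_{i=1}^{r_A} \lambda_i p_i$ be a spectral decomposition with Jordan frame $\{p_i\}$. By orthogonality of the frame and idempotency, $z \b z = \sum_i \lambda_i^2 p_i$, so $\tr_A(z \b z) = \sum_i \lambda_i^2$, since the spectrum of $z^2$ is $\{\lambda_i^2\}$. Vanishing of this sum forces every $\lambda_i = 0$, hence $z = \zero$. Equivalently, one can simply cite that the trace form is an inner product, so positive definiteness gives $z = \zero$ directly.

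The only point needing care is justifying that $z^\dag$, defined as a functional on $\EJA_A$, really is the element of $\Eff_\Real(A)$ it corresponds to, and that $\zero$ in $\Eff_\Real(A)$ is the zero functional. Both follow from the identification $\Eff_\Real(A) = \St_\Real(A)^*$ and the characterization of equality of effects by their values on states, given in Sec.~\ref{sec:OPT}. I do not expect a real obstacle here. In fact $\dag$ is then a linear bijection $\EJA_A \to \Eff_\Real(A)$, because the two spaces have the same dimension $d_A$.
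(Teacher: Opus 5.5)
Your proposal is correct and follows essentially the paper's own argument. The paper also reduces to $\braket{z-w,x}=0$ for all $x$, plugs in $x=z-w$, and uses positive definiteness of the trace inner product; your spectral-decomposition justification of that positivity and your dimension-count remark are valid extras.
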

\begin{proof}
 If $z^\dag = w^\dag$, then $\braket{z-w,x} = 0$ for all $x \in \EJA_A$.
 Taking $x = z - w$ and using positive definiteness gives $z = w$.
\end{proof}

\begin{lemma} \label{lemma:norestriction_meas}
 Every family $\{ a_i \in \Eff_+(A) \}_{i=1}^n$ satisfying $\sum_{i=1}^n a_i = e_A$ is a measurement.
\end{lemma}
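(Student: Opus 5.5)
The plan is to realize the family $\{ a_i \}_{i=1}^n$ by steering through a purification of an internal state, and then to remove any remaining ambiguity with Lemma~\ref{lemma:basic_pure_eq}. Fix an internal normalized state $\rho \in \StN(A)$; one exists by Lemmas~\ref{lemma:basic_internal_exists} and \ref{lemma:basic_internal}. Let $\Psi \in \StN(A\tA)$ be a purification of $\rho$, and let $\trho \coloneqq (e_A \ot \id_\tA)(\Psi)$ be the complementary state. By Lemma~\ref{lemma:basic_complementary_internal}, $\trho$ is internal. The state $\Psi$ is also a purification of $\trho$, with $A$ as the purifying system.

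The first step is to define $\tsigma_i \coloneqq (a_i \ot \id_\tA)(\Psi)$ for each $i$. Each $a_i$ is a nonnegative combination of effects. Hence $\tsigma_i$ is a nonnegative combination of elements of $\St(\tA)$, so $\tsigma_i \in \St_+(\tA)$. Its normalization is
\[
 e_\tA(\tsigma_i) = (a_i \ot e_\tA)(\Psi) = a_i(\rho).
\]
This lies in $[0,1]$ because $a_i(\rho) \ge 0$ and $\sum_i a_i(\rho) = e_A(\rho) = 1$. Therefore Eq.~\eqref{eq:basic_St+_St} gives $\tsigma_i \in \St(\tA)$. Linearity and $\sum_i a_i = e_A$ give $\sum_i \tsigma_i = \trho$.

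The second step applies Lemma~\ref{lemma:basic_finite_steering} to the purification $\Psi$ of $\trho$, with purifying system $A$; the lemma is used with the roles of the two systems interchanged, via the order-interchange convention. This yields a measurement $\{ b_i \}_{i=1}^n$ on $A$ such that
\[
 (b_i \ot \id_\tA)(\Psi) = \tsigma_i = (a_i \ot \id_\tA)(\Psi) \quad \text{for every } i.
\]

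The final step is to conclude $b_i = a_i$ as elements of $\Eff_\Real(A) = \Trans_\Real(A,I)$. For this I apply Lemma~\ref{lemma:basic_pure_eq} with the systems interchanged. The internal state is now $\trho$ and its purification is $\Psi$, read as a state of $\tA A$. The two maps are $f = b_i$ and $g = a_i$, both in $\Trans_\Real(A,I)$, acting on the purifying factor. More concretely, for each $\tsigma \in D_\trho$, Lemma~\ref{lemma:basic_steering} supplies $c \in \Eff(\tA)$ with $(\id_A \ot c)(\Psi) = \sigma$ for some $\sigma$; the cleaner route is to argue directly from the steering side. For every $\sigma \in D_\rho$, Lemma~\ref{lemma:basic_steering} gives $c \in \Eff(\tA)$ with $\sigma = (\id_A \ot c)(\Psi)$. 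Then
\[
 b_i(\sigma) = (b_i \ot c)(\Psi) = c(\tsigma_i) = (a_i \ot c)(\Psi) = a_i(\sigma).
\]
Since $\rho$ is internal, $D_\rho$ spans $\St_\Real(A)$, so $b_i = a_i$. Thus $\{ a_i \}_{i=1}^n = \{ b_i \}_{i=1}^n$ is a measurement.

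The main point of care is the first step: the $a_i$ are only elements of the cone $\Eff_+(A)$, not known to be effects. One must therefore check that the steered objects $\tsigma_i$ are genuine states and that they sum to $\trho$, which is exactly what the finite steering lemma requires.
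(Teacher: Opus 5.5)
Your proof is correct and follows the paper's argument essentially step for step: you steer the family $\{(a_i \ot \id_\tA)(\Psi)\}_i$, which you correctly check are states summing to $\trho$, via Lemma~\ref{lemma:basic_finite_steering} with the roles of $A$ and $\tA$ interchanged, and then identify the resulting $b_i$ with $a_i$ using internality of $\rho$. One sentence should be deleted: the aborted one naming $\trho$ as the internal state. Since $a_i$ and $b_i$ act on $A$, the relevant internal state is $\rho$ with purifying system $\tA$, exactly as in the paper's application of Lemma~\ref{lemma:basic_pure_eq} with $B = \tA$ and $C = I$; your direct computation over $D_\rho$ is that lemma's proof inlined, and internality of $\trho$ is never actually used.
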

\begin{proof}
 Choose an internal normalized state $\rho \in \StN(A)$ and a purification $\Psi \in \StN(\tA A)$ of it.
 Let $\trho \coloneqq (\id_\tA \ot e_A)(\Psi) \in \StN(\tA)$ be the complementary state of $\rho$ associated
 with $\Psi$.
 By Lemma~\ref{lemma:basic_complementary_internal}, $\trho$ is internal.
 For each $i$, define $\tsigma_i \coloneqq (\id_\tA \ot a_i)(\Psi)$.
 Since $\sum_{i=1}^n a_i = e_A$, one has
 \begin{alignat}{1}
  \includegraphics[scale=1.0,alt={\sum_{i=1}^n \tsigma_i = \trho}]{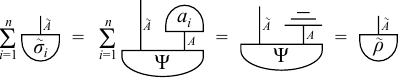}
  \label{eq:norestriction_meas_trho}.
 \end{alignat}
 Moreover, $a_i \in \Eff_+(A)$ implies $\tsigma_i \in \St_+(\tA)$, and $e_\tA(\tsigma_i) \le e_\tA(\trho) = 1$.
 Hence, by Eq.~\eqref{eq:basic_St+_St}, each $\tsigma_i$ is a state.
 Lemma~\ref{lemma:basic_finite_steering} can therefore be applied, regarding the systems $\tA$ and $A$ as the
 systems
 $A$ and $\tA$ in Lemma~\ref{lemma:basic_finite_steering}.
 It follows that there exists a measurement $\{ b_i \}_{i=1}^n$ on $A$ such that $(\id_\tA \ot b_i)(\Psi) = \tsigma_i$
 for every $i$.
 Hence
 \begin{alignat}{1}
  \includegraphics[scale=1.0,alt={(\id_\tA \ot b_i)(\Psi) = \tsigma_i = (\id_\tA \ot a_i)(\Psi)}]{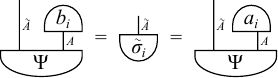}
  \label{eq:norestriction_meas_ba}.
 \end{alignat}
 After applying the reversible equivalence that interchanges the two tensor factors, which we suppress from the
 notation, Eq.~\eqref{eq:norestriction_meas_ba} has the form
 $(b_i \ot \id_\tA)(\Psi) = (a_i \ot \id_\tA)(\Psi)$.
 The resulting state is a purification of the internal state $\rho$ with purifying system $\tA$.
 Therefore Lemma~\ref{lemma:basic_pure_eq}, applied with $f = b_i$, $g = a_i$, $B = \tA$, and $C = I$,
 gives $b_i = a_i$.
 Thus $\{ a_i \}_{i=1}^n$ is equal to the measurement $\{ b_i \}_{i=1}^n$.
\end{proof}

In view of this lemma, in order to prove the no-restriction hypothesis it is enough to prove $\Eff_+(A) = \St_+(A)^*$.
We first establish two preparatory lemmas.

\begin{lemma} \label{lemma:norestriction_adjoint_spectrum}
 Let $\chi \coloneqq u_A / r_A \in \EJA_A^+$.
 For every linear automorphism $g$ of $\EJA_A^+$ and for its adjoint map $g^*$, the elements $g(\chi)$ and $g^*(\chi)$
 have the same spectrum.
\end{lemma}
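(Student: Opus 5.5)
The plan is to use the polar-type decomposition of the automorphism group of a symmetric cone. Since $\chi = u_A/r_A$ is a positive multiple of the unit, and spectra scale linearly, it suffices to show that $g(u_A)$ and $g^*(u_A)$ have the same spectrum. I will work entirely inside the fixed EJA $\EJA_A$ with its trace inner product. By Theorem~\ref{thm:self_dual} and the Koecher--Vinberg setup, $\EJA_A^+ = \St_+(A)$ is self-dual with respect to this inner product.

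\emph{Step 1: $g^*$ is again an automorphism.} From $g(\EJA_A^+) = \EJA_A^+$ and self-duality, $y \in \EJA_A^+$ iff $\braket{g(x),y} \ge 0$ for all $x \in \EJA_A^+$, i.e.\ iff $\braket{x,g^*(y)} \ge 0$ for all such $x$, i.e.\ iff $g^*(y) \in \EJA_A^+$. Hence $g^*(\EJA_A^+) = \EJA_A^+$.

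\emph{Step 2: factor $g$ through a quadratic representation.} Since $g$ preserves the cone and is invertible, $g(u_A)$ is an interior point. Its spectral decomposition then has strictly positive eigenvalues, so it has an interior square root $a$ with $a^2 = g(u_A)$. Let $P(a) = 2L(a)^2 - L(a^2)$ be the quadratic representation, where $L(a)x = a \b x$. Standard facts from Ref.~\cite{Far-1994} (Chap.~III) give:
\begin{itemize}
 \item $P(a)$ is a linear automorphism of $\EJA_A^+$;
 \item $P(a)$ is self-adjoint for the trace inner product, because $L(a)$ is by Eq.~\eqref{eq:EJA_associative};
 \item $P(a)u_A = a^2$.
\end{itemize}
Put $k \coloneqq P(a)^{-1} \c g$. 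This is a cone automorphism fixing $u_A$.

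\emph{Step 3: the stabilizer of the unit.} I will invoke the theorem (Faraut--Kor\'anyi, Prop.~I.1.9 / Thm.~III.5.1 region) that a linear automorphism of a symmetric cone fixing the unit $u$ of the associated EJA is an EJA automorphism. In particular it is orthogonal for the trace inner product, so $k^* = k^{-1}$ and $k^{-1}$ preserves spectra. This is the step I regard as the main obstacle. The point needing care is that the EJA structure $\EJA_A$ was chosen with trace equal to $e_A$ and unit $u_A$, so the cited theorem must be applied to exactly this inner product and unit. If needed, I would reprove it directly along these lines: $k$ fixes $u$, so $k^*$ fixes $u$ because $\braket{k^*u,x} = \tr k(x)$ and the trace is invariant. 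Then $k$ preserves the norm via the characteristic function of the cone, and an orthogonal cone automorphism fixing $u$ is a Jordan automorphism.

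\emph{Step 4: conclude.} We have $g = P(a)\c k$ and $g^* = k^* \c P(a)^* = k^{-1} \c P(a)$. Hence $g(u_A) = P(a)(k(u_A)) = a^2$, while $g^*(u_A) = k^{-1}(P(a)u_A) = k^{-1}(a^2)$. Since $k^{-1}$ is an EJA automorphism, $k^{-1}(a^2)$ has the same spectrum as $a^2$. Dividing by $r_A$ gives the claim for $g(\chi)$ and $g^*(\chi)$.
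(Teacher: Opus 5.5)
Your proof is correct and follows the paper's argument. The paper writes $g = P_y \circ h$ via the polar decomposition of Faraut--Kor\'anyi (Theorem III.5.1), with $h$ an EJA automorphism, and uses $h^* = h^{-1}$, $P_y^* = P_y$, $P_y(u_A) = y^2$ to get $g(u_A) = y^2$ and $g^*(u_A) = h^{-1}(y^2)$, exactly as in your Step 4. Your Steps 2--3 just build this decomposition explicitly ($y = a = g(u_A)^{1/2}$, $h = k$) from the same standard fact that cone automorphisms fixing the unit are Jordan automorphisms, and your Step 1 is not needed.
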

\begin{proof}
 By the polar decomposition for linear automorphisms of $\EJA_A^+$~\cite[Theorem~III.5.1]{Far-1994},
 there are an interior point $y$ of $\EJA_A^+$ and an EJA automorphism $h$ such that $g = P_y \c h$,
 where $P_y \colon \EJA_A \ni z \mapsto 2 y \b (y \b z) - y^2 \b z \in \EJA_A$.
 EJA automorphisms map $u_A$ to $u_A$, preserve spectra, and preserve the inner product.
 Moreover, for all $x$ and $z$ in $\EJA_A$,
 $\braket{h^*(x),z} = \braket{x,h(z)} = \braket{h(h^{-1}(x)),h(z)} = \braket{h^{-1}(x),z}$,
 and hence $h^* = h^{-1}$.
 Further, $P_y$ is self-adjoint, namely $P_y^* = P_y$, and satisfies $P_y(u_A) = y^2$.
 Therefore $g(u_A) = P_y(h(u_A)) = P_y(u_A) = y^2$, whereas $g^*(u_A) = h^{-1}(P_y(u_A)) = h^{-1}(y^2)$.
 Since $h^{-1}$ is an EJA automorphism, $y^2$ and $h^{-1}(y^2)$ have the same spectrum.
 Thus $g(u_A)$ and $g^*(u_A)$ have the same spectrum.
 Dividing by $r_A$, one obtains the same conclusion for $g(\chi)$ and $g^*(\chi)$.
\end{proof}

\begin{lemma} \label{lemma:norestriction_pure}
 For every system $A$, there exists a normalized pure state $\psi \in \StN(A)$
 such that $\psi^\dag \in \Eff_+(A)$.
\end{lemma}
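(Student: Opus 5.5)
The plan is to realize a rank-one element of the dual cone as a genuine effect by combining three earlier results: the steering isomorphism of Lemma~\ref{lemma:norestriction_effect_closed}, the homogeneity automorphisms of Lemma~\ref{lemma:homogeneity_iso}, and the spectral statement of Lemma~\ref{lemma:norestriction_adjoint_spectrum}.

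\textbf{Setting up the map.} Fix $\chi\coloneqq u_A/r_A$. Its spectrum is $(1/r_A,\dots,1/r_A)$, so it is an interior point of $\EJA_A^+$ and hence an internal normalized state. Let $\Psi_\chi\in\StN(A\tA)$ purify $\chi$.

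By Lemma~\ref{lemma:norestriction_effect_closed}, $\widehat{\Psi}_\chi$ restricts to a linear isomorphism from $\Eff_+(A)$ onto $\St_+(\tA)$, with inverse $\widehat{E}_\chi$. Composing with a reversible channel $W\colon A\to\tA$ and with the injective map $\dag$ of Lemma~\ref{lemma:norestriction_dag_injective}, I get an injective linear map
\[
g\coloneqq \dag^{-1}\c\widehat{E}_\chi\c W\colon \EJA_A\to\EJA_A .
\]
This is well defined because $\Eff_+(A)\subseteq\St_+(A)^*=(\EJA_A^+)^\dag$ by self-duality (Theorem~\ref{thm:self_dual}). It satisfies $g(\EJA_A^+)\subseteq\EJA_A^+$ and $g(\EJA_A^+)^\dag=\Eff_+(A)$. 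The goal becomes: find a normalized pure $\psi$ (a primitive idempotent) with $\psi\in g(\EJA_A^+)$.

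\textbf{Reduction to a spectral comparison.} Since $g$ maps extreme rays of $\EJA_A^+$ onto extreme rays of $g(\EJA_A^+)$, it suffices to find one primitive idempotent $s$ such that $g(s)$ is a positive multiple of a primitive idempotent. Equivalently, $g(s)$ should have spectrum $(\lambda,0,\dots,0)$.

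The idea is to read the steering relation in Eq.~\eqref{eq:basic_zigzag} as saying that $g$ is, up to positive scaling, the adjoint $\Gamma^*$ of a map of the type constructed in Lemma~\ref{lemma:homogeneity_iso}, built from $\Psi_\chi$ and $E_\chi$. The relevant identity is $\braket{g(x),y}=b(\Gamma(y))$-type pairings, which one checks on the diagrams. I would choose $\Gamma$ to be a genuine linear automorphism of $\EJA_A^+$ (this uses homogeneity and the fact that $\widehat{\Psi}_\chi$, $\widehat{E}_\chi$ are mutually inverse). Then Lemma~\ref{lemma:norestriction_adjoint_spectrum} says that $g(\chi)$ and $\Gamma(\chi)$ have the same spectrum.

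\textbf{Obtaining a rank-one element.} To get from $\chi$ to a rank-one element, I would use the freedom in choosing the purification together with ES-uniqueness. All purifications of $\chi$ differ by reversible $U$ on $\tA$, and $\chi$ is invariant under all EJA automorphisms. So I would use the maximally-mixed symmetry to move $\chi$ onto any Jordan frame, restrict $g$ to the face $\{x\le p\}$ generated by a primitive idempotent $p$, and apply the spectral comparison face by face. The comparison then forces $g(p)$ to have the spectrum of a rank-one element.

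As a fallback, I would use a compactness argument instead. The set $g(\EJA_A^+)\cap\{\tr_A=1\}$ is compact, by the closedness in Lemma~\ref{lemma:norestriction_effect_closed}. Maximize $\braket{k,k}$ over this set; the maximum is at most $1$, with equality exactly at primitive idempotents. The spectral lemma, via the invariance of spectra under $g\leftrightarrow g^*$ together with transitivity on pure states (Lemma~\ref{lemma:basic_pure_unitary}), should then show that the maximum equals $1$.

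\textbf{Main obstacle.} The main obstacle is identifying $g$ precisely with the adjoint of a cone automorphism, so that Lemma~\ref{lemma:norestriction_adjoint_spectrum} applies. This means carefully tracking the diagrammatic identities of Lemma~\ref{lemma:basic_zigzag} and the normalization $p_\chi$.
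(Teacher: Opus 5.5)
Your reformulation is fine: it suffices to find a primitive idempotent in $g(\EJA_A^+)=\{z\in\EJA_A : z^\dag\in\Eff_+(A)\}$. The gap is in the step meant to produce one.

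Suppose $g$ were a positive multiple of $\Gamma^*$ for a linear automorphism $\Gamma$ of $\EJA_A^+$. Because $\EJA_A^+$ is self-dual, $\Gamma^*$ would also map $\EJA_A^+$ onto itself. Then $g(\EJA_A^+)=\EJA_A^+$, i.e.\ $\Eff_+(A)=\St_+(A)^*$. That is the entire content of Theorem~\ref{thm:norestriction}, which this lemma is supposed to feed into. So your ``main obstacle'' is not bookkeeping; it is the whole problem.

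It also cannot be ``checked on the diagrams.'' The trace inner product is not operational. A pairing $\braket{x,\Gamma(y)}$ with a general $x\in\EJA_A^+$ has no circuit expression, because $x^\dag$ is not yet known to be an effect. The zigzag identities only say that $\widehat{\Psi}_\chi$ and $\widehat{E}_\chi$ are mutually inverse. That means $g$ maps $\EJA_A^+$ onto a cone that may be strictly smaller than $\EJA_A^+$.

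Even granting the identification, Lemma~\ref{lemma:norestriction_adjoint_spectrum} compares spectra only at $\chi$. For a single fixed $\Gamma$, the element $\Gamma(\chi)$ is full rank. The ``face by face'' comparison is empty, since the part of $\EJA_A^+$ below a primitive idempotent is just a segment of its ray.

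Your compactness fallback restates the goal. To get maximum $1$ you need trace-one elements of $g(\EJA_A^+)$ with $\braket{k,k}$ arbitrarily close to $1$, and nothing in the proposal supplies them.

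The missing idea is to use the only operational element whose $\dag$-preimage is known, namely $e_A=u_A^\dag$.
\begin{itemize}
\item For any $F\in\Trans_+(A,A)$, the functional $e_A\c F$ lies in $\Eff_+(A)$ for free.
\item Moreover $e_A(F(z))=\braket{u_A,F(z)}=\braket{F^*(u_A),z}$, so $(e_A\c F)/r_A=(F^*(\chi))^\dag$. This is where an adjoint legitimately enters.
\item Take $F_t\coloneqq\Gamma_{\chi,\rho_t}$ from Lemma~\ref{lemma:homogeneity_iso}, with $\rho_t=(1-t)p+t\chi$, $p$ a primitive idempotent and $0<t<1$. Then $\rho_t$ is internal, so the lemma applies.
\item Lemma~\ref{lemma:norestriction_adjoint_spectrum} gives that $\psi_t\coloneqq F_t^*(\chi)$ has the spectrum of $\rho_t$. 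Hence $\psi_t\in\EJA_A^+$, $\tr_A\psi_t=1$, and $\tr_A\psi_t^2\to1$ as $t\downarrow0$.
\item A convergent subsequence has a limit $\psi$ with spectrum $(1,0,\dots,0)$, which is a normalized pure state.
\item The closedness of $\Eff_+(A)$ from Lemma~\ref{lemma:norestriction_effect_closed}, which you did invoke, gives $\psi^\dag\in\Eff_+(A)$.
\end{itemize}
The limit is unavoidable, since $\Gamma_{\chi,p}$ does not exist for non-internal $p$. The map $g$ plays no role in this argument.
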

\begin{proof}
 Put $\chi \coloneqq u_A / r_A$.
 Then $\chi$ is normalized and internal.
 Fix a normalized pure state, equivalently a primitive idempotent, $p$, and for $0 < t < 1$ put
  $\rho_t \coloneqq (1-t)p + t\chi$.
 This state is again normalized and internal, and converges to $p$ as $t \downarrow 0$.
 By Lemma~\ref{lemma:homogeneity_iso}, there exists $F_t \coloneqq \Gamma_{\chi,\rho_t} \in \Trans_+(A,A)$
 such that $F_t$ is a linear automorphism of $\EJA_A^+$ and $F_t(\chi) = \rho_t$.
 Moreover, $a_t \coloneqq (e_A \c F_t) / r_A \in \Eff_+(A)$.
 Let $F_t^*$ be the adjoint of $F_t$, and put $\psi_t \coloneqq F_t^*(\chi) \in \EJA_A$.
 Then, for every $z \in \EJA_A$,
 \begin{alignat}{1}
  a_t(z) &= e_A(F_t(z)) / r_A = \braket{\chi,F_t(z)} = \braket{F_t^*(\chi),z} = \braket{\psi_t,z},
 \end{alignat}
 and therefore $a_t = \psi_t^\dag$ by Eq.~\eqref{eq:z_dag}.

 By Lemma~\ref{lemma:norestriction_adjoint_spectrum}, $\psi_t = F_t^*(\chi)$ and $\rho_t = F_t(\chi)$ have the same
 spectrum.
 Taking a Jordan frame containing $p$, the spectrum of $\rho_t = (1-t)p + t \chi$ is
 $1-t+t/r_A, t/r_A, \dots, t/r_A$.
 Hence $\psi_t \in \EJA_A^+$ and $\tr_A \psi_t = 1$, while
 $\tr_A \psi_t^2 = (1-t+t/r_A)^2 + (r_A-1)(t/r_A)^2$, which converges to $1$ as $t \downarrow 0$.
 By Lemma~\ref{lemma:basic_compact_closed}, $\StN(A) = \{ z \in \EJA_A^+ : \tr_A z = 1 \}$ is compact.
 Hence there exist a sequence $0 < t_m < 1$ with $\lim_{m \to \infty} t_m = 0$ and an element $\psi \in \EJA_A^+$
 such that $\lim_{m \to \infty} \psi_{t_m} = \psi$.
 Since $\tr_A$ and $x \mapsto x^2$ are continuous, $\tr_A \psi = 1$ and $\tr_A \psi^2 = 1$.
 Writing the spectrum of $\psi$ in decreasing order as $\lambda_1,\dots,\lambda_{r_A}$, one has $\lambda_i \ge 0$,
 $\sum_{i=1}^{r_A} \lambda_i = 1$, and $\sum_{i=1}^{r_A} \lambda_i^2 = 1$.
 Thus $\lambda_1 = 1$ and $\lambda_2 = \cdots = \lambda_{r_A} = 0$, and consequently $\psi$ is a primitive idempotent,
 namely a normalized pure state.
 Each $a_{t_m}$ belongs to $\Eff_+(A)$ and satisfies $a_{t_m} = \psi_{t_m}^\dag$.
 Since the map $z \mapsto z^\dag$ is linear and hence continuous, $a_{t_m}$ converges to $\psi^\dag$.
 By Lemma~\ref{lemma:norestriction_effect_closed}, $\Eff_+(A)$ is closed, and therefore $\psi^\dag \in \Eff_+(A)$.
\end{proof}

\begin{thm}[No-restriction hypothesis] \label{thm:norestriction}
 Every OPT satisfying local equivalence and ES purification satisfies the no-restriction hypothesis.
\end{thm}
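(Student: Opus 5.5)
By Lemma~\ref{lemma:norestriction_meas}, it suffices to prove the cone equality $\Eff_+(A) = \St_+(A)^*$. The inclusion $\Eff_+(A) \subseteq \St_+(A)^*$ was already noted in Sec.~\ref{sec:OPT}, so only the reverse inclusion needs work. The plan has three steps. First, show that $s^\dag \in \Eff_+(A)$ for every normalized pure state $s$. Second, use spectral decompositions and self-duality to pass from pure states to the whole dual cone. Third, invoke the closedness of $\Eff_+(A)$ (Lemma~\ref{lemma:norestriction_effect_closed}) only where it is needed.

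For the first step, start from the normalized pure state $\psi$ of Lemma~\ref{lemma:norestriction_pure}, which satisfies $\psi^\dag \in \Eff_+(A)$. Given any normalized pure state $s$, Lemma~\ref{lemma:basic_pure_unitary} provides a reversible channel $U$ on $A$ with $U(\psi) = s$. Define the candidate effect $\psi^\dag \c U^{-1} \in \Eff_+(A)$; it lies in the cone because composing an element of $\Eff_+(A)$ with a channel stays in $\Eff_+(A)$. It then remains to show $\psi^\dag \c U^{-1} = s^\dag$, i.e., $\braket{\psi, U^{-1}(x)} = \braket{U(\psi), x}$ for all $x$. This holds if $U$ is orthogonal for the trace inner product, meaning $U^* = U^{-1}$. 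To prove this, note that $U$ is a linear automorphism of $\EJA_A^+$ with $e_A \c U = e_A$ (Lemma~\ref{lemma:basic_deterministic}). Hence $U^*(u_A) = u_A$, because $\braket{U^*(u_A),x} = e_A(U(x)) = e_A(x)$. Apply the polar decomposition $U = P_y \c h$ used in Lemma~\ref{lemma:norestriction_adjoint_spectrum}. Then $U^*(u_A) = h^{-1}(P_y(u_A)) = h^{-1}(y^2)$, so $h^{-1}(y^2) = u_A$ and therefore $y^2 = u_A$. With $y$ interior, this gives $y = u_A$ and $P_y = \id$, so $U = h$ is an EJA automorphism, and EJA automorphisms are orthogonal. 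This yields $s^\dag = \psi^\dag \c U^{-1} \in \Eff_+(A)$.

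For the second step, take any $f \in \St_+(A)^*$. Since $\dag$ is injective (Lemma~\ref{lemma:norestriction_dag_injective}) and both spaces have dimension $d_A$, $\dag$ is an isomorphism, so $f = z^\dag$ for a unique $z \in \EJA_A$. Self-duality of $\EJA_A^+$ with respect to the trace inner product gives $z \in \EJA_A^+$. Take the spectral decomposition $z = \sum_i \lambda_i p_i$ with $\lambda_i \ge 0$ and primitive idempotents $p_i$. Each $p_i$ is a normalized pure state, since primitive idempotents have trace one. Then $f = \sum_i \lambda_i p_i^\dag$ is a nonnegative combination of elements of $\Eff_+(A)$, and hence $f \in \Eff_+(A)$. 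This proves $\Eff_+(A) = \St_+(A)^*$, and Lemma~\ref{lemma:norestriction_meas} gives the no-restriction hypothesis.

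I expect the orthogonality of reversible channels in the first step to be the main obstacle. The key points are that $U$ preserves $e_A$, equivalently $U^*(u_A) = u_A$, and that the uniqueness in the polar decomposition then forces $P_y = \id$. The step from $y^2 = u_A$ to $y = u_A$ deserves care: it uses uniqueness of square roots within the interior of the cone, which follows from the spectral theorem.
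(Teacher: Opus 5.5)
Your proposal is correct and follows the paper's proof essentially step by step. The paper likewise reduces to $\Eff_+(A)=\St_+(A)^*$ via Lemma~\ref{lemma:norestriction_meas}, transports $\psi^\dag$ by $U^{-1}$ after using $U^*(u_A)=u_A$ and the polar decomposition to show that $U$ is an EJA automorphism, and finishes with self-duality and spectral decomposition. The paper treats $r_A=1$ separately, but your argument covers that case too. The step $y^2=u_A\Rightarrow y=u_A$ rests on the spectral theorem, not on uniqueness of the polar decomposition, as your final paragraph correctly notes.
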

\begin{proof}
 By Lemma~\ref{lemma:norestriction_meas}, it is enough to prove $\Eff_+(A) = \St_+(A)^*$.
 The inclusion $\Eff_+(A) \subseteq \St_+(A)^*$ is immediate, so we prove $\Eff_+(A) \supseteq \St_+(A)^*$.
 If $r_A = 1$, then $\EJA_A^+$ is the ray $\Realp u_A$, and $\St_+(A)^*$ is the ray $\Realp e_A$.
 Since $e_A \in \Eff_+(A)$, one has $\St_+(A)^* = \Realp e_A \subseteq \Eff_+(A)$.
 We henceforth assume $r_A \ge 2$.

 We first show that $s^\dag \in \Eff_+(A)$ for every normalized pure state, equivalently every primitive idempotent,
  $s$.
 By Lemma~\ref{lemma:norestriction_pure}, there exists a normalized pure state $\psi$ such that
  $\psi^\dag \in \Eff_+(A)$.
 By Lemma~\ref{lemma:basic_pure_unitary}, there exists a reversible channel $U$ on $A$ such that $U(\psi) = s$.
 By Lemma~\ref{lemma:basic_deterministic}, $e_A \c U = e_A$, and hence $U$ is a linear automorphism of $\StN(A)$.
 Since $\St_+(A)$ is the convex cone generated by $\StN(A)$, $U$ is also a linear automorphism of $\St_+(A) = \EJA_A^+$.
 Since $U$ preserves $\tr_A$, its adjoint satisfies $U^*(u_A) = u_A$.
 Write the polar decomposition of $U$~\cite[Theorem~III.5.1]{Far-1994} as $U = P_y \c h$,
 where $y$ is an interior point of $\EJA_A^+$,
 $h$ is an EJA automorphism, and $P_y$ is the linear map defined by
 $P_y \colon \EJA_A \ni z \mapsto 2 y \b (y \b z) - y^2 \b z \in \EJA_A$.
 From $h^* = h^{-1}$, $P_y^* = P_y$, and $h(u_A) = u_A$, we obtain $u_A = U^*(u_A) = h^{-1}(y^2)$.
 Therefore $y^2 = h(u_A) = u_A$, and since $y \in \EJA_A^+$ it follows that $y = u_A$.
 Indeed, if $y = \sum_i \mu_i p_i$ is a spectral decomposition, where $\mu_i \in \Realp$
 and $\{ p_i \}_i$ is a Jordan frame,
 then $y^2 = \sum_i \mu_i^2 p_i$ and $y^2 = u_A$ imply $\mu_i = 1$, hence $y = \sum_i p_i = u_A$.
 Thus $P_y = P_{u_A}$ is the identity map, and $U = P_{u_A} \c h = h$ is an EJA automorphism.
 Since $U$ is an EJA automorphism and preserves the inner product, for every $x \in \EJA_A$, one has
 \begin{alignat}{1}
  (\psi^\dag \c U^{-1})(x) &= \braket{\psi,U^{-1}(x)} = \braket{U(\psi),x} = \braket{s,x} = s^\dag(x).
 \end{alignat}
 Hence $\psi^\dag \c U^{-1} = s^\dag$.
 Since $U^{-1}$ is a reversible channel, $s^\dag \in \Eff_+(A)$.

 It remains to prove $\Eff_+(A) \supseteq \St_+(A)^*$.
 Let $b \in \St_+(A)^*$ be arbitrary.
 By self-duality of $\EJA_A^+$, there exists $z \in \EJA_A^+$ such that $z^\dag = b$,
 namely $b(x) = \braket{z,x}$ for all $x \in \EJA_A$.
 Let $z = \sum_i \lambda_i s_i$ be a spectral decomposition, where $\lambda_i \in \Realp$
 and $\{ s_i \}_i$ is a Jordan frame.
 Since the map $z \mapsto z^\dag$ is linear and $b = z^\dag$, one has $b = \sum_i \lambda_i s_i^\dag$.
 Since each $s_i^\dag$ belongs to $\Eff_+(A)$ and $\Eff_+(A)$ is a convex cone, $b \in \Eff_+(A)$.
\end{proof}

\begin{cor} \label{cor:norestriction_dag_isomorphism}
 The map $\dag \colon \EJA_A \ni z \mapsto z^\dag \in \Eff_\Real(A)$ is a linear isomorphism from $\EJA_A^+$ onto
  $\Eff_+(A)$.
 In particular, for every $a \in \Eff_+(A)$, there exists $z \in \EJA_A^+$ such that $z^\dag = a$,
 namely $a(x) = \braket{z,x}$ for all $x \in \EJA_A$.
\end{cor}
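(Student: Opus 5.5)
The plan is to assemble the corollary from Lemma~\ref{lemma:norestriction_dag_injective}, Theorem~\ref{thm:norestriction}, and self-duality of the state cone. First we check that $\dag$ is a linear isomorphism of the ambient spaces. It is linear, since $z^\dag(x) = \braket{z,x}$ is linear in $z$. It is injective by Lemma~\ref{lemma:norestriction_dag_injective}. Moreover, $\EJA_A = \St_\Real(A)$ and $\Eff_\Real(A) = \St_\Real(A)^*$ both have dimension $d_A$. Hence $\dag$ is a bijection $\EJA_A \to \Eff_\Real(A)$, which is the Riesz identification induced by the trace inner product.

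Next we identify the image of the cone, $\dag(\EJA_A^+)$, with $\St_+(A)^*$, using self-duality of $\EJA_A^+ = \St_+(A)$ with respect to the trace inner product. For $z \in \EJA_A^+$ and any $x \in \St_+(A)$ we have $z^\dag(x) = \braket{z,x} \ge 0$, so $z^\dag \in \St_+(A)^*$. Conversely, let $b \in \St_+(A)^*$. By surjectivity, $b = z^\dag$ for a unique $z \in \EJA_A$. Then $\braket{z,x} \ge 0$ for all $x \in \EJA_A^+$, and self-duality gives $z \in \EJA_A^+$. Thus $\dag(\EJA_A^+) = \St_+(A)^*$.

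Finally, Theorem~\ref{thm:norestriction} gives $\Eff_+(A) = \St_+(A)^*$; more precisely, its proof establishes exactly this cone equality and then applies Lemma~\ref{lemma:norestriction_meas}. Combining the two steps, $\dag(\EJA_A^+) = \Eff_+(A)$. In the paper's terminology, $\dag$ is therefore a linear isomorphism from $\EJA_A^+$ onto $\Eff_+(A)$. The "in particular" clause is just surjectivity onto $\Eff_+(A)$: for each $a \in \Eff_+(A)$ there is $z \in \EJA_A^+$ with $a = z^\dag$, that is, $a(x) = \braket{z,x}$ for all $x$.

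None of these steps is a real obstacle; the substantive work was done in Theorem~\ref{thm:norestriction}. The only point needing care is that the theorem is stated as "no-restriction" rather than as a cone equality. I would therefore cite the equality $\Eff_+(A) = \St_+(A)^*$ proved inside it. Alternatively, no-restriction itself yields this equality: any $b \in \St_+(A)^*$, rescaled so that $b \le e_A$, together with $e_A - b$, forms a family summing to $e_A$, so $b$ is an effect up to a positive scale.
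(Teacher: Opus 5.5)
Your proposal is correct and follows essentially the paper's proof: injectivity from Lemma~\ref{lemma:norestriction_dag_injective}, the cone equality $\Eff_+(A) = \St_+(A)^*$ from Theorem~\ref{thm:norestriction}, and self-duality of $\EJA_A^+$ to obtain $\dag(\EJA_A^+) = \Eff_+(A)$. Your dimension count and your explicit check that $\dag(\EJA_A^+) \subseteq \St_+(A)^*$ only spell out steps the paper leaves implicit, and your rescaling argument recovering the cone equality from no-restriction is also valid.
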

\begin{proof}
 The map $\dag$ is linear and injective by Lemma~\ref{lemma:norestriction_dag_injective}.
 Moreover, Theorem~\ref{thm:norestriction} gives $\Eff_+(A) = \St_+(A)^*$.
 Since $\EJA_A^+$ is self-dual, for every $a \in \Eff_+(A)$, there is $z \in \EJA_A^+$ such that $a(x) = \braket{z,x}$
 for all $x \in \EJA_A$.
 Thus $\dag$ is surjective as well.
 Therefore $\dag$ is a linear isomorphism from $\EJA_A^+$ onto $\Eff_+(A)$.
\end{proof}

Since $\dag$ is a linear isomorphism from $\EJA_A^+$ onto $\Eff_+(A)$, $x \le y$ for $x$ and $y$ in $\EJA_A$ implies
$x^\dag \le y^\dag$, and conversely.
Together with $\zero^\dag = \zero$ and $u_A^\dag = e_A$, this shows that, for every $x \in \EJA_A$, the condition
 $\zero \le x \le u_A$ is equivalent to $\zero \le x^\dag \le e_A$, namely to $x^\dag$ being an effect.

\subsection{Basic properties of informational dimension}

Using the no-restriction hypothesis, one can prove $n_A = r_A$ and $n_\AB = n_A n_B$ rather directly.
We record these facts here.

\begin{lemma} \label{lemma:norestriction_perp}
 If two normalized states $\rho$ and $\sigma$ of system $A$ are perfectly distinguishable,
 then $\braket{\rho,\sigma} = 0$.
\end{lemma}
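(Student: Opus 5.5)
Let $\{ a, b \}$ be a measurement that perfectly distinguishes $\rho$ and $\sigma$, with $a(\rho) = 1$, $a(\sigma) = 0$, $b(\rho) = 0$, $b(\sigma) = 1$. Such a binary measurement always exists: take the distinguishing measurement for the pair and, if it has more outcomes, coarse-grain all outcomes other than the one assigned to $\rho$ into $b$. The plan is to translate this operational statement into the EJA language using Corollary~\ref{cor:norestriction_dag_isomorphism}. We then show that $\sigma$ lies in the face of $\EJA_A^+$ cut out by $a$, while $\rho$ is concentrated on the complementary face.

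By Corollary~\ref{cor:norestriction_dag_isomorphism}, write $a = z^\dag$ with $z \in \EJA_A^+$. Since $a$ is an effect, $\zero \le z \le u_A$ by the remark after that corollary, so $b = e_A - a = (u_A - z)^\dag$ with $u_A - z \in \EJA_A^+$. The two conditions become $\braket{z,\sigma} = 0$ and $\braket{u_A - z,\rho} = 0$. By Eq.~\eqref{eq:EJA_pos_perp}, the first gives $z \b \sigma = \zero$, and the second gives $(u_A - z) \b \rho = \zero$, that is, $z \b \rho = \rho$.

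The key step, and the only nonroutine one, is to conclude $\braket{\rho,\sigma} = 0$ from these relations. We cannot simply use associativity, since the Jordan product is not associative. One clean way uses the self-adjointness property \eqref{eq:EJA_associative} of the trace inner product:
\begin{alignat}{1}
 \braket{\rho,\sigma} = \braket{z \b \rho,\sigma} = \braket{\rho, z \b \sigma} = \braket{\rho,\zero} = 0 .
\end{alignat}
This uses only the identity $\braket{x \b y, w} = \braket{y, x \b w}$, so no spectral or Peirce decomposition is needed. I therefore expect the proof to be short. The main point to check is that the effect inequalities $\zero \le a \le e_A$ really transfer to $\zero \le z \le u_A$, which is exactly what the order-isomorphism property of $\dag$ provides.
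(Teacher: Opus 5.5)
Your proposal is correct and matches the paper's proof essentially step for step. You write $a=z^\dag$ with $\zero\le z\le u_A$ via Corollary~\ref{cor:norestriction_dag_isomorphism}, use Eq.~\eqref{eq:EJA_pos_perp} to get $z\b\rho=\rho$ and $z\b\sigma=\zero$, and conclude with $\braket{\rho,\sigma}=\braket{z\b\rho,\sigma}=\braket{\rho,z\b\sigma}=0$ using Eq.~\eqref{eq:EJA_associative}, exactly as the paper does.
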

\begin{proof}
 Let $\{ a, b \}$ be a measurement that perfectly distinguishes $\{ \rho, \sigma \}$.
 By Corollary~\ref{cor:norestriction_dag_isomorphism}, there is $z \in \EJA_A^+$ with $z^\dag = a$,
 namely $a(x) = \braket{z,x}$ for all $x \in \EJA_A$, and $z \le u_A$.
 Since $a(\rho) = 1$ and $a(\sigma) = 0$, one has $\braket{u_A - z,\rho} = 0$ and $\braket{z,\sigma} = 0$.
 By Eq.~\eqref{eq:EJA_pos_perp}, $(u_A - z) \b \rho = \zero$, equivalently $z \b \rho = \rho$, and
 $z \b \sigma = \zero$.
 Hence $\braket{\rho,\sigma} = \braket{z \b \rho, \sigma} = \braket{\rho,z \b \sigma} = 0$,
 where the second equality uses Eq.~\eqref{eq:EJA_associative}.
\end{proof}

\begin{lemma} \label{lemma:norestriction_NA}
 The informational dimension $n_A$ of $A$ is equal to its rank $r_A$.
\end{lemma}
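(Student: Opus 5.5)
We prove $n_A = r_A$ by establishing the two inequalities separately, using the EJA structure $\EJA_A$ on $\St_\Real(A)$ together with Corollary~\ref{cor:norestriction_dag_isomorphism} and Lemma~\ref{lemma:norestriction_perp}.

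\emph{Upper bound $n_A \le r_A$.}
Let $\{\rho_i\}_{i=1}^{n_A}$ be perfectly distinguishable normalized states.
Apply Lemma~\ref{lemma:norestriction_perp} to each pair $i \neq j$, using the two-outcome measurement obtained by coarse-graining the distinguishing measurement into $\{a_i, \sum_{k\neq i} a_k\}$.
This gives $\braket{\rho_i,\rho_j} = 0$ for $i \neq j$.
By Eq.~\eqref{eq:EJA_pos_perp}, the $\rho_i$ form a mutually orthogonal family of nonzero elements of $\EJA_A^+$.
The EJA facts recalled above say that such a family has at most $r_A$ elements, so $n_A \le r_A$.

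\emph{Lower bound $n_A \ge r_A$.*}
Take a Jordan frame $\{p_i\}_{i=1}^{r_A}$ of $\EJA_A$.
Each $p_i$ is a primitive idempotent with $\tr_A p_i = 1$, hence a normalized pure state.
Define $a_i \coloneqq p_i^\dag$.
\begin{itemize}
\item Each $a_i$ lies in $\Eff_+(A) = \St_+(A)^*$ by Corollary~\ref{cor:norestriction_dag_isomorphism}.
\item Their sum is $\sum_i a_i = (\sum_i p_i)^\dag = u_A^\dag = e_A$.
\end{itemize}
Lemma~\ref{lemma:norestriction_meas}, or equivalently Theorem~\ref{thm:norestriction}, then shows that $\{a_i\}$ is a measurement.
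Finally, $a_i(p_j) = \braket{p_i,p_j} = \tr_A(p_i \b p_j) = \delta_{ij}$, because frame elements are mutually orthogonal idempotents of trace $1$.
Thus $\{p_j\}$ is a perfectly distinguishable family of size $r_A$, so $n_A \ge r_A$.

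The only point needing care is the upper bound: pairwise orthogonality must come from two-outcome distinguishing measurements, which the coarse-graining step supplies. Once that is in place, the remaining steps are routine, and combining the two inequalities gives $n_A = r_A$.
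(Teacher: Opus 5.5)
Your proposal is correct and follows the paper's proof essentially step for step. The lower bound uses a Jordan frame $\{p_i\}$ whose effects $p_i^\dag$ form a measurement by no-restriction, and the upper bound uses pairwise orthogonality from Lemma~\ref{lemma:norestriction_perp} together with the rank bound on mutually orthogonal families in $\EJA_A^+$. Your explicit coarse-graining to $\{a_i, \sum_{k\neq i} a_k\}$ simply spells out a step the paper leaves implicit when it says that $\rho_i$ and $\rho_j$ are perfectly distinguishable.
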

\begin{proof}
 We first prove $n_A \ge r_A$.
 Let $\{p_i\}_{i=1}^{r_A}$ be a Jordan frame of $\EJA_A$.
 Each $p_i$ is a normalized pure state.
 Since $\sum_i p_i = u_A$, one has $\sum_i p_i^\dag = e_A$, and by Theorem~\ref{thm:norestriction} the family
 $\{ p_i^\dag \}_{i=1}^{r_A}$ is a measurement.
 Moreover, $p_i \perp p_j$ for $i \ne j$ and $\tr_A p_i = 1$, whence
 $p_i^\dag(p_j) = \braket{p_i,p_j} = \delta_{i,j}$.
 Thus $\{p_i\}_{i=1}^{r_A}$ is perfectly distinguishable, and $n_A \ge r_A$.

 We next prove $n_A \le r_A$.
 By definition of informational dimension, there are normalized states $\{ \rho_i \}_{i=1}^{n_A}$ of $A$
 and a measurement $\{ a_i \}_{i=1}^{n_A}$ perfectly distinguishing them.
 For distinct $i$ and $j$, the states $\rho_i$ and $\rho_j$ are perfectly distinguishable, and hence
 Lemma~\ref{lemma:norestriction_perp} gives $\braket{\rho_i,\rho_j} = 0$.
 By Eq.~\eqref{eq:EJA_pos_perp}, the states $\rho_i$ are mutually orthogonal.
 Since they are nonzero mutually orthogonal elements of $\EJA_A^+$, their number is at most the rank $r_A$.
 Hence $n_A \le r_A$.
\end{proof}

\begin{lemma} \label{lemma:norestriction_jordan_frame}
 Every family of $n_A$ perfectly distinguishable normalized pure states of system $A$ is a Jordan frame of $\EJA_A$.
\end{lemma}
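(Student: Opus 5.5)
Let $\{ \psi_i \}_{i=1}^{n_A}$ be perfectly distinguishable normalized pure states of $A$. We show that they are mutually orthogonal primitive idempotents numbering $r_A$; by the facts recalled for EJAs, such a family is already a Jordan frame.

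First, each $\psi_i$ is a normalized pure state. As recorded after Theorem~\ref{thm:Koecher_Vinberg}, this means it is a primitive idempotent of $\EJA_A$, since pure normalized states are exactly the extreme points of $\{ x \in \EJA_A^+ \mid \tr_A x = 1 \}$.

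Second, for $i \ne j$ the pair $\{ \psi_i, \psi_j \}$ is perfectly distinguishable. To see this, take the measurement $\{ a_k \}_{k=1}^{n_A}$ distinguishing the whole family and coarse-grain all outcomes other than $i$ into one outcome; the resulting binary measurement $\{ a_i, e_A - a_i \}$ distinguishes $\psi_i$ from $\psi_j$. Lemma~\ref{lemma:norestriction_perp} then gives $\braket{\psi_i,\psi_j} = 0$. Since both lie in $\EJA_A^+$, Eq.~\eqref{eq:EJA_pos_perp} yields $\psi_i \b \psi_j = \zero$, so the family is mutually orthogonal.

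Third, Lemma~\ref{lemma:norestriction_NA} gives $n_A = r_A$. The $\psi_i$ are distinct, because orthogonal nonzero elements cannot coincide: $\braket{\psi_i,\psi_i} = \tr_A \psi_i = 1 \ne 0$. Hence we have $r_A$ mutually orthogonal primitive idempotents. By the recalled fact that any such family extends to a Jordan frame and that a mutually orthogonal family of $r$ primitive idempotents is already a Jordan frame, we get $\sum_i \psi_i = u_A$. Concretely, if the extension contained an extra primitive idempotent, the frame would have more than $r_A$ elements, contradicting the uniqueness of the rank. No step is a real obstacle; the only point needing care is the coarse-graining argument that reduces perfect distinguishability of the family to pairwise perfect distinguishability, so that Lemma~\ref{lemma:norestriction_perp} applies.
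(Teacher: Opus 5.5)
Your proposal is correct and follows essentially the same route as the paper: pure states are primitive idempotents, pairwise perfect distinguishability together with Lemma~\ref{lemma:norestriction_perp} and Eq.~\eqref{eq:EJA_pos_perp} gives mutual orthogonality, and $n_A = r_A$ from Lemma~\ref{lemma:norestriction_NA} then makes the family a Jordan frame. Your explicit coarse-graining argument for pairwise distinguishability spells out a step that the paper simply asserts.
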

\begin{proof}
 By Lemma~\ref{lemma:norestriction_NA}, $n_A = r_A$.
 Let $\{ \psi_i \in \StN(A) \}_{i=1}^{n_A}$ be a family of perfectly distinguishable normalized pure states.
 Each $\psi_i$ is a normalized pure state, and hence a primitive idempotent of $\EJA_A$.
 For distinct $i$ and $j$ in $\{1,\dots,n_A\}$, the states $\psi_i$ and $\psi_j$ are perfectly distinguishable,
 so Lemma~\ref{lemma:norestriction_perp} gives $\braket{\psi_i, \psi_j} = 0$.
 By Eq.~\eqref{eq:EJA_pos_perp}, this is equivalent to $\psi_i \perp \psi_j$.
 Thus $\{ \psi_i \}_{i=1}^{n_A}$ is a family of $r_A$ mutually orthogonal primitive idempotents.
 A family of mutually orthogonal primitive idempotents contains at most $r_A$ elements, and when it contains $r_A$
 elements it is a Jordan frame.
 Therefore $\{ \psi_i \}_{i=1}^{n_A}$ is a Jordan frame.
\end{proof}

\begin{lemma} \label{lemma:norestriction_zdag}
 For every normalized pure state $\psi$ of system $A$, the ray $\Realp \psi^\dag$ is
 an extreme ray of $\Eff_+(A)$.
\end{lemma}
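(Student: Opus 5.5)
The plan is to transport the extreme-ray property from the state cone to the effect cone using the isomorphism of Corollary~\ref{cor:norestriction_dag_isomorphism}. That corollary says $\dag$ is a linear isomorphism from $\EJA_A^+$ onto $\Eff_+(A)$. A linear bijection carrying one cone onto another preserves decompositions $x = y + z$ within the cones, and it preserves half-lines. Therefore it maps extreme rays to extreme rays. So it suffices to show that $\Realp \psi$ is an extreme ray of $\EJA_A^+ = \St_+(A)$, and then apply $\dag$.

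For the first step, recall from the EJA preliminaries that a normalized pure state of $A$ is exactly a primitive idempotent of $\EJA_A$. For every primitive idempotent $p$, $\Realp p$ is an extreme ray of $\EJA_A^+$. Alternatively, one can argue directly from purity. Suppose $\psi = y + z$ with $y, z \in \St_+(A)$ and both nonzero. Normalize to get $\psi = e_A(y)\,\ol{y} + e_A(z)\,\ol{z}$, where $e_A(y) + e_A(z) = e_A(\psi) = 1$. Purity then forces $\ol{y} = \ol{z} = \psi$, so $y, z \in \Realp \psi$. The cases where $y$ or $z$ is $\zero$ are trivial.

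For the transport step, let $\psi^\dag = a + b$ with $a, b \in \Eff_+(A)$. By the corollary, choose $y, z \in \EJA_A^+$ with $y^\dag = a$ and $z^\dag = b$. Linearity gives $(y+z)^\dag = \psi^\dag$. Injectivity (Lemma~\ref{lemma:norestriction_dag_injective}) then gives $y + z = \psi$. By the first step, $y = c\psi$ and $z = c'\psi$ with $c, c' \ge 0$. Hence $a = c\psi^\dag$ and $b = c'\psi^\dag$ lie in $\Realp\psi^\dag$. Finally, $\psi^\dag \neq \zero$ because $\psi^\dag(\psi) = \braket{\psi,\psi} = \tr_A \psi^2 = 1$.

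I expect no serious obstacle. The only point needing care is that the surjectivity part of Corollary~\ref{cor:norestriction_dag_isomorphism}, which rests on the no-restriction Theorem~\ref{thm:norestriction}, is what lets us pull back an arbitrary decomposition in $\Eff_+(A)$ to one in $\EJA_A^+$. Without no-restriction, $a$ and $b$ might not be of the form $y^\dag$ with $y$ positive.
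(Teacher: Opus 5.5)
Your proposal is correct and follows the paper's own proof. The paper also observes that $\Realp\psi$ is an extreme ray of $\EJA_A^+$ (since $\psi$ is a primitive idempotent) and transports it through the linear isomorphism $\dag$ of Corollary~\ref{cor:norestriction_dag_isomorphism}. You simply spell out the pullback of a decomposition, the direct purity argument, and the check that $\psi^\dag \neq \zero$.
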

\begin{proof}
 The ray $\Realp \psi$ is an extreme ray of $\EJA_A^+$.
 By Corollary~\ref{cor:norestriction_dag_isomorphism}, $\dag$ is a linear isomorphism from $\EJA_A^+$ onto $\Eff_+(A)$.
 Since linear isomorphisms map extreme rays to extreme rays, $\Realp \psi^\dag$ is an extreme ray of $\Eff_+(A)$.
\end{proof}

\begin{lemma} \label{lemma:norestriction_dag}
 For every normalized pure state $\phi$ of system $A$ and every normalized pure state $\varphi$ of system $B$,
 one has $(\phi \ot \varphi)^\dag = \phi^\dag \ot \varphi^\dag$.
\end{lemma}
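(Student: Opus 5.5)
The plan is to compare the two effects through the Euclidean Jordan algebra $\EJA_{AB}$, using the perfect distinguishability that no-restriction gives to product Jordan frames.

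Put $\Psi \coloneqq \phi \ot \varphi$. By Lemma~\ref{lemma:basic_pure_internal}, $\Psi$ is a normalized pure state of $AB$, hence a primitive idempotent of $\EJA_{AB}$.

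\emph{Step 1: represent the product effect.} Since $\phi^\dag$ and $\varphi^\dag$ are effects, $\phi^\dag \ot \varphi^\dag \in \Eff_+(AB)$. Moreover $\phi^\dag \ot \varphi^\dag \le e_A \ot e_B = e_{AB}$. By Corollary~\ref{cor:norestriction_dag_isomorphism} applied to $AB$, there is $z \in \EJA_{AB}^+$ with $z^\dag = \phi^\dag \ot \varphi^\dag$ and $\zero \le z \le u_{AB}$. It then suffices to show $z = \Psi$, because $\dag$ is injective (Lemma~\ref{lemma:norestriction_dag_injective}).

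We have $\braket{z,\Psi} = \phi^\dag(\phi)\,\varphi^\dag(\varphi) = \tr_A \phi^2 \,\tr_B \varphi^2 = 1$, so $\braket{u_{AB}-z,\Psi} = 0$. Both $u_{AB}-z$ and $\Psi$ lie in $\EJA_{AB}^+$, so Eq.~\eqref{eq:EJA_pos_perp} gives $z \b \Psi = \Psi$. A Peirce decomposition of $z$ relative to the idempotent $\Psi$ then kills the $1/2$-component and yields $z = \Psi + w$ with $w \in \EJA_{AB}^+$ and $w \perp \Psi$. Here $w$ is the image of $z$ under the quadratic map of $u_{AB}-\Psi$, hence positive.

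\emph{Step 2: kill $w$ using product Jordan frames.} Extend $\phi$ to a Jordan frame $\{p_i\}_{i=1}^{r_A}$ of $\EJA_A$ with $p_1 = \phi$, and $\varphi$ to a Jordan frame $\{q_j\}_{j=1}^{r_B}$ of $\EJA_B$ with $q_1 = \varphi$. By Theorem~\ref{thm:norestriction}, $\{p_i^\dag\}$ and $\{q_j^\dag\}$ are measurements, and hence so is $\{p_i^\dag \ot q_j^\dag\}_{i,j}$. This measurement perfectly distinguishes the normalized pure product states $\{p_i \ot q_j\}_{i,j}$, since $p_k^\dag(p_i) = \delta_{k,i}$ and $q_l^\dag(q_j) = \delta_{l,j}$.

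Two evaluations follow. First, Lemma~\ref{lemma:norestriction_perp} gives $\braket{\Psi, p_i \ot q_j} = 0$ for $(i,j) \ne (1,1)$, while $\braket{\Psi,\Psi} = 1$. Second, directly, $\braket{z, p_i \ot q_j} = (\phi^\dag \ot \varphi^\dag)(p_i \ot q_j) = \delta_{i,1}\delta_{j,1}$. Subtracting, $\braket{w, p_i \ot q_j} = 0$ for all $i,j$. Summing over $i,j$ and using bilinearity of $\ot$ gives $\braket{w, u_A \ot u_B} = 0$.

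\emph{Step 3: conclude.} The states $u_A/r_A$ and $u_B/r_B$ are internal, so by Lemma~\ref{lemma:basic_pure_internal} the element $u_A \ot u_B$ is, up to a positive factor, an internal state. By Lemma~\ref{lemma:basic_internal} it is an interior point of the self-dual cone $\EJA_{AB}^+$. A positive $w$ orthogonal to an interior point of a self-dual cone must vanish: perturb the interior point by $-\varepsilon w$ and pair with $w$ to get $-\varepsilon\braket{w,w} \ge 0$. Hence $w = \zero$, so $z = \Psi$ and $(\phi \ot \varphi)^\dag = \phi^\dag \ot \varphi^\dag$.

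The main obstacle is Step 1's Peirce-decomposition claim that $z \b \Psi = \Psi$ forces $z - \Psi \in \EJA_{AB}^+$ with $z - \Psi \perp \Psi$. If that claim proves awkward, the fallback is to skip positivity of $w$. Instead, note that $z - \Psi$ annihilates every product $p_i \ot q_j$. Then move $\phi,\varphi$ by reversible channels, which are EJA automorphisms as shown in the proof of Theorem~\ref{thm:norestriction}, to cover a spanning set of product states, and use Lemma~\ref{lemma:basic_product_span}. This route, however, requires checking that these channels preserve the relevant pairings on $AB$.
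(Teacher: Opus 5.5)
Your proof is correct, but it takes a different route from the paper's.

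\textbf{The paper's route.} It never names the representing element first. It shows that $\phi^\dag \ot \varphi^\dag$ spans an extreme ray of $\Eff_+(AB)$ by slicing an arbitrary $\zero \le b \le \phi^\dag \ot \varphi^\dag$ with $\id_A \ot \sigma$ and $u_A \ot \id_B$. This slicing uses the extremality of $\phi^\dag$ and $\varphi^\dag$ (Lemma~\ref{lemma:norestriction_zdag}) and product spanning (Lemma~\ref{lemma:basic_product_span}). It then pulls back through the inverse of $\dag$ to get $c = \alpha\psi$ with $\psi$ pure. Finally it pins down $\alpha = 1$ and $\psi = \phi \ot \varphi$ from the bound $\braket{x,y} \le 1$ on normalized states.

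\textbf{Your route.} You work entirely inside $\EJA_{AB}$. Saturation $\braket{z,\Psi} = 1$ together with $z \le u_{AB}$ gives $z \b \Psi = \Psi$. The Peirce decomposition then forces $z = \Psi + w$ with $w = P_{u_{AB}-\Psi}(z) \ge \zero$. Testing against the product Jordan frame via Lemma~\ref{lemma:norestriction_perp}, and summing to the interior point $u_A \ot u_B$ (interior via Lemma~\ref{lemma:basic_pure_internal}), kills $w$ by self-duality.

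\textbf{Comparison.} Your argument is more direct: it avoids the arbitrary-$b$ extremality check and the final ``$\braket{p,q}=1 \Rightarrow p=q$'' step. Its Step~2 also anticipates the product-Jordan-frame computation of Lemma~\ref{lemma:norestriction_NAB} without circularity, since you never use $u_A \ot u_B = u_{AB}$. The cost is importing two facts the paper does not record: the Peirce decomposition, and the fact that quadratic representations preserve the cone. Both are standard in Ref.~\cite{Far-1994}. You also rely on internality of product states (Chiribella--D'Ariano--Perinotti's Theorem~2), whereas the paper's argument needs only product spanning. The fallback you sketch is unnecessary.
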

\begin{proof}
 Put $a \coloneqq \phi^\dag \ot \varphi^\dag$.
 Since $\phi \le u_A$ and $\varphi \le u_B$, one has $\phi^\dag \le e_A$ and $\varphi^\dag \le e_B$.
 Hence $\varphi^\dag$ and $\phi^\dag$ are effects, and so is $a$.
 We first prove that $a$ lies on an extreme ray of $\Eff_+(\AB)$.
 Let $b \in \Eff_+(\AB)$ satisfy $\zero \le b \le a$.
 It suffices to show that $b = \mu a$ for some $\mu \in \Realp$.
 By Lemma~\ref{lemma:basic_product_span}, the parallel compositions of normalized states of $A$ and normalized states
 of $B$ span $\St_\Real(\AB)$.
 It is therefore enough to show that $b(\rho \ot \sigma) = \mu a(\rho \ot \sigma)$ for arbitrary $\rho \in \StN(A)$
 and $\sigma \in \StN(B)$.
 Applying $\id_A \ot \sigma$ to each term in $\zero \le b \le a$, one obtains
 $\zero \le b \c (\id_A \ot \sigma) \le \varphi^\dag(\sigma) \phi^\dag$.
 Since, by Lemma~\ref{lemma:norestriction_zdag}, $\Realp \phi^\dag$ is an extreme ray of $\Eff_+(A)$,
 the element $b \c (\id_A \ot \sigma)$ also belongs to this ray.
 Thus there exists $\xi_\sigma \in \Realp$ such that
 \begin{alignat}{1}
  b \c (\id_A \ot \sigma) &= \xi_\sigma \phi^\dag
  \label{eq:norestriction_dag_bsigma}.
 \end{alignat}
 Similarly, applying $u_A \ot \id_B$ to each term in $\zero \le b \le a$ gives
 $\zero \le b \c (u_A \ot \id_B) \le \phi^\dag(u_A) \varphi^\dag = \varphi^\dag$, where the last equality uses
 $\phi^\dag(u_A) = \braket{\phi,u_A} = \tr_A \phi = 1$.
 By Lemma~\ref{lemma:norestriction_zdag}, $\Realp \varphi^\dag$ is an extreme ray of $\Eff_+(B)$,
 and hence $b \c (u_A \ot \id_B)$ belongs to this ray.
 Thus there exists $\mu \in \Realp$ such that
 \begin{alignat}{1}
  b \c (u_A \ot \id_B) &= \mu \varphi^\dag
  \label{eq:norestriction_dag_bu}.
 \end{alignat}
 It follows that $\xi_\sigma = b(u_A \ot \sigma) = \mu \varphi^\dag(\sigma)$.
 The first equality is obtained by applying Eq.~\eqref{eq:norestriction_dag_bsigma} to $u_A$ and using
 $\phi^\dag(u_A) = 1$; the second by applying Eq.~\eqref{eq:norestriction_dag_bu} to $\sigma$.
 Substituting this into Eq.~\eqref{eq:norestriction_dag_bsigma} and applying the resulting equation to $\rho$, we get
 $b(\rho \ot \sigma) = \mu \phi^\dag(\rho) \varphi^\dag(\sigma) = \mu a(\rho \ot \sigma)$.
 Hence $a$ lies on an extreme ray of $\Eff_+(\AB)$.

 By Corollary~\ref{cor:norestriction_dag_isomorphism}, there exists $c \in \EJA_\AB^+$ such that $c^\dag = a$,
 namely $a(x) = \braket{c,x}$ for all $x \in \EJA_\AB$.
 Since $a$ lies on an extreme ray of $\Eff_+(\AB)$, the corresponding element $c$ lies on an extreme ray
 of $\EJA_\AB^+$.
 This follows because the inverse of $\dag$ is a linear isomorphism from $\Eff_+(\AB)$ onto $\EJA_\AB^+$ and maps
 extreme rays to extreme rays.
 Hence $c = \alpha \psi$ for some $\alpha \in \Realp$ and some normalized pure state $\psi$.
 Moreover, $\alpha = \alpha \braket{\psi,\psi} = \braket{c,\psi} = a(\psi) \le 1$.
 On the other hand, $\alpha \braket{\psi,\phi \ot \varphi} = \braket{c,\phi \ot \varphi} = a(\phi \ot \varphi) = 1$.
 For normalized states $x$ and $y$, one has $\zero \le y \le u_\AB$, and hence
 $\braket{x,y} \le \braket{x,u_\AB} = 1$.
 Therefore $\alpha \le 1$ and $\braket{\psi,\phi \ot \varphi} \le 1$ imply $\alpha = 1$, namely $c = \psi$, and
 $\braket{\psi,\phi \ot \varphi} = 1$, namely $\braket{c,\phi \ot \varphi} = 1$.
 By Lemma~\ref{lemma:basic_pure_internal}, the parallel composition $\phi \ot \varphi$ of normalized pure states is
 itself a normalized pure state.
 We use the fact that if normalized pure states, equivalently primitive idempotents, $p$ and $q$ satisfy
  $\braket{p,q} = 1$,
 then $p = q$.
 Indeed, $p^2 = p$, $q^2 = q$, and $\tr p = \tr q = 1$ imply $\braket{p,p} = \tr p^2 = 1$ and
  $\braket{q,q} = \tr q^2 = 1$,
 whence $\braket{p-q,p-q} = 1 + 1 - 2 = 0$ and therefore $p = q$.
 Since $c$ is a normalized pure state and $\braket{c,\phi \ot \varphi} = 1$, we have $c = \phi \ot \varphi$.
 Therefore $(\phi \ot \varphi)^\dag = c^\dag = a = \phi^\dag \ot \varphi^\dag$.
\end{proof}

\begin{lemma} \label{lemma:norestriction_NAB}
 $n_\AB = n_A n_B$.
\end{lemma}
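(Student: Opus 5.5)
We prove the two inequalities $n_{AB} \ge n_A n_B$ and $n_{AB} \le n_A n_B$, working with ranks via Lemma~\ref{lemma:norestriction_NA}, so that $n_X = r_X$ for $X \in \{A, B, AB\}$.

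For the lower bound, take a Jordan frame $\{p_i\}_{i=1}^{n_A}$ of $\EJA_A$ and a Jordan frame $\{q_j\}_{j=1}^{n_B}$ of $\EJA_B$. By Lemma~\ref{lemma:norestriction_NA} and its proof, these are perfectly distinguishable families of normalized pure states, and they are distinguished by the measurements $\{p_i^\dag\}$ and $\{q_j^\dag\}$. The parallel composition $\{p_i^\dag \ot q_j^\dag\}_{i,j}$ is then a measurement on $AB$. It satisfies
\[
(p_i^\dag \ot q_j^\dag)(p_k \ot q_l) = \delta_{i,k}\delta_{j,l},
\]
so the $n_A n_B$ product states $p_k \ot q_l$ are perfectly distinguishable. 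This gives $n_{AB} \ge n_A n_B$. This step does not even need Lemma~\ref{lemma:norestriction_dag}.

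For the upper bound, I would show that the family $\{p_i \ot q_j\}$ is already a Jordan frame of $\EJA_{AB}$. Its elements are normalized pure states by Lemma~\ref{lemma:basic_pure_internal}, hence primitive idempotents. Lemma~\ref{lemma:norestriction_dag} gives
\[
\sum_{i,j} (p_i \ot q_j)^\dag = \sum_{i,j} p_i^\dag \ot q_j^\dag = \Bigl(\sum_i p_i^\dag\Bigr) \ot \Bigl(\sum_j q_j^\dag\Bigr) = e_A \ot e_B = e_{AB} = u_{AB}^\dag .
\]
Since $\dag$ is injective (Lemma~\ref{lemma:norestriction_dag_injective}), it follows that $\sum_{i,j} p_i \ot q_j = u_{AB}$.

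Mutual orthogonality also follows from Lemma~\ref{lemma:norestriction_dag}. For $(i,j) \ne (k,l)$,
\[
\braket{p_i \ot q_j, p_k \ot q_l} = (p_i^\dag \ot q_j^\dag)(p_k \ot q_l) = \delta_{i,k}\delta_{j,l} = 0,
\]
so these elements are orthogonal by Eq.~\eqref{eq:EJA_pos_perp}. A family of mutually orthogonal primitive idempotents summing to the unit is a Jordan frame. Since the rank is the size of any Jordan frame, $r_{AB} = n_A n_B$, and hence $n_{AB} = n_A n_B$. In fact this argument gives both inequalities at once, so the lower-bound paragraph is only a sanity check.

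The main obstacle is the identity $\sum_{i,j} p_i \ot q_j = u_{AB}$. Here we cannot simply compute $u_A \ot u_B$, because the unit of $\EJA_{AB}$ is characterized through the trace inner product, and we do not yet know how that inner product interacts with $\ot$. Lemma~\ref{lemma:norestriction_dag} is exactly what bridges this gap, together with $e_{AB} = e_A \ot e_B$ and the fact that $u_X^\dag = e_X$.
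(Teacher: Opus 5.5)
Your proposal is correct and follows essentially the same route as the paper. Both arguments form the $n_A n_B$ products of Jordan-frame elements, get primitivity from Lemma~\ref{lemma:basic_pure_internal}, use Lemma~\ref{lemma:norestriction_dag} with injectivity of $\dag$ to obtain $\sum_{i,j} p_i \ot q_j = u_{AB}$, and conclude $r_{AB} = n_A n_B = n_{AB}$. The differences are cosmetic: the paper starts from perfectly distinguishable pure families and converts them to Jordan frames via Lemma~\ref{lemma:norestriction_jordan_frame}, while your explicit orthogonality check and separate lower-bound argument are harmless but redundant, since the paper defines a Jordan frame as a family of primitive idempotents summing to the unit.
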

\begin{proof}
 Let $\{ \phi_i \}_{i=1}^{n_A}$ and $\{ \varphi_j \}_{j=1}^{n_B}$ be, respectively, families of $n_A$ and
 $n_B$ perfectly distinguishable normalized pure states of systems $A$ and $B$.
 By Lemma~\ref{lemma:norestriction_jordan_frame}, these are Jordan frames of $\EJA_A$ and $\EJA_B$, respectively.
 Put $p_{ij} \coloneqq \phi_i \ot \varphi_j$ for every $i$ and $j$.
 By Lemma~\ref{lemma:basic_pure_internal}, $p_{ij}$ is a primitive idempotent of $\EJA_\AB$.
 Moreover, $(\sum_{i,j} p_{ij})^\dag = \sum_{i,j} p_{ij}^\dag = \sum_{i,j}\phi_i^\dag \ot \varphi_j^\dag
 = e_A \ot e_B = e_\AB = u_\AB^\dag$,
 where the second equality uses Lemma~\ref{lemma:norestriction_dag}, and the third uses
 $\sum_i \phi_i^\dag = u_A^\dag = e_A$ and $\sum_j \varphi_j^\dag = u_B^\dag = e_B$.
 By Lemma~\ref{lemma:norestriction_dag_injective}, $\sum_{i,j} p_{ij} = u_\AB$.
 Thus $\{ p_{ij} \}_{i,j}$ is a family of primitive idempotents whose sum is $u_\AB$, and hence is
 a Jordan frame of $\EJA_\AB$.
 Therefore $r_\AB = n_A n_B$.
 By Lemma~\ref{lemma:norestriction_NA}, $n_\AB = r_\AB$, and hence $n_\AB = n_A n_B$.
\end{proof}

\section{Reconstruction of quantum theory} \label{sec:reconstruction}

\subsection{Identifying normalized states with density matrices}

For an OPT satisfying local equivalence and ES purification, we have shown that $\St_+(A)$ is a symmetric cone
(Theorem~\ref{thm:self_dual}) and that the no-restriction hypothesis holds (Theorem~\ref{thm:norestriction}).
Using these facts together with ES purification, one can prove rather directly that $\EJA_A$ is a simple EJA.
We first record the following lemma.

\begin{lemma} \label{lemma:simplicity_sector}
 Suppose that the EJA $\EJA_A$ decomposes as a direct sum $\EJA_A = \EJA_1 \oplus \EJA_2$, where neither $\EJA_1$
 nor $\EJA_2$ is $\{ \zero \}$.
 For every system $B$ and every $k \in \{1,2\}$, put
 \begin{alignat}{1}
  K_k &\coloneqq \Span \{ x \ot y \mid x \in \EJA_k, ~y \in \EJA_B \}.
 \end{alignat}
 Then $K_k$ is an EJA, and the direct sum $\EJA_\AB = K_1 \oplus K_2$ holds as EJAs.
\end{lemma}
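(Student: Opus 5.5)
The plan is to work entirely with units and orthogonality, using Eq.~\eqref{eq:EJA_perp} and the fact that pure product states are primitive idempotents. Let $u_1,u_2$ be the units of $\EJA_1,\EJA_2$. Then $u_A = u_1 + u_2$ and $u_1 \perp u_2$. Choose Jordan frames $\{p_i\}_{i\in S_1}$ of $\EJA_1$ and $\{p_i\}_{i\in S_2}$ of $\EJA_2$; their union is a Jordan frame of $\EJA_A$, whose elements are normalized pure states of $A$. Also fix a Jordan frame $\{q_j\}$ of $\EJA_B$. By the proof of Lemma~\ref{lemma:norestriction_NAB}, $\{p_i \ot q_j\}_{i,j}$ is a Jordan frame of $\EJA_\AB$; in particular $u_\AB = \sum_{i,j} p_i \ot q_j = u_1 \ot u_B + u_2 \ot u_B$.

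\emph{Step 1: the elements $c_k \coloneqq u_k \ot u_B = \sum_{i\in S_k,j} p_i \ot q_j$ are orthogonal idempotents summing to $u_\AB$.} Each $c_k$ is a sum of elements of a single Jordan frame, and distinct frame elements are orthogonal, so $c_k$ is an idempotent and $c_1 \b c_2 = \zero$. Define $\W_k \coloneqq \{ z \in \EJA_\AB \mid c_k \b z = z\}$. By the remark on idempotent subalgebras, $\W_k$ is a subalgebra with unit $c_k$. Then Eq.~\eqref{eq:EJA_perp} gives $\W_1 \perp \W_2$. Since $c_1 + c_2 = u_\AB$ with $c_1 \perp c_2$, the Peirce decomposition yields $\EJA_\AB = \W_1 \oplus \W_2 \oplus \W_{12}$, where $\W_{12}$ is the Peirce-$1/2$ space. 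I would rather avoid invoking Peirce theory, though; the key point is to show $\W_k = K_k$.

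\emph{Step 2: $K_k \subseteq \W_k$.} Take $x \in \EJA_k^+$ and $y \in \EJA_B^+$. We have $\braket{c_{3-k}, x\ot y} = \braket{u_{3-k},x}\braket{u_B,y}$, using that $c^\dag$ for product $c$ factorizes. Lemma~\ref{lemma:norestriction_dag} extends by bilinearity from pure states, and spectral decompositions express $u_{3-k}$ and $u_B$ as sums of pure states. Since $\braket{u_{3-k},x}=0$ because $x \in \EJA_k$ and $\EJA_1\perp\EJA_2$, Eq.~\eqref{eq:EJA_pos_perp} gives $c_{3-k} \b (x\ot y) = \zero$. Then $c_k \b (x\ot y) = u_\AB \b (x \ot y) = x\ot y$. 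Because positive elements span, $K_k \subseteq \W_k$. (Note that $x\ot y \in \EJA_\AB^+$ because product states are states.)

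\emph{Step 3: dimension count, which is the main obstacle.} By local tomography, $\dim K_1 + \dim K_2 = (\dim\EJA_1 + \dim\EJA_2) d_B = d_A d_B = d_\AB$. Because $\W_1 \perp \W_2$, the subspaces $\W_1,\W_2$ are orthogonal in the trace inner product, since $\braket{z,w} = \tr(z\b w)=0$. Hence $\W_1 \cap \W_2 = \{\zero\}$. Therefore $K_1 \dotplus K_2 \subseteq \W_1 \dotplus \W_2 \subseteq \EJA_\AB$, and comparing dimensions forces equality throughout. So $K_k = \W_k$, each $K_k$ is a subalgebra (an EJA with the restricted inner product and unit $c_k$), and $\EJA_\AB = K_1 \oplus K_2$ as EJAs. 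The delicate point is Step 2's claim that $\braket{c, x\ot y}$ factorizes for general positive $x,y$. I would derive it from Lemma~\ref{lemma:norestriction_dag} via spectral decompositions into pure states and bilinearity of $\ot$.
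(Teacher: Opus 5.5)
Your proof is correct, and its skeleton matches the paper's. Both locate $K_k$ inside $\{z \mid c \b z = z\}$ for $c = u_k \ot u_B$: the paper's $u_{k,B}$ is the $\dag$-preimage of $(e_A \c \pi_k) \ot e_B$, and since $e_A \c \pi_k = u_k^\dag$, your factorization shows it is exactly your $c_k$. Both prove $K_k \subseteq \W_k$ by the same computation $\braket{c_{3-k}, x \ot y} = 0$ plus Eq.~\eqref{eq:EJA_pos_perp}, and both finish with local tomography.

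The difference is in how $c_k$ is produced and how equality is forced. The paper obtains $u_{k,B}$ from no-restriction, which makes $\{e_A \c \pi_1, e_A \c \pi_2\}$ a measurement and hence its product with $e_B$ an effect, together with Corollary~\ref{cor:norestriction_dag_isomorphism}. At that point it does not know $u_{k,B}$ is idempotent, so $F_k$ is not yet known to be a subalgebra. It therefore first shows $F_k = K_k$ by splitting $z = k_1 + k_2 \in K_1 \dotplus K_2$, and only afterwards deduces $u_{k,B} \b u_{k,B} = u_{k,B}$. You instead realize $c_k$ as a partial sum of the product Jordan frame $\{p_i \ot q_j\}$, using Lemmas~\ref{lemma:norestriction_dag} and \ref{lemma:norestriction_NAB}. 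So idempotency, $c_1 \perp c_2$, the subalgebra property of $\W_k$, and $\W_1 \cap \W_2 = \{\zero\}$ are available from the start, and the conclusion is pure dimension counting. The price is heavier machinery (product Jordan frames and the factorization lemma), whereas the paper needs only the two-outcome measurement and the $\dag$-isomorphism.

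Two points to tighten:
\begin{itemize}
\item State why primitive idempotents of $\EJA_k$ remain primitive in $\EJA_A$, since that is what makes the $p_i$ normalized pure states of $A$. If $p = p' + p''$ in $\EJA_A$, then $\pi_{3-k}(p') + \pi_{3-k}(p'') = \zero$ with both terms positive, so both vanish and $p', p'' \in \EJA_k$.
\item No spectral decomposition is needed for the factorization. Because $c_{3-k} = \sum_{i \in S_{3-k}, j} p_i \ot q_j$ already, Lemma~\ref{lemma:norestriction_dag} applies termwise.
\end{itemize}
The Peirce-$1/2$ aside can simply be dropped.
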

\begin{proof}
 Let $\pi_k \colon \EJA_A \to \EJA_A$ be the projection onto $\EJA_k$.
 Since $\pi_k$ maps every $x \in \EJA_A^+$ into an element of $\EJA_k^+$, and hence into $\EJA_A^+$, the functional
 $e_k \coloneqq e_A \c \pi_k$ belongs to $\St_+(A)^*$.
 Moreover, for every $x \in \EJA_A$, $(e_1 + e_2)(x) = e_A((\pi_1 + \pi_2)(x)) = e_A(x)$, so $e_1 + e_2 = e_A$.
 By Theorem~\ref{thm:norestriction}, $\{ e_1, e_2 \}$ is a measurement on $A$.
 Put $e_{1,B} \coloneqq e_1 \ot e_B$ and $e_{2,B} \coloneqq e_2 \ot e_B$.
 Since $\{ e_1, e_2 \}$ and $\{ e_B \}$ are measurements, $\{ e_{1,B}, e_{2,B} \}$ is a measurement on $AB$.
 By Corollary~\ref{cor:norestriction_dag_isomorphism}, for every $k \in \{1,2\}$, there exists $u_{k,B} \in \EJA_\AB^+$
 such that $u_{k,B}^\dag = e_{k,B}$, namely $e_{k,B}(x) = \braket{u_{k,B},x}$ for all
 $x \in \EJA_\AB$.
 Since $e_{1,B} + e_{2,B} = e_\AB$, one has $u_{1,B} + u_{2,B} = u_\AB$.

 For $k \in \{1,2\}$, consider the subspace
 \begin{alignat}{1}
  F_k &\coloneqq \{ x \in \EJA_\AB \mid u_{k,B} \b x = x \}.
 \end{alignat}
 If $x \in \EJA_1^+$ and $y \in \EJA_B^+$, then $e_{2,B}(x \ot y) = e_2(x)e_B(y) = 0$, and hence
  $\braket{u_{2,B},x \ot y} = 0$.
 By Eq.~\eqref{eq:EJA_pos_perp}, $u_{2,B} \b (x \ot y) = \zero$.
 Therefore $u_{1,B} \b (x \ot y)  = (u_\AB - u_{2,B}) \b (x \ot y) = x \ot y$, and $x \ot y \in F_1$.
 Since $\EJA_1^+$ spans $\EJA_1$ and $\EJA_B^+$ spans $\EJA_B$, $K_1 \subseteq F_1$.
 Similarly, $K_2 \subseteq F_2$.

 By Lemma~\ref{lemma:basic_product_span},
 $\EJA_\AB = \Span \{ x \ot y \mid x \in \EJA_A,~ y \in \EJA_B \}$.
 Since $\EJA_A = \EJA_1 \oplus \EJA_2$, the direct sum of real vector spaces $\EJA_\AB = K_1 \dotplus K_2$ holds.
 Write an arbitrary $z \in \EJA_\AB$ uniquely as $z = k_1 + k_2$ with
 $k_1 \in K_1$ and $k_2 \in K_2$.
 Since $K_2 \subseteq F_2$, $u_{2,B} \b k_2 = k_2$, and hence
  $u_{1,B} \b k_2 = (u_\AB - u_{2,B}) \b k_2 = k_2 - k_2 = \zero$.
 Together with $K_1 \subseteq F_1$, this gives $u_{1,B} \b z = u_{1,B} \b (k_1 + k_2) = u_{1,B} \b k_1 = k_1$.
 Similarly, $u_{2,B} \b z = k_2$.
 In particular, if $z \in F_1$, then $z = u_{1,B} \b z = k_1 \in K_1$, and hence $F_1 \subseteq K_1$.
 Thus $F_1 = K_1$, and similarly $F_2 = K_2$.
 Taking $z = u_\AB$ gives $u_{1,B} = u_{1,B} \b u_\AB \in K_1 = F_1$, and hence $u_{1,B} \b u_{1,B} = u_{1,B}$.
 Similarly, $u_{2,B} \b u_{2,B} = u_{2,B}$.
 Thus $u_{1,B}$ and $u_{2,B}$ are idempotents, and $K_1$ and $K_2$ are subalgebras with units $u_{1,B}$ and $u_{2,B}$,
 respectively.
 Moreover, $u_{1,B} \b u_{2,B} = u_{1,B} \b (u_\AB - u_{1,B}) = u_{1,B} - u_{1,B} = \zero$, and hence $K_1 \perp K_2$
 by Eq.~\eqref{eq:EJA_perp}.
 Therefore $\EJA_\AB = K_1 \oplus K_2$ as EJAs.
\end{proof}

\begin{proposition} \label{pro:simplicity}
 In an OPT satisfying local equivalence and ES purification, $\EJA_A$ is simple for every system $A$.
\end{proposition}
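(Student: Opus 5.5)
The plan is to argue by contradiction using a purification of an internal state. Suppose $\EJA_A$ is not simple, so that $\EJA_A = \EJA_1 \oplus \EJA_2$ with both summands nonzero. Let $\pi_1,\pi_2$ be the projections and $e_k = e_A \c \pi_k$ as in the proof of Lemma~\ref{lemma:simplicity_sector}.

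First, fix an internal normalized state $\rho \in \StN(A)$, which exists by Lemmas~\ref{lemma:basic_internal_exists} and~\ref{lemma:basic_internal}. I claim that $e_k(\rho) > 0$ for both $k$. Indeed, $\pi_k(\rho) \in \EJA_k^+$. If $e_k(\rho) = \tr_A \pi_k(\rho)$ were zero, then $\pi_k(\rho) = \zero$, since the trace is strictly positive on nonzero positive elements. In that case $\rho$ would lie in the proper face $\EJA_{3-k}^+$ of $\EJA_A^+$. This contradicts the fact that $\rho$ is an interior point: the unit $u_k$ of $\EJA_k$ is a nonzero element of the cone, and $\rho - \varepsilon u_k$ has $k$-th component $-\varepsilon u_k \notin \EJA_k^+$, so it is not positive for any $\varepsilon > 0$.

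Next, let $\Psi \in \StN(A\tA)$ be a purification of $\rho$, given by ES purification. Apply Lemma~\ref{lemma:simplicity_sector} with $B = \tA$. This gives the EJA decomposition $\EJA_{A\tA} = K_1 \oplus K_2$, with units $u_{1,\tA}$ and $u_{2,\tA}$ satisfying $u_{k,\tA}^\dag = e_k \ot e_\tA$. Write $\Psi = \Psi_1 + \Psi_2$ with $\Psi_k \in K_k$. By the remark on projections of positive elements in a direct sum, $\Psi_k \in K_k^+ \subseteq \EJA_{A\tA}^+ = \St_+(A\tA)$.

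Since $\Psi$ is pure, $\Realp\Psi$ is an extreme ray of $\St_+(A\tA)$. Hence both $\Psi_1$ and $\Psi_2$ are nonnegative multiples of $\Psi$. Because $K_1 \cap K_2 = \{\zero\}$ and $\Psi \neq \zero$, one of them vanishes; say $\Psi_2 = \zero$, so that $\Psi \in K_1$. Then, using Eq.~\eqref{eq:EJA_pos_perp} or simply $u_{2,\tA} \perp K_1$,
\begin{alignat}{1}
 e_2(\rho) = (e_2 \ot e_\tA)(\Psi) = \braket{u_{2,\tA},\Psi} = \tr_{A\tA}(u_{2,\tA} \b \Psi) = 0 .
\end{alignat}
This contradicts the first step. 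The same argument applies symmetrically if instead $\Psi_1 = \zero$. Therefore $\EJA_A$ admits no nontrivial direct-sum decomposition, and it is simple. For the trivial case $A = I$, $\EJA_I = \Real$ is simple.

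The main point needing care is the step that a pure state of the composite cannot straddle the two sectors $K_1$ and $K_2$. This rests on two facts: that Lemma~\ref{lemma:simplicity_sector} really gives an orthogonal EJA direct sum of $\EJA_{A\tA}$, and that the components of a positive element under such a decomposition are positive. Both are already established above, so the remaining work is routine.
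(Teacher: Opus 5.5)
Your proof is correct and follows essentially the same route as the paper. You purify an internal state using $\tA$, invoke Lemma~\ref{lemma:simplicity_sector} with $B=\tA$ to split $\EJA_{A\tA}=K_1\oplus K_2$, use purity to force $\Psi$ into a single sector, and contradict the fact that an internal state has nonzero components in both $\EJA_1$ and $\EJA_2$. The differences are cosmetic: you use an arbitrary internal $\rho$ instead of $\chi=u_A/r_A$, and you phrase the final contradiction as $e_2(\rho)=\braket{u_{2,\tA},\Psi}=0$ rather than as $\chi\in\EJA_2$.
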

\begin{proof}
 Suppose, for contradiction, that $\EJA_A$ is not simple.
 Then $\EJA_A = \EJA_1 \oplus \EJA_2$ for nonzero EJAs $\EJA_1$ and $\EJA_2$.
 Let $r_A$ be the rank of $A$, and put $\chi \coloneqq u_A/r_A$.
 We have $e_A(\chi) = \tr_A \chi = (\tr_A u_A) / r_A = 1$, so $\chi \in \StN(A)$.
 Since $\chi$ is an interior point of $\EJA_A^+$, it is internal by Lemma~\ref{lemma:basic_internal}.
 Write $\chi = \chi_1 + \chi_2$ with $\chi_1 \in \EJA_1^+$ and $\chi_2 \in \EJA_2^+$.
 Then $\chi_1 \ne \zero$ and $\chi_2 \ne \zero$.
 Indeed, for each $k \in \{1,2\}$ and for the unit $u_k$ of $\EJA_k$, there is $\lambda > 0$ such that
  $u_k \le \lambda \chi$.
 Applying the projection onto $\EJA_k$ gives $u_k \le \lambda \chi_k$, and therefore $\chi_k \ne \zero$.
 Hence $\chi$ belongs to neither $\EJA_1$ nor $\EJA_2$.
 Now take a purification $\Psi \in \StN(A\tA)$ of $\chi$.
 Applying Lemma~\ref{lemma:simplicity_sector} with $B = \tA$, we have a direct sum decomposition
  $\EJA_{A\tA} = K_1 \oplus K_2$.
 Hence $\Psi = \Psi_1 + \Psi_2$ with $\Psi_1 \in K_1^+$ and $\Psi_2 \in K_2^+$.
 Suppose first that $\Psi_1 \ne \zero$ and $\Psi_2 \ne \zero$, and put
 $p_k \coloneqq e_{A\tA}(\Psi_k) > 0$.
 Then $p_1 + p_2 = e_{A\tA}(\Psi) = 1$.
 The elements $\ol{\Psi}_k \coloneqq p_k^{-1} \Psi_k \in K_k^+$ are normalized states, and $\Psi$ is
 the nontrivial convex combination $\Psi = p_1 \ol{\Psi}_1 + p_2 \ol{\Psi}_2$ of two distinct normalized states.
 This contradicts the purity of $\Psi$.
 Hence $\Psi_1 = \zero$ or $\Psi_2 = \zero$.
 If $\Psi_1 = \zero$, then $\Psi = \Psi_2 \in K_2^+$; since $\id_A \ot e_\tA$ maps $K_2$ into $\EJA_2$,
 this gives $\chi = (\id_A \ot e_\tA)(\Psi) \in \EJA_2$, contradicting $\chi\notin\EJA_2$.
 The case $\Psi_2 = \zero$ is analogous.
 Therefore $\EJA_A$ is simple.
\end{proof}

\begin{thm} \label{thm:Complex}
 For every system $A$, $\StN(A) \cong \Den_{n_A}$.
\end{thm}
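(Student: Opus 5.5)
By Proposition~\ref{pro:simplicity}, $\EJA_A$ is simple, so by the classification of simple EJAs (Table~\ref{table:EJA}) it is EJA-isomorphic to one of $\Her_r(\Real)$, $\Her_r$, $\Her_r(\Quaternion)$, $\Spin_d$ ($d \ge 5$, $d \ne 6$), or $\Her_3(\Octonion)$. Here $r = r_A = n_A$ by Lemma~\ref{lemma:norestriction_NA}. The plan is to rule out every type except $\Her_{n_A}$ using two counting identities:
\begin{itemize}
 \item the dimension identity $d_{AB} = d_A d_B$ from local tomography, Eq.~\eqref{eq:dimension};
 \item the rank identity $n_{AB} = n_A n_B$ from Lemma~\ref{lemma:norestriction_NAB}.
\end{itemize}
Once $\EJA_A \congEJA \Her_{n_A}$, the EJA isomorphism maps $\EJA_A^+ = \St_+(A)$ onto $\Her_{n_A}^+$ and preserves the trace. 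Since $\tr_A = e_A$, it maps $\StN(A) = \{x \in \EJA_A^+ \mid \tr_A x = 1\}$ onto $\Den_{n_A}$, which is the claim.

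For the exclusion, write $d_A = f_{T_A}(n_A)$, where $f_T$ is the dimension function of type $T$ from Table~\ref{table:EJA}. Since every composite is itself a system, $\EJA_{AB}$ is also simple, so $d_{AB} = f_{T_{AB}}(n_A n_B)$. I will apply this to $B = A$ and to powers $A^k$, giving
\[
 f_{T_{A^k}}(n_A^k) = f_{T_A}(n_A)^k .
\]
The types $\Her_1(\Real) \congEJA \Her_1 \congEJA \Her_1(\Quaternion)$ coincide, so the case $n_A = 1$ gives $\StN(A) = \{e\}$, which is trivially $\Den_1$. Assume $n_A \ge 2$; such a system exists by assumption. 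The exceptional and spin-factor cases are handled by rank:
\begin{itemize}
 \item $\Spin_d$ and $\Her_3(\Octonion)$ have fixed rank $2$ or $3$;
 \item $A^k$ has rank $n_A^k \ge 4$ for $k$ large;
 \item hence $A^k$ must be real, complex, or quaternionic type.
\end{itemize}
Now compare growth rates. For rank $N = n^k$, the real, complex, and quaternionic types have dimensions $N(N+1)/2$, $N^2$, and $N(2N-1)$. For $A$ itself:
\begin{itemize}
 \item a real type gives $d_A^k = (n(n+1)/2)^k$, strictly between $N(N+1)/2$ and $N^2$ for $k \ge 2$;
 \item a quaternionic type gives $(n(2n-1))^k$, which exceeds $N(2N-1)$ for $k \ge 2$.
\end{itemize}
So neither matches any allowed type for $A^k$, and likewise for spin factors or $\Her_3(\Octonion)$ at large $k$.

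The step I expect to be the main obstacle is making this comparison airtight. I will do it cleanly through the ratio $d/N^2$, which lies in $(1/2, 1)$ for real, equals $1$ for complex, and lies in $(1, 2)$ for quaternionic types. If $A$ is non-complex, then $d_A/n_A^2 = c \ne 1$, so $d_{A^k}/n_A^{2k} = c^k$. This tends to $0$ or $\infty$ and eventually leaves the interval $(1/2, 2)$ allowed for any simple type of rank at least $4$. A contradiction follows, so $T_A$ is complex.

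The remaining care is checking that small-rank isomorphisms ($\Spin_3$, $\Spin_4$, $\Spin_6$) do not slip through. The table has already normalized these away, and the asymptotic argument is insensitive to them.
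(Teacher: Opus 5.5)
Your proof is correct. It uses the same ingredients as the paper's proof: simplicity of every $\EJA_X$ (Proposition~\ref{pro:simplicity}), the classification in Table~\ref{table:EJA}, $n_{AB}=n_An_B$ (Lemma~\ref{lemma:norestriction_NAB}), and $d_{AB}=d_Ad_B$ (Eq.~\eqref{eq:dimension}). Where you differ is in how the non-complex types are excluded.

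The paper works only with the single composite $AA$ and does an exact case-by-case check:
\begin{itemize}
 \item real type: $d=n_A^2(n_A+1)^2/4<r(r+1)/2$;
 \item quaternionic type: $d>r(2r-1)$;
 \item $\Spin_s$ and $\Her_3(\Octonion)$: $s^2$ and $27^2$ are compared with the finitely many dimensions available at ranks $4$ and $9$, namely $10,16,28$ and $45,81,153$.
\end{itemize}

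You replace these four computations with one uniform observation. Every non-complex simple type with $n_A\ge 2$ has $c=d_A/n_A^2\neq 1$, whereas every simple EJA of rank at least $4$ has $d/N^2\in(1/2,2)$. Passing to powers $A^k$ then forces $c^k$ out of that interval. This buys a single argument that needs no type-specific arithmetic. The cost is that you must invoke simplicity and both multiplicativity identities for arbitrary powers $A^k$. That is legitimate, since those results hold for every system. Your crude interval test may also need $k>2$: for $\Spin_5$, $c=5/4$, and $c^k$ first exceeds $2$ at $k=4$. The paper's exact matching, by contrast, already fails at $k=2$.

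One of your preliminary bullets is wrong. For real type, $(n(n+1)/2)^k$ does not lie between $N(N+1)/2$ and $N^2$. It lies strictly below $N(N+1)/2$, because $\bigl((n+1)/2\bigr)^k<(n^k+1)/2$ for $n\ge 2$ and $k\ge 2$. For $k=2$ this is exactly the paper's inequality, equivalent to $(n-1)^2>0$. The slip does not affect your conclusion, and the ratio argument you actually rely on is sound.
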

\begin{proof}
 Since $\Her_1(\Real) \congEJA \Her_1 \congEJA \Her_1(\Quaternion)$, the case $n_A = 1$ gives $\EJA_A \congEJA \Her_1$.
 We now assume $n_A \ge 2$ and consider the EJA $\EJA_{AA}$.
 By Proposition~\ref{pro:simplicity}, both $\EJA_A$ and $\EJA_{AA}$ are simple.
 Therefore the classification of simple EJAs in Table~\ref{table:EJA} applies to both.
 By Lemma~\ref{lemma:norestriction_NAB}, the rank of $\EJA_{AA}$ is $n_A^2 \eqqcolon r$,
 and by Eq.~\eqref{eq:dimension} its dimension is $(\dim \EJA_A)^2 \eqqcolon d$.
 Using these relations and Table~\ref{table:EJA}, one sees that $\EJA_A$ is not EJA-isomorphic to $\Her_3(\Octonion)$,
 $\Her_{n_A}(\Real)$, $\Her_{n_A}(\Quaternion)$, or $\Spin_s$.
 More explicitly:
 \begin{itemize}
  \item If $\EJA_A \congEJA \Her_3(\Octonion)$, then $n_A = 3$, so $r = 9$ and $d = 27^2$.
        The simple EJAs of rank $9$ are $\Her_9(\Real)$, $\Her_9$, and $\Her_9(\Quaternion)$,
        with dimensions $45$, $81$, and $153$,
        respectively. None of these is $27^2$.
  \item If $\EJA_A \congEJA \Her_{n_A}(\Real)$, then $r = n_A^2$ and $d = n_A^2(n_A+1)^2/4$. Since $r = n_A^2 \ge 4$,
        the simple EJAs of rank $r$ are only $\Her_r(\Real)$, $\Her_r$, and $\Her_r(\Quaternion)$. The smallest of their
        dimensions is that of $\Her_r(\Real)$, namely $r(r+1)/2$, but
        $d = n_A^2(n_A+1)^2/4 < n_A^2(n_A^2+1)/2 = r(r+1)/2$.
        Hence $d$ is not the dimension of any simple EJA of rank $r$.
  \item If $\EJA_A \congEJA \Her_{n_A}(\Quaternion)$, then $r = n_A^2$ and $d = n_A^2(2n_A-1)^2$.
        Since $r = n_A^2 \ge 4$,
        the simple EJAs of rank $r$ are again only $\Her_r(\Real)$, $\Her_r$, and $\Her_r(\Quaternion)$.
        The largest of their dimensions is that of $\Her_r(\Quaternion)$, namely $r(2r-1)$, but
        $d = n_A^2(2n_A-1)^2 > n_A^2(2n_A^2-1) = r(2r-1)$.
        Hence $d$ is not the dimension of any simple EJA of rank $r$.
  \item If $\EJA_A \congEJA \Spin_s$ with $s \ge 5$ and $s \neq 6$, then $n_A = 2$, so $r = 4$ and $d = s^2$.
        The simple EJAs of rank $4$ are $\Her_4(\Real)$, $\Her_4$, and $\Her_4(\Quaternion)$,
        with dimensions $10$, $16$, and $28$,
        respectively. Under $s \ge 5$ and $s \neq 6$, $s^2$ is none of these.
 \end{itemize}
 Consequently, there exists an EJA isomorphism $\Phi \colon \EJA_A \to \Her_{n_A}$.
 Since $\Her_{n_A}^+$ is the positive cone of $\Her_{n_A}$, one has $\Phi(\EJA_A^+) = \Her_{n_A}^+$, and since $\Phi$
 preserves the trace, $\Phi(\StN(A)) = \Den_{n_A}$.
 Hence $\StN(A) \cong \Den_{n_A}$.
\end{proof}
An approach similar to this proof can also be found, for example,
in Refs.~\cite{Ara-1980,Wet-2019,Nie-2020,Sel-Sca-Coe-2021}.

\subsection{Derivation of quantum theory}

We have now proved the no-restriction hypothesis by Theorem~\ref{thm:norestriction} and $\StN(A) \cong \Den_{n_A}$
by Theorem~\ref{thm:Complex}.
It remains to derive quantum theory from these facts, which is not difficult.
There are several possible routes for this derivation, for example
those in Refs.~\cite{Chi-Dar-Per-2010,Har-2011,Chi-Dar-Per-2011,Nak-2020}.
Here we describe the route that applies Theorem~19, ``states specify the theory'', of Ref.~\cite{Chi-Dar-Per-2010}.

The trace $\tr$ on $\Her_n$ is the usual matrix trace.
For every system $A$, a linear isomorphism $\Repr_A \colon \St_\Real(A) \to \Her_{n_A}$ satisfying
$\Repr_A(\StN(A)) = \Den_{n_A}$ will be called a \termdef{representation} of $A$.
Since $\Repr_A$ preserves the trace, $\tr \c \Repr_A = \tr_A$.
A linear isomorphism preserving the inner product on a complex Hilbert space will be called \termdef{unitary}.
For a unitary matrix $V$ of order $n$, write $\Ad_V$ for the map $\Her_n \ni X \mapsto V X V^\dag \in \Her_n$.
The map $\Ad_V$ is a linear automorphism of $\Den_n$, and therefore preserves the trace, namely $\tr \c \Ad_V = \tr$.
Fix an orthonormal basis for each $\Her_n$, and let $\T \colon \Her_n \ni X \mapsto X^\T \in \Her_n$
denote the corresponding transposition.
Since this transposition is a linear automorphism of $\Den_n$, if $\Repr_A$ is a representation of $A$,
then so is $\T \c \Repr_A$.
Put $\cH_n \coloneqq \Complex^n$, and denote by $\cP_n$ the set of all trace-one rank-one projections in $\Her_n$.

We define uniqueness of purification as follows.
\begin{define}[Uniqueness of purification] \label{def:purification_uniq}
 The theory is said to satisfy \termdef{uniqueness of purification} if, for every pair of systems $A$ and $B$,
 whenever normalized pure states $\Psi$ and $\Psi'$ of $\AB$ are both purifications of $\rho \in \StN(A)$,
 there exists a reversible channel $U$ on $B$ such that $\Psi' = (\id_A \ot U)(\Psi)$.
\end{define}

Then the following proposition holds.
\begin{proposition} \label{pro:quantum_from_uniq}
 If an OPT satisfying local equivalence and ES purification also satisfies uniqueness of purification,
 then it is quantum theory.
\end{proposition}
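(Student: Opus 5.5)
The plan is to follow the ``states specify the theory'' strategy of Theorem~19 of Ref.~\cite{Chi-Dar-Per-2010}. There are three steps: fix coherent representations of all systems, show that every transformation is CP and trace-nonincreasing in these representations, and show conversely that every such map, and every quantum test, is realized.

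\emph{Step 1: coherent representations.} For each system $A$, Theorem~\ref{thm:Complex} together with the EJA isomorphism gives a representation $\Repr_A$. For a composite $AB$, consider the linear isomorphism
\begin{alignat}{1}
 \Theta_{A,B} \coloneqq \Repr_{AB} \c (\Repr_A^{-1} \ot \Repr_B^{-1}) \colon \Her_{n_A} \ot \Her_{n_B} \to \Her_{n_A n_B}.
\end{alignat}
It is well defined by Lemma~\ref{lemma:basic_product_span} and $d_{AB} = d_A d_B$, with $n_{AB} = n_A n_B$ by Lemma~\ref{lemma:norestriction_NAB}. By Lemma~\ref{lemma:basic_pure_internal}, $\Theta_{A,B}$ maps products of rank-one projections to rank-one projections and preserves the trace. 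It also maps the full cone $\Her^+_{n_An_B}$, i.e.\ $\St_+(AB)$, onto itself.

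I will invoke (or reprove via Wigner-type arguments on $\cP_n$) the standard characterization of such maps: $\Theta_{A,B}$ is $\Ad_W$ composed with one of $\id\ot\id$, $\T\ot\id$, $\id\ot\T$, or $\T\ot\T$. The partial-transpose options are excluded because $\Theta_{A,B}$ must map the positive cone onto itself, while the partial transpose of a maximally entangled projection is not positive. Replacing $\Repr_A$ by $\T \c \Repr_A$ where necessary and absorbing $W$ into $\Repr_{AB}$, one obtains
\begin{alignat}{1}
 \Repr_{AB}(x \ot y) = \Repr_A(x) \ot \Repr_B(y).
\end{alignat}
The main obstacle is making these choices globally consistent. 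The transpose choice for $A$ must not depend on its partner, and the convention must be compatible with associativity $A(BC)=(AB)C$ and with the swap $AB \simeq BA$. I would handle this by fixing $\Repr$ on ``elementary'' systems and checking through triple composites $ABC$ that the relative orientation is transitive, so no partial transpose can appear on any factor. I would then define $\Repr$ on composites from the factors and verify that the result is still a representation, using the previous paragraph.

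\emph{Step 2: transformations are CP and trace-nonincreasing.} Take $\chi = u_A/r_A$, which is internal, and a purification $\Psi_\chi \in \StN(A\tA)$. In the representation, $\Psi_\chi$ is a maximally entangled pure state, since its marginal is maximally mixed. For $f \in \Trans(A,B)$, the element $(f \ot \id_\tA)(\Psi_\chi)$ lies in $\St(B\tA)$, so under Step~1 it is a positive semidefinite matrix with trace at most $1$. This is, up to the factor $n_A$, the Choi matrix of $\Repr_B\c f\c \Repr_A^{-1}$. Lemma~\ref{lemma:basic_trans_linear} and Lemma~\ref{lemma:basic_pure_eq} show that $f$ is determined by this matrix. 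Hence $f$ is CP, and it is trace-nonincreasing by Lemma~\ref{lemma:basic_deterministic} since $e_B \c f \le e_A$. Tests then have trace-preserving sums.

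\emph{Step 3: every CP trace-nonincreasing map and every quantum test is realized.} Let $\{\cE_i\}_{i=1}^k$ be a quantum test from $A$ to $B$, and choose a system $C$ with $k$ perfectly distinguishable states $\varphi_i$. Form the CPTP map $\cE = \sum_i \cE_i \ot \varphi_i$ from $A$ to $BC$. Its normalized Choi matrix is a density matrix on $BC\tA$, so by Theorem~\ref{thm:Complex} it is a state $\Omega$. Purify $\Omega$ to $\Phi \in \StN(BC\tA D)$; then $\Phi$ purifies the complementary state $\tchi$ on $\tA$, and so does $\Psi_\chi$. Lemma~\ref{lemma:basic_channel} then yields a channel $\Lambda$ from $A$ to $BCD$ with $\Phi = (\Lambda \ot \id_\tA)(\Psi_\chi)$. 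By Lemma~\ref{lemma:basic_pure_eq}, $(\id_{BC} \ot e_D)\c\Lambda$ equals $\cE$. Applying the measurement distinguishing the $\varphi_i$, which is available by Theorem~\ref{thm:norestriction}, gives a test whose elements are the $\cE_i$.

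Uniqueness of purification is what lets Lemma~\ref{lemma:basic_channel} and the composite identifications be used for arbitrary purifying systems such as $BCD$. Measurements then coincide with POVMs by Theorem~\ref{thm:norestriction}, which completes the identification with quantum theory.
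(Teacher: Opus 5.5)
There is a genuine gap, and it sits in your Step~1. Your route also differs from the paper's. The paper does not rebuild the Choi correspondence: it identifies each $\StN(A)$ with $\Den_{n_A}$, compares with the quantum theory of levels $n_A$ on the same labels, and invokes Theorem~19 of Ref.~\cite{Chi-Dar-Per-2010} as a black box. The uniqueness-of-purification hypothesis is there precisely to meet the CDP purification principle. Your Steps~2 and~3 are essentially the standard proof of that theorem and are sound once Step~1 is available.

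Your argument that the partial-transpose options are excluded by positivity is circular. What $\Theta_{A,B}$ maps onto $\Her^+_{n_An_B}$ is $(\Repr_A\ot\Repr_B)(\St_+(AB))$. Whether that set equals the Kronecker positive cone is exactly what has to be shown, so positivity yields no contradiction. Partial transposes genuinely occur: already in quantum theory, taking $\Repr_B=\T$ gives $\Theta_{A,B}=\id\ot\T$. They cannot be excluded, only absorbed by transposing one factor's representation, as in Lemma~\ref{lemma:product_state_tensor_representation}. Which factor must be transposed depends a priori on the partner.

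Consequently, the global consistency you defer is the heart of Step~1, not bookkeeping, and your sketch does not settle it. The framework has no distinguished elementary systems and no unique factorization. Moreover, $AB\simeq BA$ is witnessed only by some unspecified reversible channel rather than a swap. So it is unclear how associativity of $ABC$ constrains the orientation inside the separate system $AC$.

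Steps~2 and~3 depend on this consistency:
\begin{itemize}
\item Step~2 needs the same $\Repr_{\tA}$ to be tensor-compatible in both $A\tA$ and $B\tA$. With only pairwise choices, you get that either $\Repr_B\c f\c\Repr_A^{-1}$ or $\T\c\Repr_B\c f\c\Repr_A^{-1}$ is CP. That does not identify $\Trans(A,B)$ with the CP maps in one fixed set of representations.
\item Step~3 likewise needs a threefold product representation of $BC\tA$ compatible with that of $A\tA$.
\end{itemize}
What is missing is an argument that the relative orientations are mutually consistent. One possibility is to read them off from whether transformations are CP or CP after a transposition, and then use closure under sequential and parallel composition.

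Note also that, contrary to your last paragraph, your route never uses uniqueness of purification. Lemma~\ref{lemma:basic_channel} already holds for arbitrary purifying systems under ES purification alone. So everything the paper delegates to Theorem~19 rests, in your version, on this missing consistency argument.
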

\begin{proof}
 Let $\Theta$ be an OPT satisfying local equivalence and ES purification.
 By Theorem~\ref{thm:Complex}, for every system $A$, there exists a trace-preserving linear
 isomorphism $\Repr_A \colon \St_\Real(A) \to \Her_{n_A}$ such that $\Repr_A(\StN(A)) = \Den_{n_A}$.
 We use these isomorphisms to identify each $\StN(A)$ with $\Den_{n_A}$.
 Under this identification, a transformation $f \in \Trans(A,B)$ is represented by
 $\Repr_B \c f \c \Repr_A^{-1}$, and tests are represented by applying this replacement to each outcome.
 Thus the resulting isomorphic copy of $\Theta$ and the comparison quantum theory have the same
 set of normalized states for every system $A$, in the sense required by Theorem~19 of
 Ref.~\cite{Chi-Dar-Per-2010}.
 The comparison quantum theory is taken on the same system labels, with level $n_A$ assigned to
 system $A$; this is compatible with composition because Lemma~\ref{lemma:norestriction_NAB}
 gives $n_{AB} = n_A n_B$.
 Since $\Theta$ satisfies ES purification and uniqueness of purification, this isomorphic copy
 satisfies the purification principle of Ref.~\cite{Chi-Dar-Per-2010}: existence follows from
 ES purification, and uniqueness for arbitrary common purifying systems follows from
 the uniqueness-of-purification assumption.
 Quantum theory also satisfies the purification principle.
 Theorem~19 of Ref.~\cite{Chi-Dar-Per-2010} is stated in the framework of that paper.
 In the present terminology, this is the framework of OPTs satisfying local discriminability.
 The theorem says that, if two OPTs satisfying purification and local discriminability have
 the same normalized state space for every system,
 then the two theories coincide, including their transformations and tests.
 As noted above, local discriminability is equivalent to local equivalence, and the comparison
 quantum theory satisfies local discriminability.
 Therefore the theorem applies to the relabelled theory and the comparison quantum theory.
 Undoing the relabelling shows that the original theory $\Theta$ is quantum theory.
\end{proof}

By Proposition~\ref{pro:quantum_from_uniq}, it remains to prove uniqueness of purification for arbitrary
purifying systems.
The remaining lemmas prepare this final step of the proof.
For each fixed pair of systems $A, B$, this step chooses isomorphisms between the
state spaces and density-matrix state spaces so that product states are represented by
tensor products.
Lemma~\ref{lemma:product_pure_preserver} classifies the relevant bilinear maps:
those whose linearization is an isomorphism and which send pairs of pure states to pure
states of the composite.
Lemma~\ref{lemma:product_state_tensor_representation} applies this classification
to choose tensor-compatible representations for the fixed pair $A, B$.
Lemma~\ref{lemma:local_unitary_implementation} then shows that unitaries on $\cH_{n_B}$
are implemented by reversible channels of $B$.

\begin{lemma} \label{lemma:product_pure_preserver}
 Let $m$ and $n$ be integers with $m \ge 2$ and $n \ge 2$.
 Let $\beta \colon \Her_m \times \Her_n \to \Her_{mn}$ be a bilinear map whose linearization%
\footnote{The linearization of $\beta$ is the unique linear map
 $\beta_\mathrm{lin} \colon \Her_m \ot \Her_n \to \Her_{mn}$
 determined by $\beta_\mathrm{lin}(M \ot N) = \beta(M,N)$ $~(\forall M \in \Her_m, ~N \in \Her_n)$.}
 is a linear isomorphism, and suppose that $\beta(P,Q) \in \cP_{mn}$ for all $P \in \cP_m$ and $Q \in \cP_n$.
 Then there exist a unitary $W \colon \cH_m \ot \cH_n \to \cH_{mn}$ and
 elements $\epsilon$ and $\delta$ of $\{ \id, \T \}$ such that
 \begin{alignat}{1}
  \beta(M,N) &= W (M^\epsilon \ot N^\delta) W^\dag, \quad \forall M \in \Her_m, ~N \in \Her_n,
 \end{alignat}
 where $X^\id = X$.
\end{lemma}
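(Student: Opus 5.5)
The plan is to reduce to the classical description of linear maps that send pure states to pure states, applied one factor at a time, and then glue the two descriptions together.

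\emph{Step 1: one factor at a time.} Fix $Q \in \cP_n$ and consider the linear map $\beta_Q \colon \Her_m \ni M \mapsto \beta(M,Q) \in \Her_{mn}$. It is injective, because the linearization of $\beta$ is injective and $M \ot Q \neq \zero$ whenever $M \neq \zero$. It also maps $\cP_m$ into $\cP_{mn}$. I will use the standard structure theorem for linear maps $\Her_m \to \Her_N$ that send rank-one trace-one projections to rank-one trace-one projections (a Wigner-type linear preserver result). Such a map is either degenerate, i.e.\ $X \mapsto (\tr X)R$ for a fixed $R \in \cP_N$, or of the form $X \mapsto V X^{\epsilon} V^\dag$ with $V$ an isometry and $\epsilon \in \{\id,\T\}$. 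If this theorem cannot simply be cited, I would prove it as follows:
\begin{itemize}
\item Show that the map preserves $\tr(PP')$ on $\cP_m$.
\item Deduce this from the fact that a convex combination $tP + (1-t)P'$ has rank two, so its image is a rank-two matrix, which constrains the spectrum.
\item Then apply Wigner's theorem.
\end{itemize}
Since $m \ge 2$, injectivity excludes the degenerate case. Hence $\beta(M,Q) = V_Q M^{\epsilon_Q} V_Q^\dag$ with $V_Q \colon \cH_m \to \cH_{mn}$ an isometry. Symmetrically, for fixed $P \in \cP_m$ we get $\beta(P,N) = V'_P N^{\delta_P} V_P'^\dag$.

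\emph{Step 2: constant types.} The label $\epsilon_Q$ is well defined, because for $m \ge 2$ the maps $M \mapsto VMV^\dag$ and $M \mapsto V' M^\T V'^\dag$ never coincide: one preserves and the other reverses a fixed chirality of unitary conjugation. The map $Q \mapsto \beta_Q$ is continuous, and the sets of maps of the two types are closed and disjoint. Since $\cP_n$ is connected, $\epsilon_Q \equiv \epsilon$ is constant; likewise $\delta_P \equiv \delta$. Replacing $\beta$ by $(M,N) \mapsto \beta(M^\epsilon, N^\delta)$, which still satisfies the hypotheses, we may assume $\epsilon = \delta = \id$.

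\emph{Step 3: gluing the isometries.} For unit vectors $\psi \in \cH_m$ and $\varphi \in \cH_n$ with $P = \psi\psi^\dag$ and $Q = \varphi\varphi^\dag$, the rank-one projection $\beta(P,Q)$ equals $w w^\dag$ for a unit vector $w(\psi,\varphi)$ that is determined up to phase. Step 1 says that, for fixed $\varphi$, the assignment $\psi \mapsto V_{\varphi\varphi^\dag}\psi$ is a linear representative of $w(\cdot,\varphi)$; the same holds in the other variable. I want a single bilinear map $\widetilde W(\psi,\varphi)$ with $\widetilde W(\psi,\varphi)\widetilde W(\psi,\varphi)^\dag = \beta(\psi\psi^\dag,\varphi\varphi^\dag)$, and I would build it as follows:
\begin{enumerate}
\item Fix orthonormal bases $\{\psi_i\}$ and $\{\varphi_j\}$.
\item Use the linear representatives in each variable to choose phases for the vectors $w(\psi_i,\varphi_j)$ consistently.
\item Use bilinearity of $\beta$ on off-diagonal elements $\psi_i\psi_k^\dag + \psi_k\psi_i^\dag$, and the analogous elements in the second factor, to show that the phase cocycle is trivial.
\end{enumerate}
I expect this phase bookkeeping to be the main obstacle: one must show that the "linear in each variable up to a $\varphi$-dependent phase" data assemble into a jointly bilinear map, rather than merely a projective one.

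\emph{Step 4: conclusion.} Once $\widetilde W$ is bilinear, it defines a linear $W \colon \cH_m \ot \cH_n \to \cH_{mn}$ with $\beta(P,Q) = W(P \ot Q)W^\dag$ on pure states, and hence on all of $\Her_m \times \Her_n$ by bilinearity and spanning. The map $W$ is isometric on product vectors by Step 1, and orthogonal product vectors go to orthogonal vectors because $\tr(\beta(P,Q)\beta(P',Q')) = \tr(PP')\tr(QQ')$ when one argument is shared. So $W$ maps an orthonormal product basis to an orthonormal family, and a dimension count ($mn$ on both sides) makes $W$ unitary. Undoing the transposition reduction of Step 2 gives $\beta(M,N) = W(M^\epsilon \ot N^\delta)W^\dag$.
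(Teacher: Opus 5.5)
\paragraph{Verdict.} Your route differs from the paper's. It can be made to work, but the step carrying the real content, Step~3, is not actually proved.

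\paragraph{Comparison with the paper.} The paper never works with slices. It observes that $\beta_{\mathrm{lin}}$ is an injective real-linear map sending every product $A\ot B$ of rank-one Hermitian matrices to a rank-one Hermitian matrix. It then invokes the Xu--Zheng--Foster classification of such maps, which gives $\beta(M,N)=\lambda W(M^\epsilon\ot N^\delta)W^\dag$ with $W$ merely invertible and $\lambda=\pm1$. Positivity forces $\lambda=1$, and $\tr(W^\dag W(P^\epsilon\ot Q^\delta))=\tr\beta(P,Q)=1$ together with spanning forces $W^\dag W=I_{mn}$.

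You instead use the stronger pure-to-pure hypothesis to get an isometry on every slice via a Wigner-type theorem. This is more elementary, and Step~2 is fine. But it concentrates the whole difficulty in the gluing, and Step~3 only announces that the phase cocycle should be trivial. That is exactly what the paper imports from the cited theorem.

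\paragraph{How to close Step~3.} One argument that works:
\begin{enumerate}
\item Extend $\beta$ complex-bilinearly. Put $V_j\coloneqq V_{\varphi_j\varphi_j^\dag}$, and let $V'_\psi$ be the isometry of the slice at $\psi\psi^\dag$.
\item Comparing the two slice formulas at $(\psi\psi^\dag,\varphi_j\varphi_j^\dag)$ gives $V'_\psi\varphi_j=\omega_j(\psi)V_j\psi$ with $|\omega_j(\psi)|=1$. Hence
\[
 \beta(\psi\psi^\dag,\varphi_j\varphi_l^\dag)=\omega_j(\psi)\overline{\omega_l(\psi)}\,V_j\psi\psi^\dag V_l^\dag .
\]
\item The maps $M\mapsto\beta(M,\varphi_j\varphi_l^\dag)$ and $M\mapsto V_jMV_l^\dag$ are both linear, and the latter is injective. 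Evaluate both on the relation $\psi_1\psi_1^\dag+\psi_2\psi_2^\dag=\chi_+\chi_+^\dag+\chi_-\chi_-^\dag$, where $\chi_\pm=(\psi_1\pm e^{i\alpha}\psi_2)/\sqrt2$. This shows the phase takes the same value at $\psi_1$, $\psi_2$ and $\chi_\pm$.
\item Any two unit vectors are linked through such configurations, so the phase is independent of $\psi$. Thus $\beta(M,\varphi_j\varphi_l^\dag)=\gamma_{jl}V_jMV_l^\dag$ with constants satisfying $\gamma_{jj}=1$.
\item The family $\{V_j\psi\}_j$ is orthonormal, since up to phases it is $V'_\psi$ applied to $\{\varphi_j\}$. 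Take $\varphi=\sum_ja_j\varphi_j$ with all $a_j\neq0$. Purity of $\beta(\psi\psi^\dag,\varphi\varphi^\dag)$ then forces $(\gamma_{jl})$ to be a rank-one positive matrix with unit diagonal, i.e.\ $\gamma_{jl}=\theta_j\overline{\theta_l}$.
\item Then $W(\psi\ot\varphi_j)\coloneqq\theta_jV_j\psi$ is the required linear map.
\end{enumerate}
The point-independence of the phase in steps~3--4 is the missing idea. Without it, Step~3 yields only a projective object.

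\paragraph{Smaller issues.}
\begin{itemize}
\item \emph{Fallback proof of the slice theorem.} As stated it does not work. That $t\phi(P)+(1-t)\phi(P')$ has rank two is automatic and says nothing about $\tr(\phi(P)\phi(P'))$. What works: $\phi(P)+\phi(P^\perp)$ is the same rank-two matrix for every pure $P$ supported on $\mathrm{span}\{\psi,\psi'\}$. So $\phi$ maps that Bloch sphere injectively and affinely into the Bloch sphere of a single two-dimensional subspace, which forces an isometry of Bloch balls.
\item \emph{Step~4.} Shared-argument orthogonality does not cover pairs $\psi_i\ot\varphi_j$, $\psi_k\ot\varphi_l$ with $i\neq k$ and $j\neq l$. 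It is simpler to argue as the paper does: $\tr(W^\dag W(P\ot Q))=1$ for all $P\in\cP_m$ and $Q\in\cP_n$, together with spanning, gives $W^\dag W=I_{mn}$.
\end{itemize}
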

\begin{proof}
 Let $\beta_\mathrm{lin} \colon \Her_m \ot \Her_n \to \Her_{mn}$ be the linearization of $\beta$.
 Every rank-one Hermitian matrix $A \in \Her_m$ and $B \in \Her_n$ can be written as $A = aP$ and $B = bQ$
 with $a$ and $b$ in $\Real \setminus \{ 0 \}$ and $P \in \cP_m$, $Q \in \cP_n$.
 Hence, by assumption, $\beta_\mathrm{lin}(A \ot B) = ab \beta(P,Q)$ is a rank-one Hermitian matrix.
 Since the linearization $\beta_{\mathrm{lin}}$ of $\beta$ is a linear isomorphism, it is injective.
 We apply the main theorem of Ref.~\cite{Xu-Zhe-Fos-2015} to the injective real-linear map
 $\beta_\mathrm{lin}$.
 In the notation of that theorem, $l = 2$ and the output dimension is $N = mn$, so the zero block is absent.
 Thus the theorem gives an invertible complex $mn \times mn$ matrix $W$, a sign
 $\lambda \in \{ -1, 1 \}$, and elements $\epsilon,\delta \in \{ \id, \T \}$ such that
 \begin{alignat}{1}
  \beta(M,N) &= \lambda W (M^\epsilon \ot N^\delta) W^\dag, \quad \forall M \in \Her_m, ~N \in \Her_n.
 \end{alignat}
 For $P \in \cP_m$ and $Q \in \cP_n$, the matrix $W(P^\epsilon \ot Q^\delta)W^\dag$ is nonzero
 positive semidefinite of rank one.
 Since $\beta(P,Q) \in \cP_{mn}$ is positive, the sign must be $\lambda = 1$.
 Moreover, $\tr \beta(P,Q) = 1$ gives $\tr(W^\dag W(P^\epsilon \ot Q^\delta)) = 1$ for all $P \in \cP_m$ and
  $Q \in \cP_n$.
 Since $\cP_m$ and $\cP_n$ span $\Her_m$ and $\Her_n$, respectively, the elements $P^\epsilon \ot Q^\delta$ span
  $\Her_m \ot \Her_n$.
 Therefore $W^\dag W = I_{mn}$, and $W$ is unitary.
\end{proof}

\begin{lemma} \label{lemma:product_state_tensor_representation}
 Suppose that representations $\Repr_A$ of system $A$ and $\Repr_B$ of system $B$ are fixed.
 By replacing, if necessary, the representation of $B$ by $\T \c \Repr_B$, one can choose a representation of $\AB$
 as $\Repr_\AB \coloneqq \Repr_A \ot \Repr_B$.
 This is the unique linear map $\Repr_\AB \colon \St_\Real(\AB) \to \Her_{n_\AB}$ determined by
 \begin{alignat}{1}
  \Repr_\AB(\rho \ot \sigma) &= \Repr_A(\rho) \ot \Repr_B(\sigma), \quad \forall \rho \in \StN(A), ~\sigma \in \StN(B).
 \end{alignat}
\end{lemma}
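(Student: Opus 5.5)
The plan is to transport the product map into density-matrix language and apply Lemma~\ref{lemma:product_pure_preserver}. Fix $\Repr_A$, $\Repr_B$, and any representation $\Repr'_\AB$ of $\AB$; such a representation exists by Theorem~\ref{thm:Complex}, and Lemma~\ref{lemma:norestriction_NAB} gives $n_\AB = n_A n_B$. Define the bilinear map
\begin{alignat}{1}
 \beta(M,N) &\coloneqq \Repr'_\AB\bigl(\Repr_A^{-1}(M) \ot \Repr_B^{-1}(N)\bigr), \quad M \in \Her_{n_A}, ~N \in \Her_{n_B},
\end{alignat}
where $\ot$ is the bilinear extension of the product of states recorded after Lemma~\ref{lemma:basic_trans_linear}.

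First I verify the hypotheses of Lemma~\ref{lemma:product_pure_preserver}. The linearization of $\beta$ is an isomorphism. To see this, Lemma~\ref{lemma:basic_product_span} together with the dimension identity \eqref{eq:dimension} shows that $\St_\Real(A)\ot\St_\Real(B)\to\St_\Real(\AB)$ is a linear bijection, and composing with the isomorphisms $\Repr_A^{-1}\ot\Repr_B^{-1}$ and $\Repr'_\AB$ preserves bijectivity. Next, the elements of $\cP_{n}$ are exactly the extreme points of $\Den_n$, so they correspond under the representations to normalized pure states. By Lemma~\ref{lemma:basic_pure_internal}, a product of normalized pure states is a normalized pure state, and $\Repr'_\AB$ maps it into $\cP_{n_An_B}$. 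Thus $\beta(\cP_{n_A}\times\cP_{n_B})\subseteq\cP_{n_An_B}$.

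If $n_A,n_B\ge2$, Lemma~\ref{lemma:product_pure_preserver} yields a unitary $W$ and $\epsilon,\delta\in\{\id,\T\}$ with $\beta(M,N)=W(M^\epsilon\ot N^\delta)W^\dag$. I then set $\Repr_\AB\coloneqq (\T^{[\epsilon]}\ot\T^{[\epsilon]})\c\Ad_{W^\dag}\c\Repr'_\AB$, where $\T^{[\epsilon]}$ denotes the full transpose on $\Her_{n_An_B}$ if $\epsilon=\T$ and the identity otherwise. Here I use that the full transpose on $\Her_{mn}$ equals $\T\ot\T$ once the fixed orthonormal bases are chosen compatibly with the tensor product (the product basis). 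I will need to be careful about this convention, and I expect this bookkeeping to be the main obstacle. Both $\Ad_{W^\dag}$ and the transpose are linear automorphisms of $\Den_{n_An_B}$, so $\Repr_\AB$ is again a representation. It satisfies
\begin{alignat}{1}
 \Repr_\AB(\rho\ot\sigma) &= \Repr_A(\rho)\ot\Repr_B(\sigma)^{\delta'},
\end{alignat}
with $\delta'=\id$ if $\epsilon=\delta$ and $\delta'=\T$ otherwise. In the latter case I replace $\Repr_B$ by $\T\c\Repr_B$, which is again a representation, and this gives exactly $\Repr_\AB=\Repr_A\ot\Repr_B$ on product states.

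The degenerate cases $n_A=1$ or $n_B=1$ are handled directly. If $n_A=1$, then $\St_\Real(A)=\Real$ and the map $N\mapsto\beta(1,N)$ is a linear isomorphism of $\Her_{n_B}$ onto itself sending $\Den$ onto $\Den$. This is because product states with the unique normalized state of $A$ exhaust $\StN(\AB)$, since the bijection forces every state to be of this form. By Wigner/Kadison-type results, such a map is $\Ad_V$ or $\Ad_V\c\T$; it may also suffice to observe directly that $\Repr_\AB\coloneqq \Repr_B\c(\text{inverse of } \sigma\mapsto\rho_0\ot\sigma)$ is a representation. The case $n_B=1$ is symmetric. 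Finally, uniqueness of the linear map determined by its values on products follows from Lemma~\ref{lemma:basic_product_span}.
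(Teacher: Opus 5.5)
Your proposal is correct and follows essentially the same route as the paper. Both take an arbitrary representation of $\AB$, form the bilinear map $\beta$, apply Lemma~\ref{lemma:product_pure_preserver}, conjugate by $W^\dag$ with a global transpose when $\epsilon=\T$, and then replace $\Repr_B$ by $\T\c\Repr_B$ if needed. For $n_A=1$ the paper uses exactly your ``direct'' option $\Repr_\AB=\Repr_B\c(e_A\ot\id_B)$, so Kadison's theorem is unnecessary. The basis-compatibility worry you flag is harmless: $\T\ot\T$ is itself the transpose in the product orthonormal basis and hence a linear automorphism of $\Den_{n_An_B}$.
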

\begin{proof}
 We first treat the case $n_A = 1$.
 Put $\alpha \coloneqq \Repr_A^{-1}(I_1)$.
 Then $\St_\Real(A) = \{ c \alpha \mid c \in \Real \}$ and $\Repr_A(x) = e_A(x) I_1$ for all $x \in \St_\Real(A)$.
 Also, $n_\AB = n_B$, and $(e_A \ot \id_B) \c (\alpha \ot \id_B) = \id_B$.
 Since the two real vector spaces have the same dimension, this implies that
 $\alpha \ot \id_B$ is the inverse of $e_A \ot \id_B$.
 Defining $\Repr_\AB \coloneqq \Repr_B \c (e_A \ot \id_B)$, one obtains
 $\Repr_\AB(\rho \ot \sigma) = e_A(\rho)\Repr_B(\sigma) = \Repr_A(\rho) \ot \Repr_B(\sigma)$.
 The case $n_B = 1$ is analogous.
 We henceforth assume $n_A$ and $n_B$ are at least $2$.

 Choose an arbitrary representation of $\AB$ and denote it by $\widetilde{\Repr}_\AB$.
 Put $m \coloneqq n_A$ and $n \coloneqq n_B$.
 Consider the bilinear map
 \begin{alignat}{1}
  \beta \colon \Her_m \times \Her_n \ni (M,N)
  \mapsto \widetilde{\Repr}_\AB(\Repr_A^{-1}(M) \ot \Repr_B^{-1}(N)) \in \Her_{mn}.
 \end{alignat}
 Its linearization is surjective because product states span $\St_\Real(AB)$ by
 Lemma~\ref{lemma:basic_product_span} and $\widetilde{\Repr}_\AB$ is a linear isomorphism.
 The domain and codomain have the same dimension $n_A^2 n_B^2 = n_{AB}^2$ by
 Lemma~\ref{lemma:norestriction_NAB} and Theorem~\ref{thm:Complex}, so the linearization is a linear isomorphism.
 Also, if $P \in \cP_m$ and $Q \in \cP_n$, then $\Repr_A^{-1}(P)$ and $\Repr_B^{-1}(Q)$ are normalized pure states,
 and their parallel composition is again a normalized pure state.
 Hence $\beta(P,Q) \in \cP_{mn}$.
 By Lemma~\ref{lemma:product_pure_preserver}, there exist a unitary $W \colon \cH_m \ot \cH_n \to \cH_{mn}$
 and elements $\epsilon$ and $\delta$ of $\{\id,\T\}$ such that
 $\beta(M,N) = W(M^\epsilon \ot N^\delta)W^\dag$ for all $M \in \Her_m$ and $N \in \Her_n$.

 Put $\Repr_\AB^0 \coloneqq \Ad_{W^\dag} \c \widetilde{\Repr}_\AB$.
 Then, for all $\rho \in \St_+(A)$ and $\sigma \in \St_+(B)$,
 $\Repr_\AB^0(\rho \ot \sigma) = \Repr_A(\rho)^\epsilon \ot \Repr_B(\sigma)^\delta$.
 If $\epsilon = \id$, put $\Repr_\AB \coloneqq \Repr_\AB^0$; if $\epsilon = \T$, put
  $\Repr_\AB \coloneqq \T \c \Repr_\AB^0$,
 where the latter $\T$ denotes transposition on $\Her_{mn}$.
 Since transposition maps the set of density matrices onto itself, $\Repr_\AB$ is a representation of $\AB$.
 Then there exists $\eta \in \{ \id, \T \}$ such that
 $\Repr_\AB(\rho \ot \sigma) = \Repr_A(\rho) \ot \Repr_B(\sigma)^\eta$ for all $\rho \in \St_+(A)$ and
  $\sigma \in \St_+(B)$.
 If $\eta = \T$, replace $\Repr_B$ by $\T \c \Repr_B$, which is again a representation of $B$.
 With this choice, $\Repr_\AB(\rho \ot \sigma) = \Repr_A(\rho) \ot \Repr_B(\sigma)$ for all $\rho \in \St_+(A)$
 and $\sigma \in \St_+(B)$.
 Lemma~\ref{lemma:basic_product_span} then implies $\Repr_\AB = \Repr_A \ot \Repr_B$.
\end{proof}

\begin{lemma} \label{lemma:local_unitary_implementation}
 Let $\Repr_B$ be a representation of system $B$, and let $V$ be a unitary on $\cH_{n_B}$.
 Then $\tV \coloneqq \Repr_B^{-1} \c \Ad_V \c \Repr_B \in \Trans_\Real(B,B)$ is a reversible channel on $B$.
 Note that since $\tV$ is a linear map from $\St_\Real(B)$ to itself, it is identified with an element of $\Trans_\Real(B,B)$
 through the bijection of Lemma~\ref{lemma:basic_trans_linear}.
\end{lemma}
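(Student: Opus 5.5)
The plan is to realize $\tV$ as the reversible channel supplied by the uniqueness clause of ES purification. The key device is a maximally mixed state together with its purification on $B\tB$. The case $n_B = 1$ is trivial, since then $\Ad_V$ is the identity on $\Her_1$. So assume $n_B \ge 2$ and write $n \coloneqq n_B$.

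\emph{Step 1: a tensor-compatible representation of $B\tB$.} Since $\tB \simeq B$, we have $n_{\tB} = n$. Apply Lemma~\ref{lemma:product_state_tensor_representation} with $B$ in the role of the first system and $\tB$ in the role of the second. Only the representation of $\tB$ may be transposed, so the given $\Repr_B$ is kept unchanged. This yields a representation $\Repr_{B\tB} = \Repr_B \ot \Repr_\tB$ with $\Repr_{B\tB}(\StN(B\tB)) = \Den_{n^2}$, by Theorem~\ref{thm:Complex}.

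\emph{Step 2: the purification and the rotated state.}
\begin{itemize}
\item Let $\chi \coloneqq \Repr_B^{-1}(I_n/n)$. It is normalized and interior, hence internal by Lemma~\ref{lemma:basic_internal}.
\item Take a purification $\Psi \in \StN(B\tB)$ of $\chi$. Under $\Repr_{B\tB}$, $\Psi$ is a rank-one projection whose $B$-marginal is $I_n/n$.
\item By the Schmidt decomposition, the $\tB$-marginal is then also $I_n/n$. Hence the complementary state $\tchi \coloneqq (e_B \ot \id_\tB)(\Psi)$ is internal; this also follows from Lemma~\ref{lemma:basic_complementary_internal}.
\item Define $\Phi \coloneqq (\tV \ot \id_\tB)(\Psi)$, computed in $\St_\Real(B\tB)$. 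Under $\Repr_{B\tB}$ it equals $(\Ad_V \ot \id)(\Repr_{B\tB}(\Psi)) = \Ad_{V \ot I}(\Repr_{B\tB}(\Psi))$.
\end{itemize}
This is again a rank-one projection, i.e.\ an element of $\Den_{n^2}$ that is an extreme point. Since $\Repr_{B\tB}$ identifies $\StN(B\tB)$ with $\Den_{n^2}$ and extreme points with pure states, $\Phi$ is a normalized pure state of $B\tB$. Because $\tr \c \Ad_V = \tr$, we get $e_B \c \tV = e_B$, and so $(e_B \ot \id_\tB)(\Phi) = \tchi$.

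\emph{Step 3: uniqueness and identification.} Both $\Psi$ and $\Phi$ are purifications of $\tchi$ with purifying system $B$. Since $B \simeq \tB$, $B$ serves as a conjugate system of $\tB$, as noted after the ES purification postulate. The uniqueness clause then gives a reversible channel $U$ on $B$ with $\Phi = (U \ot \id_\tB)(\Psi)$, i.e.\ $(\tV \ot \id_\tB)(\Psi) = (U \ot \id_\tB)(\Psi)$. Here $\Psi$ purifies the internal state $\chi$ of $B$. Lemma~\ref{lemma:basic_pure_eq}, applied to $f = \tV$ and $g = U$ in $\Trans_\Real(B,B)$, yields $\tV = U$, so $\tV$ is a reversible channel.

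The main obstacle is Step 2: certifying that $\Phi$ is a genuine pure state of $B\tB$ rather than merely an element of $\St_\Real(B\tB)$. This is exactly why the tensor-compatible representation of Lemma~\ref{lemma:product_state_tensor_representation} is needed with $\Repr_B$ held fixed. Without it, $\tV \ot \id_\tB$ need not correspond to $\Ad_V \ot \id$ on $\Her_{n^2}$.
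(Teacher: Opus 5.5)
Your proof is correct and is essentially the paper's argument: purify a maximally mixed state, certify via the tensor-compatible representation of Lemma~\ref{lemma:product_state_tensor_representation} that the locally rotated purification is pure with the same complementary marginal, invoke the uniqueness clause of ES purification, and identify $U = \tV$ with Lemma~\ref{lemma:basic_pure_eq}. The difference is only bookkeeping. You put $B$ first so that any transposition falls on $\Repr_\tB$, which avoids the paper's passage to $\ol{V}$. You also let $\tV$ act on the purified system rather than on the purifying one, so the uniqueness clause is applied to the marginal $\tchi$ on $\tB$, with a swap of tensor factors suppressed as per the paper's convention; as a result, internality of $\tchi$ is not actually needed.
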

\begin{proof}
 Let $C$ be a conjugate system of $B$.
 Since a conjugate system may be replaced by an equivalent one, we may regard $B$ as a conjugate system of $C$.
 Applying Lemma~\ref{lemma:product_state_tensor_representation} to $C$ and $B$, we may replace the representation of
 $B$ by $\T \c \Repr_B$, if necessary, and then choose a representation $\Repr_{CB}$ of $CB$ as
 $\Repr_{CB} \coloneqq \Repr_C \ot \Repr_B$.
 After this replacement, we again denote the representation of $B$ by $\Repr_B$.
 If the replacement was made, the original map $\tV$ is written in the new representation as
 $\Repr_B^{-1} \c \T \c \Ad_V \c \T \c \Repr_B = \Repr_B^{-1} \c \Ad_{\ol{V}} \c \Repr_B$,
 where $\ol{V}$ is the complex conjugate of $V$.
 Since $\ol{V}$ is also unitary, after renaming $\ol{V}$ as $V$, we may assume
 $\tV = \Repr_B^{-1} \c \Ad_V \c \Repr_B$.

 Put $\chi \coloneqq \Repr_C^{-1}(I_{n_C} / n_C)$.
 Then $\chi$ is an internal normalized state of $C$.
 Let $\Psi \in \StN(CB)$ be a purification of $\chi$.
 Put $\Psi_V \coloneqq (\id_C \ot \tV)(\Psi) \in \St_\Real(CB)$.
 For all $\rho \in \St_+(C)$ and $\sigma \in \St_+(B)$,
 $\Repr_{CB}((\id_C \ot \tV)(\rho \ot \sigma)) = \Ad_{I_{n_C} \ot V}(\Repr_{CB}(\rho \ot \sigma))$.
 Hence, by Lemma~\ref{lemma:basic_product_span}, the same equality holds for every $\omega \in \St_\Real(CB)$.
 Thus $\id_C \ot \tV = \Repr_{CB}^{-1} \c \Ad_{I_{n_C} \ot V} \c \Repr_{CB}$.
 Since $\id_C \ot \tV$ is the composition of a linear isomorphism $\Repr_{CB}$ from $\StN(CB)$ onto $\Den_{n_{CB}}$,
 a linear automorphism $\Ad_{I_{n_C} \ot V}$ of $\Den_{n_{CB}}$, and the inverse isomorphism $\Repr_{CB}^{-1}$,
 it is a linear automorphism of $\StN(CB)$.
 A linear automorphism of $\StN(CB)$ maps normalized pure states to normalized pure states,
 and hence $\Psi_V$ is normalized and pure.

 For every $x \in \St_\Real(B)$,
 \begin{alignat}{1}
  e_B(\tV(x)) = \tr(\Repr_B(\tV(x))) = \tr(\Ad_V(\Repr_B(x))) = \tr(\Repr_B(x)) = e_B(x).
 \end{alignat}
 Hence $e_B \c \tV = e_B$.
 Therefore
 \begin{alignat}{1}
  (\id_C \ot e_B)(\Psi_V) = (\id_C \ot (e_B \c \tV))(\Psi) = (\id_C \ot e_B)(\Psi) = \chi,
 \end{alignat}
 and $\Psi_V$ is a purification of $\chi$, just like $\Psi$.
 Since $B$ is a conjugate system of $C$, ES purification gives a reversible channel $U$ on $B$
 such that $\Psi_V = (\id_C \ot U)(\Psi)$.
 By Lemma~\ref{lemma:basic_complementary_internal}, applied with $C$ and $B$ in place of $A$ and $\tA$, the state
 $\tchi \coloneqq (e_C \ot \id_B)(\Psi) \in \StN(B)$ is internal.
 Since $\Psi$ is a purification of $\tchi$, and since
 $(\id_C \ot U)(\Psi) = (\id_C \ot \tV)(\Psi)$, Lemma~\ref{lemma:basic_pure_eq}, applied with the two systems
 interchanged, implies $U = \tV$.
 Therefore $\tV$ is a reversible channel.
\end{proof}

\begin{thm} \label{thm:quantum}
 Every OPT satisfying local equivalence and ES purification is quantum theory.
\end{thm}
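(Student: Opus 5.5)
By Proposition~\ref{pro:quantum_from_uniq}, it suffices to prove uniqueness of purification (Definition~\ref{def:purification_uniq}) for arbitrary systems $A$ and $B$. So fix normalized pure states $\Psi,\Psi' \in \StN(AB)$, both purifying $\rho \in \StN(A)$. We need a reversible channel $U$ on $B$ with $\Psi' = (\id_A \ot U)(\Psi)$.

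First, fix representations $\Repr_A$ and $\Repr_B$. By Lemma~\ref{lemma:product_state_tensor_representation}, after possibly replacing $\Repr_B$ by $\T \c \Repr_B$, we may take $\Repr_\AB = \Repr_A \ot \Repr_B$. The cases $n_A = 1$ or $n_B = 1$ are handled there as well; they are trivial, since then pure states of $AB$ are products, or $B$ carries only one state. Under this identification, $\Repr_\AB(\Psi)$ and $\Repr_\AB(\Psi')$ are rank-one projections $\ket{\psi}\bra{\psi}$ and $\ket{\psi'}\bra{\psi'}$ on $\cH_{n_A} \ot \cH_{n_B}$. This holds because pure states correspond to elements of $\cP_{n_{AB}}$ under the isomorphism $\StN(AB) \cong \Den_{n_{AB}}$.

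Next, I will check that partial tracing is represented correctly. For product states, $\Repr_A((\id_A \ot e_B)(x \ot y)) = e_B(y)\Repr_A(x) = \tr(\Repr_B(y))\Repr_A(x)$, which is the matrix partial trace of $\Repr_A(x) \ot \Repr_B(y)$. By Lemma~\ref{lemma:basic_product_span} and linearity, $\Repr_A \c (\id_A \ot e_B) = \tr_{B} \c \Repr_\AB$. Hence $\ket{\psi}$ and $\ket{\psi'}$ have the same reduced density matrix $\Repr_A(\rho)$ on $\cH_{n_A}$.

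The standard Schmidt-decomposition argument then gives a unitary $V$ on $\cH_{n_B}$ with $\ket{\psi'} = (I_{n_A} \ot V)\ket{\psi}$. Both vectors have Schmidt vectors on the $A$ side given by eigenvectors of the common reduced matrix. The Schmidt vectors on the $B$ side form orthonormal families of equal size, and these can be extended to a unitary; this uses $n_B \ge$ the Schmidt rank, which is automatic. Consequently $\Repr_\AB(\Psi') = \Ad_{I \ot V}(\Repr_\AB(\Psi))$.

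By Lemma~\ref{lemma:local_unitary_implementation}, $\tV \coloneqq \Repr_B^{-1} \c \Ad_V \c \Repr_B$ is a reversible channel on $B$. Checking on product states and extending by Lemma~\ref{lemma:basic_product_span}, as in the proof of that lemma, gives $\id_A \ot \tV = \Repr_\AB^{-1} \c \Ad_{I \ot V} \c \Repr_\AB$. Therefore $\Psi' = (\id_A \ot \tV)(\Psi)$, which is uniqueness of purification.

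Proposition~\ref{pro:quantum_from_uniq} then concludes that the theory is quantum theory. The main subtlety I expect is the possible transposition swap $\Repr_B \mapsto \T \c \Repr_B$. Lemma~\ref{lemma:local_unitary_implementation} holds for any representation, and the replacement happens before $V$ is chosen, so this is harmless. Beyond that I also need to make sure the identification of partial trace with $\id_A \ot e_B$ is justified with the chosen tensor-compatible representation.
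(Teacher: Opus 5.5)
Your proposal is correct and follows essentially the same route as the paper. You reduce to uniqueness of purification via Proposition~\ref{pro:quantum_from_uniq}, fix tensor-compatible representations, identify the matrix partial trace with $\id_A \ot e_B$, use standard quantum uniqueness of purification to obtain $V$, and lift $\Ad_V$ to a reversible channel with Lemma~\ref{lemma:local_unitary_implementation}. Your explicit product-state check that $\id_A \ot \tV$ is represented by $\Ad_{I \ot V}$ only spells out a step the paper leaves implicit in its final chain of equalities.
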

\begin{proof}
 By Proposition~\ref{pro:quantum_from_uniq}, it remains only to prove uniqueness of purification.
 That is, for every system $B$, whenever $\Psi$ and $\Psi'$ are normalized states of $\AB$ and are both purifications of
 $\rho \in \StN(A)$, we must show that there exists a reversible channel $U$ on $B$ such that
 $\Psi' = (\id_A \ot U)(\Psi)$.
 For this pair $A, B$, choose representations $\Repr_A$, $\Repr_B$, and $\Repr_\AB$
 such that $\Repr_\AB = \Repr_A \ot \Repr_B$, as in Lemma~\ref{lemma:product_state_tensor_representation}.
 Put $\tr_2 \coloneqq \id_{\Her_{n_A}} \ot \tr$.
 Then
 \begin{alignat}{1}
  \tr_2(\Repr_\AB(x)) &= \Repr_A((\id_A \ot e_B)(x)), \quad \forall x \in \St_\Real(AB)
  \label{eq:quantum_partial_trace}.
 \end{alignat}
 Indeed, by Lemma~\ref{lemma:product_state_tensor_representation} and $e_B = \tr \c \Repr_B$,
 for every $\rho \in \StN(A)$ and $\sigma \in \StN(B)$, one has
 \begin{alignat}{1}
  \tr_2(\Repr_\AB(\rho \ot \sigma)) &= \tr(\Repr_B(\sigma))\Repr_A(\rho) = \Repr_A((\id_A \ot e_B)(\rho \ot \sigma)).
 \end{alignat}
 Lemma~\ref{lemma:basic_product_span} then gives Eq.~\eqref{eq:quantum_partial_trace}.
 Substituting $x = \Psi$ and $x = \Psi'$ into Eq.~\eqref{eq:quantum_partial_trace}, and using that both
 are purifications of $\rho$, we see that both $\tr_2(\Repr_\AB(\Psi))$ and $\tr_2(\Repr_\AB(\Psi'))$
 are equal to $\Repr_A(\rho)$.
 Since $\Psi$ and $\Psi'$ are pure, $\Repr_\AB(\Psi)$ and $\Repr_\AB(\Psi')$ are extreme points, namely rank-one
 elements, of $\Den_{n_\AB}$.
 Hence there exist normalized vectors $\ket{\psi}$ and $\ket{\psi'}$ in $\cH_{n_A} \ot \cH_{n_B}$ such that
 $\Repr_\AB(\Psi) = \ket{\psi}\bra{\psi}$ and $\Repr_\AB(\Psi') = \ket{\psi'}\bra{\psi'}$.
 These two rank-one density matrices have the same partial trace over $B$.
 By the standard uniqueness of finite-dimensional quantum purifications, there is a unitary $V$ on $\cH_{n_B}$
 such that $\ket{\psi'} = (I_{n_A} \ot V) \ket{\psi}$.
 Put $U \coloneqq \Repr_B^{-1} \c \Ad_V \c \Repr_B$.
 By Lemma~\ref{lemma:local_unitary_implementation}, $U$ is a reversible channel.
 Moreover,
 \begin{alignat}{1}
  \Repr_\AB(\Psi')
  &= \ket{\psi'}\bra{\psi'}
  = (I_{n_A} \ot V) \ket{\psi}\bra{\psi} (I_{n_A} \ot V)^\dag
  = \Ad_{I_{n_A} \ot V}(\Repr_\AB(\Psi))
  = \Repr_\AB((\id_A \ot U)(\Psi)).
 \end{alignat}
 Hence $\Psi' = (\id_A \ot U)(\Psi)$.
\end{proof}

\bibliographystyle{apsrev4-2}

\end{document}